\tikzstyle{vertex}=[circle, draw, inner sep=2pt, fill=white]
\newcommand{\Var}{{\text{Var}}}
\renewcommand{\d}{{\mathrm{d}}}
\newcommand{\e}{{\varepsilon}}
\newcommand{\E}{{\mathbb E}}
\newcommand{\Q}{{\mathbb Q}}
\newcommand{\R}{{\mathbb R}}
\renewcommand{\S}{{\mathbb S}}
\newcommand{\N}{{\mathbb N}}
\newcommand{\X}{{\mathbb X}}
\newcommand{\Y}{{\mathbb Y}}
\newcommand{\Z}{{\mathbb Z}}
\newcommand{\Fcal}{{\mathcal F}}
\newcommand{\Ical}{{\mathcal I}}
\newcommand{\Kcal}{{\mathcal K}}
\newcommand{\Lcal}{{\mathcal L}}
\newcommand{\Lvar}{{\mathscr L}}
\newcommand{\Rcal}{{\mathcal R}}
\newcommand{\Tcal}{{\mathcal T}}
\newcommand{\VIX}{{\textrm{VIX}}}
\newcommand{\SPX}{{\textrm{SPX}}}
\newcommand{\Scal}{{\mathcal S}}
\renewcommand{\a}{{\mathbf a}}
\renewcommand{\b}{{\mathbf b}}
\renewcommand{\u}{{\mathbf u}}
\newcommand{\vecsig}{{\textup{\bf vec}}}
\newcommand{\fdot}{{\,\cdot\,}}
\DeclareMathOperator{\argmin}{argmin}
\newtheorem{theorem}{Theorem}
\newtheorem{assumption}[theorem]{Assumption}
\newtheorem{corollary}[theorem]{Corollary}
\theoremstyle{definition}
\newtheorem{definition}[theorem]{Definition}
\newtheorem{remark}[theorem]{Remark}
\newtheorem{example}[theorem]{Example}
\newtheorem{lemma}[theorem]{Lemma}
\newtheorem{proposition}[theorem]{Proposition}
\numberwithin{equation}{section}
\numberwithin{theorem}{section}
\definecolor{brickred}{rgb}{0.8, 0.25, 0.33}
\definecolor{junglegreen}{rgb}{0.16, 0.67, 0.53}
\begin{document}

\title{Joint calibration to SPX and VIX options with signature-based models}
\author{Christa Cuchiero\thanks{Vienna University, Department of Statistics and Operations Research, Data Science Uni Vienna,
		Kolingasse 14-16 1, A-1090 Wien, Austria, christa.cuchiero@univie.ac.at}
	\and Guido Gazzani\thanks{University of Verona, Department of Economics,
		Via Cantarane 24, 37129 Verona, Italy, guido.gazzani@univr.it} \newline
	\and Janka M\"oller\thanks{Vienna University, Department of Statistics and Operations Research,
		Kolingasse 14-16 1, A-1090 Wien, Austria, janka.moeller@univie.ac.at} \newline
	\and Sara Svaluto-Ferro\thanks{University of Verona, Department of Economics,
		Via Cantarane 24, 37129 Verona, Italy, sara.svalutoferro@univr.it.\newline
		The first three authors gratefully acknowledge financial support through
		grant Y 1235 and grant I 3852 of the Austrian Science Fund. All authors acknowledge financial support through the OEAD WTZ project FR 02/2022. The present work was initiated when the second author was affiliated to the University of Vienna (ISOR)).
\newline
  { We also thank the anonymous referees and associated editor for their valuable comments and suggestions.}  
}
}

\maketitle
\begin{abstract}
We consider a stochastic volatility model where the dynamics of the volatility are described by a linear function of the (time extended) signature of a primary process which is supposed to be a polynomial diffusion.
We obtain closed form expressions for the VIX squared, exploiting the fact that the truncated signature of a polynomial diffusion is again a polynomial diffusion. Adding to such a primary process the Brownian motion driving the stock price, allows then to express both the log-price and the VIX squared as linear functions of the signature of the corresponding augmented process. This feature can then be efficiently used for pricing and calibration purposes.  Indeed, as the signature samples can be easily precomputed, the calibration task can be split into an offline sampling and a standard optimization. 
We also propose a Fourier pricing approach for both VIX and SPX options exploiting that the signature of the augmented primary process is an infinite dimensional affine process.
For both the SPX and VIX options we obtain highly accurate calibration results, showing that this model class allows to solve the joint calibration problem without adding jumps or rough volatility. 
\end{abstract}

\noindent\textbf{Keywords:}  signature methods, calibration of financial models, affine and polynomial processes, S\&P 500/VIX joint calibration\\
\noindent \textbf{MSC (2020) Classification:} 91B70, 62P05, 65C20.

\tableofcontents
\section{Introduction}
The joint calibration of option pricing models to SPX
 and VIX\footnote{With SPX and VIX we refer to the index tickers of the S\&P 500 and its volatility index, respectively. In the sequel we will use SPX and S\&P 500 interchangeably.} options is a problem that has gained a lot of attention in quantitative finance since several years.
One main reason for the increased interest is that the VIX index has become an important underlying for many derivatives.
In fact, futures and options written on it are extensively used to hedge the volatility exposure of option portfolios, see e.g., \cite{R:11}. 
We address the reader to the official website of CBOE\footnote{\href{https://www.cboe.com/tradable_products/vix/}{www.cboe.com/tradable$\_$products/vix/}} for details on the history of the VIX, how it is computed, traded  and used as underlying for derivatives, which emphasizes in particular the need to jointly calibrate to both, the prices of options on the  S\&P 500 index and to prices of VIX derivatives.   
In this respect the main challenge 
is to reconcile the large negative skew of SPX options' implied volatilities with relatively lower implied volatilities arising from the VIX options,  especially for short maturities (see e.g., \cite{G:20}.

Inspired by \cite{PSS:20} and \cite{CGS:23}, we consider here a  new type of stochastic volatility model for the discounted, dividend-adjusted price process $S=(S_{t})_{t\ge0}$. 
It is given by 
\begin{equation*}
    \d S_{t}(\ell) = S_{t}(\ell)\sigma_{t}^S(\ell) \d B_{t},
\end{equation*}
for an initial condition $S_0 \in \mathbb{R}_+$,  a standard Brownian motion $B$, and a volatility process $\sigma^S$ satisfying
\begin{equation}\label{eq:sigintro}
    \sigma_{t}^S(\ell) := \ell(\widehat{\X}_{t}),
\end{equation}
where $\ell$ is  a linear map 
of the signature $\widehat{\X}_{t}$   of a
process $\widehat{X}$.
Specifically, the main ingredient in this framework is
a
$d$-dimensional polynomial diffusion process $X=(X_{t}^{1},\dots,X_{t}^{d})_{t \geq 0}$ (see \cite{CKT:12,FL:16}), which we 
call here \emph{primary  process} and whose  augmentation with time $t$ is 
denoted by $(\widehat{X}_{t})_{t \geq 0}=(t,X_{t}^{1},\dots,X_{t}^{d})_{t \geq 0}$.
By modeling $\sigma^S$ via \eqref{eq:sigintro} we
assume that the signature of $\widehat{X}$, denoted by $\widehat{\mathbb{X}}$ (and rigorously introduced in Section~\ref{sec:sig}), serves as a linear regression basis for the volatility process, while the parameters of the linear map $\ell$ have to be learned from (option price) data.
Note that the parameters of $X$ are prespecified beforehand and can thus be seen -- in analogy to machine learning terminology -- as \emph{hyperparameters} (that of course can be optimized over some 
validation set). As outlined below this is one of the crucial features that allows for the split of the calibration task into precomputable samples and parameters $\ell$  to be optimized.
\\

Let us now highlight the implications of this modeling approach and the novelty of the present work.
\begin{itemize}
\item {\color{black} The current framework can be seen as \emph{universal} in a large class of
continuous (non-rough) stochastic volatility models in the following sense: for a stochastic volatility model, whose volatility is given by a continuous path-functional depending on a polynomial diffusion $(X_t)_{t\geq 0}$, it follows from Proposition~\ref {prop1} that this path-functional can be approximated by a linear function of the signature of $(\widehat{X}_t)_{t\geq 0}$. 
A concrete example for a volatility process of such a form is an Itô-diffusion with sufficiently regular coefficients, as in this case, the volatility is indeed given by a continuous path-functional of the time-augmented Brownian motion driving the diffusion.}

\item {\color{black} Additionally, our model }truly nests several classical models (see Remark~\ref{rem:classical}) and for instance also the
  `quintic Ornstein-Uhlenbeck volatility model',  recently proposed by \cite{AJIL:22b}, which -- with an additional input curve --  is shown to fit  SPX and VIX smiles well.

    \item By choosing the parameters of $\ell$
appropriately, the modeling framework incorporates both, 
Markovian (in $(S,X)$) and path-dependent models.

\item Up to our knowledge, it is  the first signature-based model that is employed for pricing and calibration of VIX options as well as joint calibration, together with SPX options.

\item We illustrate that the joint calibration problem can be solved in this framework
without jumps and rough volatility (compare also \cite{R:22, AJIL:22b,GLJ:22}).

\item By using time-varying parameters we can go beyond short maturities both for SPX and VIX options (as classically tackled in the literature) and achieve a joint calibration also for longer maturities.

\end{itemize}

In order to achieve the highly accurate calibration results, illustrated in Section~\ref{sec:calibration-vix-options} and Section~\ref{sec:joint-calib}, we exploit the following 
mathematical and numerical properties.

\begin{itemize}

\item Defining $Z:= (X, B)$, then not only $\sigma^S(\ell)$ but also the log-price $\log(S(\ell))$ can be expressed  as a linear function of the signature of $\widehat{Z}$.
 The computational benefit is immediate, since no (Euler) simulation scheme is needed to sample from the marginals of the price process.
In terms of the parameters $\ell$, $\log(S(\ell))$ is the sum of a quadratic function and a linear one, see Proposition~\ref{sig-model-spx}.

\item Since $\widehat X$ is additionally assumed to be a \emph{polynomial diffusion} (see \cite{CKT:12, FL:16}), the VIX under our model can be computed analytically via  matrix exponentials. Indeed, in this case the forward variance can be represented by a quadratic form in the parameters $\ell$ and the corresponding matrix can be computed by polynomial technology, i.e.~via matrix exponentials, see Theorem~\ref{th:VIXclosed}.  This tractability property is a consequence of  the fact that \emph{the truncated signature of a polynomial diffusion is again a polynomial diffusion} (see Section~\ref{sec:expected_signature}).

\item We can efficiently apply a Monte Carlo approach (potentially with variance reduction) for option pricing and calibration, since  the signature samples of $\widehat{Z}$  can be computed offline and therefore the simulation and optimization step can be completely separated. Indeed, due to the representations of VIX and $\log(S(\ell))$ described above, the same samples can be used for \emph{every} linear map $\ell$. Therefore, the calibration task can be split into an offline sampling and a standard optimization, as no simulation is needed during the latter.  Moreover, due to the fact that we can obtain a closed-form expression for the VIX (thanks to the polynomial technology) we can avoid a nested Monte Carlo procedure to evaluate the conditional expectation.
    
\item Alternatively, a Fourier pricing approach for both VIX and SPX options can be used. Indeed, by building on the fact that the signature of $\widehat{Z}$ is an affine process (with values in the extended tensor algebra) as proved in \cite{CST:22}, its Fourier-Laplace transform can be computed by solving an (extended tensor algebra valued) Riccati equation, which in turn can be used for Fourier pricing as outlined in Section~\ref{sec:affine-nat}. 
\end{itemize}
The remainder of the paper is organized as follows.  Section~\ref{sec:state-of-art} gives a review over the different contributions in the literature concerning the joint calibration problem. In Section~\ref{sec:sig} we 
introduce the signature in the context of continuous semimartingales, its main properties as well as notation used throughout the paper.
Section~\ref{sec:model} 
is dedicated to the introduction of our signature-based model  and the connections to classical and also recent stochastic volatility models in the literature. Section~\ref{sec:expected_signature} is then devoted to the discussion and proof of the matrix exponential formula for the (conditional) truncated expected signature of a polynomial diffusion. This result is at the core of Section~\ref{sec:vix}, where  we derive  a tractable formula for the VIX, needed for pricing  VIX options and VIX futures. 
Building on these formulas, our
calibration results to VIX options are presented in Section~\ref{numerical_result_vix}. 
In Section~\ref{spx_sigsde}, we then prove, similarly as for the VIX, a tractable expression for $S$. Additionally,  we exploit in Section~\ref{sec:affine-nat} the affine nature of the signature process (as proved in \cite{CST:22}),  to obtain a Fourier pricing approach within our modeling choice for both VIX and SPX options. 
We finally present the numerical results of the joint calibration problem in Section~\ref{sec:joint-calib}, both in the case of constant parameters and with time-varying parameters, where the latter are introduced in Section~\ref{TV_VIX} and Section~\ref{TV_SPX}.

The data used in Section~\ref{numerical_result_vix} and Section~\ref{numerical_result_joint} were purchased from OptionMetrics\footnote{\href{https://optionmetrics.com}{https://optionmetrics.com/}}. An implementation of the model for the joint calibration can be found in \href{https://github.com/GuidoGazzani-ai/jointcalib_sigsde}{GuidoGazzani-ai/jointcalib$\_$sigsde} or \href{https://github.com/janka-moeller/joint_calib_SPX_VIX}{janka-moeller/joint$\_$calib$\_$SPX$\_$VIX}.

\subsection{State of the art}\label{sec:state-of-art}

This section is primarily dedicated to a literature review on the joint calibration problem and secondly, to a brief overview on signature methods in finance.

First attempts to solve the joint calibration problem appear in \cite{G:08}, with a double constant elasticity of variance model (CEV), which despite being rather flexible cannot fit accurately the implied volatilities of SPX and VIX options jointly. Later on, the introduction of models with jumps in the SPX (or additionally also in the volatility) led to different contributions, for instance  
the forward variance model of \cite{CK:13} described as an exponential of an affine process with L\'evy jumps,
the regime-switching enhancement of the classical Heston model by \cite{PS:14}, the 3/2 model with jumps in the asset price of \cite{BB:14}, in the volatility (\cite{KS:15}), or with co-jumps and idiosyncratic jumps in the volatility (\cite{PPR:18}).

Continuous  stochastic volatility models based on Markovian semimartingales have also been employed to solve the joint calibration problem. For instance, in \cite{FS:18} a Heston model with stochastic vol-of-vol has been calibrated, however only for maturities above 4 months where VIX options are less liquid. More recently, \cite{R:22} considered a model where the volatility is driven by two Ornstein-Uhlenbeck (OU) processes using a non-standard transformation function. This choice of two OU-processes  has been an inspiration for our concrete numerical implementations. 
We also point out that the (non-rough) model introduced in \cite{AJIL:22,AJIL:22b}, where  the volatility is described by a polynomial of order five in one single OU-process, falls (apart from the additional input curve) into this class of continuous Markovian models and is  a particular instance of our framework. 
 Let us also refer to the paper by \cite{GM:22}, where a neural SDE model 
has been successfully jointly calibrated.
Within the class of continuous, however not necessarily Markovian models,  \cite{GLJ:22} conduct an empirical and statistical analysis as well as a joint calibration for a  family of   models where the volatility depends on the paths of the asset. These models can be turned into Markovian ones by using exponential kernels instead of general ones, see also \cite{GG:23} for their joint calibration.

Two further distinct lines of research are worth being mentioned as well: first, martingale optimal transport and second rough volatility.

The martingale optimal transport approach is used to calibrate discrete-time models as proposed in \cite{G:20a,G:21}. These models are closely related to Schr\"odinger bridge problems, where the idea is to calibrate  only the drift of the volatility while keeping the volatility of volatility
unchanged, see e.g. \cite{GLOW:20} as well as the references therein regarding an optimal transport approach. Although the calibration within that setting is accurate, it is also computationally rather expensive and not amenable to calibrate to several maturities jointly. These computational challenges have been tackled recently in \cite{GB:22}.

In the area of rough volatility modeling, initiated by the seminal paper of \cite{GJR:18}, the main idea is to replace the standard Brownian motion in the volatility process  by a fractional Brownian motion. Even though the roughness of the trajectories found in \cite{GJR:18}, can also be related to the estimation procedure as discussed e.g.~in \cite{CD:23}, the non-Markovianity given by the fractional Brownian motion with Hurst parameter $H<0.5$, is well-suited to reproduce certain stylized facts arising in financial data, e.g. volatility persistence or multiple scales of mean reversion; see \cite{BFG:16}. Several classical models have been enhanced with rougher noise, but for simplicity we here only mention those employed in the SPX/VIX calibration. One example is the quadratic rough Heston model introduced in \cite{GJR:20}, which was in turn calibrated in \cite{RZ:21} by relying on neural networks approaches, also exploited in e.g.~\cite{BHMST:19}.
In \cite{R:22} an exhaustive study of the flexibility of different rough and non-rough volatility models for the joint SPX/VIX calibration is carried out, including the rough Bergomi and the rough Heston model.  Some of these, for instance the rough Heston model, have an affine structure i.e., can be embedded in the class of affine Volterra processes.
In particular they allow for Fourier pricing after solving the related fractional Riccati equations. This underlying structure is the building block of an extension with jumps investigated in \cite{BLP:22,BPS:22}.
{  We refer additionally to \cite{DKMY:23,GG:23} for a very recent literature review on volatility modeling.}

Concerning our framework, signature-based methods  provide a generic non-parametric way to extract characteristic features (linearly) and path-dependency from data, which is essential in (machine) learning and calibration tasks in finance. This explains why these techniques become more and more popular in  mathematical finance, see e.g., \cite{BHLPW:20, KLP:20, PSS:20, LNP:20, NSSBWL:21, BHRS:21,  MH:21, CPS:22,CM:22,AGTZ:23, NCL:23,WMK:23,C:23,LBW:23} and the references therein.

\section{Signature: definition and properties}\label{sec:sig}

We start by introducing basic notions related to the definition of the  signature of an $\mathbb{R}^{d}$-valued continuous semimartingale. This is similar as in \cite{CGS:23} or \cite{BHRS:21}, but to keep the paper self-contained we recall the essential definitions and properties.

For each $n \in \mathbb{N}_0$ we define recursively the $n$-fold tensor product of $\mathbb{R}^{d}$,
\begin{equation*}
(\mathbb{R}^{d})^{\otimes 0}:=\mathbb{R}, \qquad (\mathbb{R}^{d})^{\otimes n}:=\underbrace{\mathbb{R}^{d}\otimes\cdots\otimes\mathbb{R}^{d}}_{n}. 
\end{equation*}
	For $d\in \N$, we define the extended tensor algebra on $\mathbb{R}^{d}$ as 
	\begin{equation*}
		T((\mathbb{R}^{d})):=\{\textbf{a}:=(a_{0},\dots,a_{n},\dots) : a_{n}\in(\mathbb{R}^{d})^{\otimes n}\}.
	\end{equation*}
	Similarly we introduce the truncated tensor algebra of order $n \in \mathbb{N}$ 
		\begin{equation*}
		T^{(n)}(\mathbb{R}^{d}):=\{\textbf{a}\in T((\mathbb{R}^{d})) : a_{m}=0, \forall m>n\},
	\end{equation*}
	and the tensor algebra
$
		T(\mathbb{R}^{d}):=\bigcup_{n\in \N}T^{(n)}(\mathbb{R}^{d}).
$
Note that $T^{(n)}(\R^{d})$ has dimension \begin{equation}\label{eqn10}
    d_{n}:=(d^{n+1}-1)/(d-1).
\end{equation}

	For each $\textbf{a},\textbf{b}\in T((\mathbb{R}^{d}))$ and  $\lambda\in\R$ we set 
	\begin{align*}
		\textbf{a}+\textbf{b}&:=(a_{0}+b_{0},\dots,a_{n}+b_{n},\dots),\\
		\lambda\cdot \textbf{a}&:= (\lambda a_{0},\dots, \lambda a_{n},\dots),\\
		\textbf{a}\otimes \textbf{b}&:=(c_{0},\dots, c_{n},\dots),
	\end{align*}
where $c_{n}:=\sum_{k=0}^{n}a_{k}\otimes b_{n-k}$. Observe that $(T((\mathbb{R}^{d})),+,\cdot,\otimes)$ is a real non-commutative algebra.

	 For a multi-index $I:=(i_1,\ldots,i_n)$  we set $|I|:=n$. We also consider the empty index $I:=\emptyset$ and set $|I|:=0$. If $n\geq 1$ or $n\geq 2$ we set $I':=(i_1,\ldots,i_{n-1})$, and $I'':=(i_1,\ldots,i_{n-2})$, respectively. We also use the notation
	$$\{I\colon|I|=n\}:=\{1,\ldots,d\}^n,$$
	omitting the parameter $d$ whenever this does not introduce ambiguity. Observe that multi-indices can be identified with words, as it is done for instance in \cite{LNP:20}. 

Next, for each $|I|\geq 1$ we set
	$$e_I:=e_{i_1}\otimes\cdots\otimes e_{i_n}.$$
	Observe that the set $\{e_I\colon |I|=n\}$ is an orthonormal basis of $(\mathbb{R}^{d})^{\otimes n}$.
	Denoting by $e_\emptyset$ the basis element corresponding to $(\R^d)^{\otimes 0}$, each element of $\textbf{a}\in T((\R^d))$ can thus be written as
	$$\textbf{a}=\sum_{|I|\geq 0}\a_I e_I,$$
	for some $\a_I\in \R$. Note that if $a_n\in(\R^d)^{\otimes n}$ we use non-bold notation whereas for the components $\a_I\in \R$ we write them bold.
	Finally, for each $\textbf{a}\in T(\R^d)$ and each $\textbf{b}\in T((\R^d))$ we set
		$$\langle \textbf{a},\textbf{b}\rangle:=\sum_{|I|\geq 0}\langle \a_{I},\b_{I}\rangle.$$
Observe in particular that $\b_I=\langle e_I,\textbf{b}\rangle$.

In the present work it will be useful to enumerate the elements of the truncated tensor algebra. To this extent we introduce the isomorphism $\vecsig: T^{(n)}(\R^d)\to \R^{d_{n}}$ and an injective labeling function $\mathscr{L}:\{I: |I|\le n\}\longrightarrow \{1,\dots, d_{n}\}$, such that
	\begin{equation}\label{eqn8}
	    \vecsig(\u):=\sum_{|I|\le n}e_{\mathscr{L}(I)}\u_{I},
	\end{equation}
 where $d_n$ is as in \eqref{eqn10}.
 
Throughout the paper we fix a filtered probability space $(\Omega, \Fcal, (\Fcal_t)_{t\geq0},\Q)$ on which we consider the stochastic processes to be defined. We are now ready to introduce the signature of an $\mathbb{R}^d$-valued continuous semimartingale.

\begin{definition}
	Let $X$ be a continuous $\mathbb{R}^{d}$-valued semimartingale with $d\ge1$. The \emph{signature of $X$} is the $T((\R^d))$-valued process  $(s,t)\mapsto \X_{s,t}$ whose components are recursively defined as
\begin{equation*}
	\langle e_{\emptyset},\mathbb{X}_{s,t}\rangle:=\textup{1}, \qquad \langle e_{I}, \mathbb{X}_{s,t}\rangle:=\int_{s}^{t}\langle e_{I'},\mathbb{X}_{s,r}\rangle\circ \mathrm{d} X_{r}^{i_{n}},
\end{equation*}
for each $I=(i_1,\ldots, i_n)$ , $I'=(i_1,\ldots, i_{n-1})$ and $0\leq s\leq t$, where $\circ$ denotes the Stratonovich integral. 
Its projection $\X^n$ on $T^{(n)}(\mathbb{R}^{d})$ is given by
\begin{equation*}
	\mathbb{X}_{s,t}^{n}=\sum_{|I|\leq n} \langle e_{I}, \mathbb{X}_{s,t}\rangle e_{I}
\end{equation*}
and is called \emph{signature of $X$ truncated at level $n$}. If $s=0$, we use the notation $\X_t$ and $\X_t^n$, respectively.
\end{definition}
Observe that the signature of $X$ and the signature of $X-c$ coincide for each $c\in \R$. Moreover, with an equivalent notation we can write
\begin{align*}
    \X_t&=\bigg(1,\int_0^t1\circ \d X_s^1,\ldots, \int_0^t1\circ \d X_s^d,\int_0^t\bigg(\int_0^s 1 \circ \d X_r^1\bigg)\circ \d X_s^1,\\
&\qquad\int_0^t\bigg(\int_0^s 1 \circ \d X_r^1\bigg)\circ \d X_s^2,
\ldots,
\int_0^t\bigg(\int_0^s 1 \circ \d X_r^d\bigg)\circ \d X_s^d,\ldots\bigg).
\end{align*}

A well-known and extremely useful property of the signature is that every polynomial function in the signature has a linear representation. For the precise statement we first need to introduce the following concept  (see also Definition 2.4 in \cite{LNP:20} or Section~2.2. in \cite{BHRS:21}). 
\begin{definition}\label{shuffle-product}
	For every two multi-indices $I$ and $J$ the \emph{shuffle product} is defined recursively as
	\begin{align*}
		e_{I}\shuffle e_{J}:= (e_{I'}\shuffle e_{J})\otimes e_{i_{|I|}}+(e_{I}\shuffle e_{J'})\otimes e_{j_{|J|}},
	\end{align*}
	with $e_{I}\shuffle e_{\emptyset}:= e_{\emptyset}\shuffle e_{I}= e_{I}$. It extends to $\textbf{a},\textbf{b}\in T(\R^d)$ as
	$$\textbf{a}\shuffle\textbf{b}=\sum_{|I|,|J|\geq0}\a_I\b_J (e_I\shuffle e_J).$$
\end{definition}

Observe that $(T(\R^d),+,\shuffle)$ is a commutative algebra, which in particular means that the shuffle product is associative and commutative. 

In the following proposition we summarize some useful properties of the signature. 
These results have been developed in the rough paths literature (see for instance \cite{R:58} or \cite{LCL:07} for the shuffle property, \cite{BL:2016} for the uniqueness of the signature, and \cite{C:57, C:77} for Chen's identity) and have then been refined in the context of semimartingales (see e.g., \cite{BHRS:21,CM:22}).
For a more detailed exposition and proofs we refer to \cite{CGS:23}.

\begin{proposition}\label{prop1}
Let $X$ and $Y$ be two continuous $\R^{d}$-valued semimartingales with $X_0=Y_0=0$. Then the following properties hold.
\paragraph {Shuffle property} For each two multi-indices $I,J$ and each $0\leq s\leq t$ it holds
	\begin{equation}\label{eqn5}
		\langle e_{I},\mathbb{X}_{s,t}\rangle \langle e_{J}, \mathbb{X}_{s,t}\rangle =\langle e_{I}\shuffle e_{J}, \mathbb{X}_{s,t}\rangle.
	\end{equation}
 	\paragraph{Uniqueness of the signature} Set $\widehat X_t:=(t,X_t)$, $\widehat Y_t:=(t,Y_t)$ and let $\widehat \X$ and $\widehat \Y$ be the corresponding signature processes. Then the signature $\widehat\X_T=\widehat\Y_T$  if and only if $X_t=Y_t$ for each $t\in[0,T]$. 

     \paragraph{Chen's identity} For each $0\leq s\leq u\leq t$ it holds
    \begin{equation}\label{eqn7}
        \X_{s,t}=\X_{s,u}\otimes\X_{u,t}.
    \end{equation}
     This can equivalently be written as
    \begin{equation}\label{eqcen}
        \langle e_I,\X_{s,t}\rangle=\sum_{e_{I_1}\otimes e_{I_2}=e_I}\langle e_{I_1},\X_{s,u}\rangle\langle e_{I_2},\X_{u,t}\rangle,
    \end{equation}
    for each multi-index $I$.

    \paragraph{Universal approximation theorem}  For each $n\in \N$ consider the sets
$$\Scal^{(n)}:=\{(\widehat \X_t^n)_{t \in [0,T]}(\omega)\colon \omega\in \Omega\}$$
 and let $S^{(n)}:\Scal^{(2)}\to \Scal^{(n)}$ denote the corresponding Lyons lift. Then it holds that $S^{(n)}((\widehat \X^2_t)_{t\in [0,T]})=(\widehat \X^n_t)_{t\in [0,T]}$ almost surely.
Consider then a generic distance $d_{\Scal^{(2)}}$ on the set of trajectories given by $\Scal^{(2)}$,
 with respect to which the map from $\Scal^{(2)}$ to $\R$ given by
$$\hat{\textbf x}^2\mapsto\langle e_I,S^{(|I|)}(\hat{\textbf x}^2)_t\rangle$$
is continuous for each multi-index $I$ and every $t \in [0,T]$. 
Let $K$ be a compact subset of $\Scal^{(2)}$ and  consider a continuous map $f: K\to \R$. Then for every $\varepsilon>0$ there exists some $\ell\in T(\R^{d})$ such that
	\begin{equation*}
	\sup_{(\widehat\X_t^2)_{t \in [0,T]}\in K}	\lvert f((\widehat \X^2_t)_{t \in [0,T]})-\langle \ell, \widehat \X_{T}\rangle\lvert <\varepsilon,
	\end{equation*}
	almost surely.
\end{proposition}

\section{The model}\label{sec:model}
We start by introducing the concept of polynomial diffusions   (see \cite{CKT:12,FL:16})  which will play a key role for the computation of the conditional expected signature. Here we denote by $\sqrt{\fdot}$ the matrix square root.

\begin{definition}\label{def1}
Suppose that an $\R^d$-valued process  $X=(X_{t})_{t\ge0}$  is a weak solution of
$$
    \d X_{t} = b(X_t)\d t+ \sqrt{ a(X_t)}\d W_{t},\qquad X_0=x_0
$$
for some $d$-dimensional Brownian motion $W$ and some maps $a:\R^d\to \S^d_+$ and $b:\R^d\to\R^d$ such that
$a_{ij}$ is a polynomial of degree at most 2 and $b_j$ is a polynomial of degree at most 1 for each $i,j\in\{1,\ldots,d\}$. Then we call $X$ \emph{polynomial diffusion}.
\end{definition}
We are now ready to introduce the model $(S_{t})_{t\geq0}$ for the discounted, dividend-adjusted dynamics of the S\&P 500 index already outlined in the introduction. Its dynamics  under a risk-neutral probability measure $\Q$ are given by 
\begin{equation}\label{model}
    \mathrm{d}S_{t}=S_{t}\sigma_{t}^{S}\mathrm{d}B_{t},
\end{equation}
where $S_{0}\in\R^{+}$, $\sigma^{S}=(\sigma_{t}^{S})_{t\geq0}$ is the volatility process to be specified and $B=(B_{t})_{t\geq0}$ is a one-dimensional Brownian motion, correlated with $\sigma^S$. We define additionally the instantaneous variance via $V_t:=(\sigma_t^S)^2$ for every $t\geq0$. Our modeling choice  is to parametrize the volatility process $\sigma^S$ as a linear function of the time-extended signature of a primary process $X$, namely 
\begin{equation}\label{sig-vol2}
	\sigma_{t}^{S}(\ell):=\ell_{\emptyset}+\sum_{0<\lvert I \lvert \le n } \ell_{I} \langle e_{I},\widehat{\mathbb{X}}_{t}\rangle,
\end{equation}
where
\begin{itemize}
 \item $(X_t)_{t\geq0}$ and thus also
 $\widehat X=(t,X_{t})_{t\ge0}$ is a polynomial diffusion (with values in $\mathbb{R}^d$ and  $\mathbb{R}^{d+1}$ respectively) in the sense of Definition~\ref{def1}.
    \item $\ell:=\{\ell_{I}\in\mathbb{R}: |I|\le n\}$ denotes the collection of parameters of the model, i.e., $\ell\in \mathbb{R}^{(d+1)_{n}}$.
\end{itemize}

We then denote by $\rho$ the correlation matrix process between the components of $X$, i.e.
    \begin{equation*}
        \rho_{ij}=\frac{[X^{i},X^{j}]}{\sqrt{[X^{i}]}\sqrt{[X^{j}]}}\in [-1,1],
    \end{equation*}
    for all $i,j=1,\dots,d$, where $\lbrack\fdot,\fdot\rbrack$ denotes the quadratic covariation.

In order to simplify the notation   we will drop the dependence on $\ell$ for the  processes $S=(S_{t})_{t\geq0}$ and $(\sigma_{t}^{S})_{t\geq0}$ as in \eqref{model}, whenever this does not cause any confusion.

\begin{remark}\label{rem:altmod1}
As an alternative definition for the volatility process $(\sigma_{t}^{S})_{t\geq0}$ one can set
$$\sigma_{t}^{S}(\ell):=\ell_{\emptyset}+\sum_{0<\lvert I \lvert \le n } \ell_{I} \langle e_{I},\widehat{\mathbb{X}}_{t-\e,t}\rangle,$$
for some fixed $\e>0$.  In this case the value of the volatility process $\sigma^S$ at time $t$ does not depend on the whole trajectory of the primary process $X$, but just on its evolution from $t-\e$ to $t$. 
For an economically reasonable choice for $\e$ the lags used in Section~3.4 of \cite{GLJ:22} can be adapted to the current setting. 
\end{remark}

 \begin{remark}[Interest rates and dividends] \label{rem:interests}
In the model given by~\eqref{model} we describe the discounted, dividend-adjusted prices and construct the VIX from them, in line with the definition of the CBOE for the computation of the VIX. However, contingent claims are often expressed in terms of undiscounted, unadjusted prices. If the dynamics of the discounted, dividend-adjusted price process are given by \eqref{model}, the undiscounted, unadjusted one is denoted by $\tilde{S}$ and fulfills
$$ \d \tilde{S}_t = (r-q)\tilde{S}_t \d t + \tilde{S}_t \sigma_t^S(\ell) \d B_t,$$
where here $r,q \in \R$ denote the interest rate and the dividend, respectively.
Therefore $\tilde{S}_t(\ell) = e^{(r-q)t}S_t (\ell)$ and the price of a call option on the S\&P 500 index under our model, reads
\begin{equation*}
     C(T,K) = \mathbb{E}\left[e^{-rT}(\tilde{S}_T(\ell)- K)^+\right]=\E[e^{-rT} (e^{(r-q)T}S_T(\ell)- K)^+]
\end{equation*}
where $T>0$ denotes the maturity time  and $K\in\R$ the undiscounted strike price. 
\end{remark}

It is worth mentioning that the pool of eligible primary processes is rather wide, including for example correlated Brownian motions, geometric Brownian motions, OU processes, Cox-Ingersoll-Ross (CIR) processes, Jacobi processes, and all continuous affine processes. 

The reason why we require the primary process  to be a polynomial diffusion  is due to the tractability properties of the truncated signature  $\widehat{\X}^{n}$ under this assumption.
We will indeed see in Section~\ref{sec:expected_signature} that in this case the (conditional) truncated expected signature of $\widehat{X}$  can be computed by solving a finite-dimensional ODE, i.e., can be written in terms of a  matrix exponential.

\begin{remark}\label{rem:classical}
We illustrate here that several classical and also recently considered stochastic volatility models are nested within our modeling choice \eqref{sig-vol2}.
    \begin{itemize}
        \item Suppose that $(X_t)_{t\geq0}$ is a $1$-dimensional OU process and let the order of the signature be $n=1$, with $\ell_{\emptyset}=\ell_{(0)}=0$  and $\ell_{(1)}\neq0$. Then the process $S=(S_{t})_{t\ge0}$ coincides with the Stein-Stein model, as introduced in \cite{SS:91}.
      \item Suppose that $(X_t)_{t\geq0}$ is a $1$-dimensional geometric Brownian motion without drift and let the order of the signature be $n=1$, with $\ell_{\emptyset}=\ell_{(0)}=0$ and $\ell_{(1)}\neq0$. Then the process $S=(S_{t})_{t\ge0}$ coincides with the SABR model, as introduced in initially in \cite{HKLW:02} with $\beta=1$.
    \item Suppose that $(X_t)_{t\geq0}$ is a $1$-dimensional OU process and let the order of the signature be $n=5$, with $\ell_{\emptyset},\ell_{(1)},\ell_{(1,1,1)},\ell_{(1,1,1,1,1)}$ non-zero and $\ell_{I}=0$ otherwise. Then the process $S=(S_{t})_{t\ge0}$ coincides with the model considered in \cite{AJIL:22, AJIL:22b} with an exponential kernel (a part from the deterministic input curve considered there additionally).  Going beyond the assumption of $\hat{X}$ being a polynomial diffusion we may allow for $(X_t)_{t\geq0}$ to be a one-dimensional fractional Brownian motion, thus leaving the semimartingale setting. And if we do not consider the time augmentation, we can also include fractional kernels and therefore the whole class of Gaussian polynomial volatility models introduced in \cite{AJIL:22} within our framework.
    \end{itemize}
\end{remark}

\begin{remark}
As indicated in the last point of the previous remark, our framework can be extended beyond the semimartingale case as long as the trajectories of the corresponding process can be enhanced to be almost surely a weakly geometric $p$-rough path. This holds for instance true for the case of time-augmented multidimensional fractional Brownian motion when $H\in (1/4,1)$, since for any $p\in (1/H,4)$ there exists an almost surely weakly geometric $p$-rough path, such that the projection on the first component coincides with the process' increments. For this result we refer to \cite{CQ:02}, Theorem 2. Observe that the case considered in \cite{AJIL:22} is simpler since it is a one dimensional setting, meaning that the corresponding signature boils down to Taylor polynomials of fractional Brownian motion.

Note however, while our framework can be extended beyond the semimartingale case as long as signatures can be defined, our methodology to compute conditional truncated expected signatures via finite dimensional matrix exponentials only works in the polynomial diffusion setting. The same applies to the linear representation of the log-price provided in Section~\ref{spx_sigsde}.
\end{remark}

\begin{remark}
    Let $X$ be a 1-dimensional OU-process, such that without loss of generality $X_0=0$, i.e.,
\begin{equation*}
    \mathrm{d}X_{t}=\kappa(\theta-X_t)\mathrm{d}t+\sigma\mathrm{d}W_t.
\end{equation*}
Then, for $n=2$ the instantaneous dynamics of the volatility process are given by
\begin{align*}
    \mathrm{d}\sigma_{t}=& \ell_0\mathrm{d}t+\ell_1 \mathrm{d}X_t+\ell_{00}t\mathrm{d}t+\ell_{01}t\mathrm{d}X_t+\ell_{10}X_t\mathrm{d}t+\ell_{11}\mathrm{d}X_{t}^2\\
    =&(\ell_0+\ell_{00}t+\ell_{1}\kappa(\theta-X_{t})+\ell_{01}t\kappa(\theta -X_t)+\ell_{10}X_t+2\ell_{11}X_{t}(\kappa(\theta-X_{t})+\sigma^2))\mathrm{d}t\\
    & +(\ell_{1}\sigma+\ell_{01}t\sigma+2\ell_{11}\sigma X_{t})\mathrm{d}W_t,
\end{align*}
which can be rewritten as
\begin{equation*}
    \mathrm{d}\sigma_t=(f_{1}(t)+cX_{t}(g(t)-X_{t}))\mathrm{d}t+(f_{2}(t)+\tilde{c}X_{t})\mathrm{d}W_t,
\end{equation*}
where  $f_{1},f_{2},g$ are affine functions of time and $c,\tilde{c}\in\mathbb{R}$,  all depending on the model parameters $\{\ell_{I}, |I| \le n
\}$.
The previous simple derivation implies:
\begin{itemize}
    \item If $n=1$ the instantaneous vol of vol is constant and given by $|\ell_{1}\sigma|$.
    \item If $n\ge2$ the instantaneous vol of vol is stochastic, depending explicitly on $X_t$.
    \item For $n=2$, the instantaneous volatility exhibits a stochastic  mean reversion rate given by the term $cX_{t}$, with a time-dependent long-run mean by the affine function $g(t)$. We will see in the subsequent sections that this type of model with a 3-dimensional OU-process is flexible enough to solve the joint calibration problem.
    \item  Notice that even for $n=2$, the choice $X=W$, i.e.~choosing just a Brownian motion (as for instance in \cite{PSS:20,CGS:23} for the price process), would lead to restrictive dynamics of the instantaneous volatility.
\end{itemize}
\end{remark}

\section{Expected signature of polynomial diffusion}\label{sec:expected_signature}

Let $(Y_t)_{t\geq0}$ be a polynomial diffusion in sense of Definition~\ref{def1} whose dynamics are given by
\begin{equation}\label{eqn2}
      \d Y_{t} = b(Y_t)\d t+ \sigma(Y_t)\d W_{t},\qquad Y_0=y_0,  
\end{equation}
where $\sigma(Y_t)$ denotes the matrix square root of $a(Y_t)$. Recall that in this case
 the components of $a:\R^d\to \S^d_+$ are polynomials of degree at most 2,
the components of $b:\R^d\to\R^d$ are polynomials of degree at most 1, and
 $W=(W_{t})_{t\ge0}$ is a $d$-dimensional Brownian motion. Denote then by $\Y$  the corresponding signature.

 We now explain how to employ the polynomial technology to compute the conditional truncated expected signature of $(Y_t)_{t\geq0}$. The corresponding code is available at \href{https://github.com/sarasvaluto/AffPolySig}{sarasvaluto/AffPolySig.} Several representations of related quantities in particular for the Brownian case can be found in the literature, see for instance \cite{FW:03}, \cite{LV:04}, \cite{LN:15}, \cite{BDZN:21}, \cite{RFC:22}. Our approach follows \cite{CST:22} and is based on the classical theory of polynomial processes (see \cite{CKT:12} and \cite{FL:16}). Even though results for the corresponding infinite dimensional stochastic processes (see for instance \cite{CS:21, CGDS:21}) are needed in the case of general signature SDEs considered in \cite{CST:22}, the polynomial property of $(Y_t)_{t\geq0}$ here permits to stay in  the finite dimensional setting.

\begin{lemma}\label{lem:lem2}
Let $(Y_t)_{t\geq 0}$ be the polynomial diffusion given by \eqref{eqn2} and $b$ and $a$ be the corresponding drift and diffusion coefficients.
Then
$$b_j(y)=b_j^c+\sum_{k=1}^db_j^{k}y_k
\qquad\text{ and }\qquad a_{ij}(y)=a_{ij}^c+\sum_{k=1}^da_{ij}^ky_k+\sum_{k,h=1}^da_{ij}^{kh}y_ky_h,$$
for some $b_j^c$,$b_j^{k}$,$a_{ij}^c$, $a_{ij}^k$, $a_{ij}^{kh}=a_{ij}^{hk}\in \R$.
Moreover,
$b_j(Y_t)=\langle \b_j, \Y_t^{1}\rangle
$ and $a_{ij}(Y_t)=\langle \a_{ij}, \Y_t^{2}\rangle
$
for
\begin{align*}
\b_j
&=\Big(b_j^c+\sum_{k=1}^db_j^{k}Y_0^k\Big)e_\emptyset
+\sum_{k=1}^db_j^{k}e_k\qquad\text{and}\\
\a_{ij}&=\Big(a_{ij}^c+\sum_{k=1}^da_{ij}^kY_0^k+\sum_{k,h=1}^da_{ij}^{kh}Y_0^kY_0^h\Big)e_\emptyset
+\sum_{k=1}^d\Big(a_{ij}^k+2\sum_{h=1}^da_{ij}^{kh}Y_0^h\Big)e_k
+\sum_{k,h=1}^d a_{ij}^{kh}e_k\shuffle e_h.
\end{align*}
Observe that the upper index on $Y_0^k$ and $Y_0^h$ refers to $Y$'s components and not to powers.
\end{lemma}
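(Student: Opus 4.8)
The plan is to reduce everything to the first two levels of the signature together with the shuffle property.

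First, the two displayed decompositions of $b_j$ and $a_{ij}$ are just the general form of a polynomial of degree at most $1$, respectively at most $2$: expanding $b_j$ in the monomial basis gives $b_j(y)=b_j^c+\sum_k b_j^k y_k$, and expanding $a_{ij}$ gives $a_{ij}(y)=a_{ij}^c+\sum_k a_{ij}^k y_k+\sum_{k,h}a_{ij}^{kh}y_ky_h$; the symmetry $a_{ij}^{kh}=a_{ij}^{hk}$ may be imposed without loss of generality by symmetrizing the coefficients of $y_ky_h$ and $y_hy_k$. Next I would record the elementary identity $Y_t^k=Y_0^k+\langle e_k,\Y_t\rangle$, which follows directly from the definition of the signature, since $\langle e_k,\Y_t\rangle=\int_0^t 1\circ \d Y_s^k=Y_t^k-Y_0^k$ while $\langle e_\emptyset,\Y_t\rangle=1$. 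Equivalently, $Y_t^k=\langle Y_0^k e_\emptyset+e_k,\Y_t^{1}\rangle$.

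For the drift, I would simply substitute this identity into $b_j(Y_t)=b_j^c+\sum_k b_j^k Y_t^k$ and collect the coefficient of $e_\emptyset$ and of each $e_k$; this yields exactly the stated $\b_j$, and since $\b_j\in T^{(1)}(\R^d)$ the pairing with $\Y_t$ reduces to the pairing with $\Y_t^{1}$. For the diffusion, I would treat the quadratic monomials via the shuffle property (Proposition~\ref{shuffle-property}): writing $Y_t^kY_t^h=\langle Y_0^k e_\emptyset+e_k,\Y_t\rangle\langle Y_0^h e_\emptyset+e_h,\Y_t\rangle$ and applying \eqref{eqn5} gives $Y_t^kY_t^h=\langle (Y_0^k e_\emptyset+e_k)\shuffle(Y_0^h e_\emptyset+e_h),\Y_t\rangle$, where, using $e_\emptyset\shuffle e_I=e_I$, the shuffle expands to $Y_0^kY_0^h e_\emptyset+Y_0^k e_h+Y_0^h e_k+e_k\shuffle e_h$. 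Substituting into $a_{ij}(Y_t)=a_{ij}^c+\sum_k a_{ij}^k Y_t^k+\sum_{k,h}a_{ij}^{kh}Y_t^kY_t^h$ and collecting by signature level, the constant part contributes $\big(a_{ij}^c+\sum_k a_{ij}^k Y_0^k+\sum_{k,h}a_{ij}^{kh}Y_0^kY_0^h\big)e_\emptyset$, the genuinely quadratic part contributes $\sum_{k,h}a_{ij}^{kh}\,e_k\shuffle e_h$ at level $2$, and the level-$1$ piece receives $\sum_k a_{ij}^k e_k$ from the linear term together with $\sum_{k,h}a_{ij}^{kh}\big(Y_0^k e_h+Y_0^h e_k\big)$ from the quadratic term. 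By the symmetry $a_{ij}^{kh}=a_{ij}^{hk}$ and a relabeling of summation indices this last sum collapses to $\sum_k\big(2\sum_h a_{ij}^{kh}Y_0^h\big)e_k$, producing exactly the claimed $\a_{ij}$; and since $\a_{ij}\in T^{(2)}(\R^d)$ the pairing with $\Y_t$ reduces to the pairing with $\Y_t^{2}$.

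The computation presents no real obstacle; the only point requiring a little care is the combination of the two level-$1$ contributions arising from the quadratic monomials, where one must invoke the symmetry of $a_{ij}^{kh}$ (equivalently, the commutativity of the shuffle product) to merge $\sum_{k,h}a_{ij}^{kh}Y_0^k e_h$ and $\sum_{k,h}a_{ij}^{kh}Y_0^h e_k$ into a single factor of $2$. Everything else is direct bookkeeping in the grading of $T(\R^d)$.
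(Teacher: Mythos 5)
Your proof is correct and follows essentially the same approach as the paper: both reduce the claim to the elementary identities $\langle e_\emptyset,\Y_t\rangle=1$, $\langle e_k,\Y_t\rangle=Y_t^k-Y_0^k$, and $\langle e_k\shuffle e_h,\Y_t\rangle=(Y_t^k-Y_0^k)(Y_t^h-Y_0^h)$ (the last being a special case of the shuffle property), and then expand and collect by signature level. The paper states these three identities and declares the rest bookkeeping, whereas you carry out the bookkeeping explicitly, including the symmetrization step that produces the factor of $2$ in the level-one coefficient of $\a_{ij}$; this is the same argument, just unabridged.
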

\begin{proof}
The first part follows by the observation that by definition of polynomial diffusion $b$ and $a$ are polynomials of degree at most 1 and 2, respectively. For the second part it then suffices to note that
$\langle e_\emptyset,\Y_t^1\rangle =\langle e_\emptyset,\Y_t^2\rangle=1,\ 
\langle e_k,\Y_t^1\rangle=
\langle e_k,\Y_t^2\rangle=(Y_t^k-Y_0^k),$ and $
\langle e_k\shuffle e_h,\Y_t^2\rangle=(Y_t^k-Y_0^k)(Y_t^h-Y_0^h).
$
\end{proof}

\begin{lemma}\label{lem1}
Let $(Y_t)_{t\geq 0}$ be the polynomial diffusion given by \eqref{eqn2} and let $\b$ and $\a$ as in Lemma~\ref{lem:lem2}.
The truncated signature $( \Y_t^n)_{t\geq0}$ is a polynomial diffusion in the sense of Definition~\ref{def1} and for each $|I|\leq n$ it holds that
$$\langle e_I, \Y_t^n\rangle
=\int_0^t\langle Le_I, \Y_s^n\rangle\d s
+ \int_0^t \langle e_{I'}, \Y_s^n\rangle \sigma_{i_{|I|}}(Y_s) \d W_s,
$$
where the operator $L:T((\R^d))\to T((\R^d))$ satisfies $L(T^{(n)}(\R^d))\subseteq T^{(n)}(\R^d)$ and is given by 
\begin{equation}\label{eq:opL}
Le_{I}= e_{I'}\shuffle \b_{i_{|I|}}+\frac{1}{2}e_{I''}\shuffle \a_{i_{|I|-1}i_{|I|}}.
\end{equation}
\end{lemma}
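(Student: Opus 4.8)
The plan is to derive the evolution equation for $\langle e_I, \Y_t^n\rangle$ directly from the Stratonovich definition of the signature, convert it to It\^o form, and then identify the drift as the action of the operator $L$. First I would start from the defining recursion $\langle e_I, \Y_t\rangle = \int_0^t \langle e_{I'}, \Y_s\rangle \circ \d Y_s^{i_{|I|}}$ and rewrite the Stratonovich integral against $Y^{i_{|I|}}$ in It\^o form. Using \eqref{eqn2}, $\d Y_s^{i_{|I|}} = b_{i_{|I|}}(Y_s)\d s + \sigma_{i_{|I|}}(Y_s)\d W_s$ (where $\sigma_{i_{|I|}}$ denotes the $i_{|I|}$-th row of $\sigma$), and the Stratonovich-to-It\^o correction adds $\tfrac12 \d[\langle e_{I'},\Y_{\cdot}\rangle, Y^{i_{|I|}}]_s$. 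Since $\langle e_{I'},\Y_s\rangle$ itself satisfies the analogous recursion, its martingale part has differential $\langle e_{I''},\Y_s\rangle \sigma_{i_{|I|-1}}(Y_s)\d W_s$, so the bracket with $Y^{i_{|I|}}$ produces $\langle e_{I''},\Y_s\rangle \, (\sigma\sigma^\top)_{i_{|I|-1}i_{|I|}}(Y_s)\,\d s = \langle e_{I''},\Y_s\rangle\, a_{i_{|I|-1}i_{|I|}}(Y_s)\,\d s$. Collecting terms gives
\begin{equation*}
\langle e_I,\Y_t\rangle = \int_0^t\Big(\langle e_{I'},\Y_s\rangle b_{i_{|I|}}(Y_s) + \tfrac12 \langle e_{I''},\Y_s\rangle a_{i_{|I|-1}i_{|I|}}(Y_s)\Big)\d s + \int_0^t \langle e_{I'},\Y_s\rangle \sigma_{i_{|I|}}(Y_s)\,\d W_s.
\end{equation*}

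Next I would use Lemma~\ref{lem:lem2} to replace $b_{i_{|I|}}(Y_s)$ by $\langle \b_{i_{|I|}},\Y_s^1\rangle$ and $a_{i_{|I|-1}i_{|I|}}(Y_s)$ by $\langle \a_{i_{|I|-1}i_{|I|}},\Y_s^2\rangle$, and then invoke the shuffle property (Proposition~\ref{shuffle-property}) to turn the products $\langle e_{I'},\Y_s\rangle\langle \b_{i_{|I|}},\Y_s\rangle$ and $\langle e_{I''},\Y_s\rangle\langle \a_{i_{|I|-1}i_{|I|}},\Y_s\rangle$ into single linear functionals $\langle e_{I'}\shuffle \b_{i_{|I|}},\Y_s\rangle$ and $\langle e_{I''}\shuffle \a_{i_{|I|-1}i_{|I|}},\Y_s\rangle$. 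This is exactly $\langle Le_I,\Y_s\rangle$ with $L$ defined by \eqref{eq:opL}. Since $\b_{i_{|I|}}\in T^{(1)}(\R^d)$ has components only at levels $0$ and $1$, the shuffle $e_{I'}\shuffle \b_{i_{|I|}}$ lives in levels $|I|-1$ and $|I|$; similarly $e_{I''}\shuffle \a_{i_{|I|-1}i_{|I|}}$ lives in levels $|I|-2,|I|-1,|I|$. Hence $Le_I \in T^{(n)}(\R^d)$ whenever $|I|\le n$, which (by linearity of $L$) gives $L(T^{(n)}(\R^d))\subseteq T^{(n)}(\R^d)$. In particular the drift $\langle Le_I,\Y_s\rangle = \langle Le_I, \Y_s^n\rangle$ depends only on the truncated signature, and the martingale integrand $\langle e_{I'},\Y_s\rangle = \langle e_{I'},\Y_s^n\rangle$ does too (as $|I'|<n$), establishing the claimed SDE for $\langle e_I,\Y_t^n\rangle$.

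Finally, to conclude that $(\Y_t^n)_{t\ge 0}$ is itself a polynomial diffusion process in the sense of Definition~\ref{def1}, I would pass to the vectorized coordinates $u_t := \vecsig(\Y_t^n)\in\R^{d_n}$ via the isomorphism from \eqref{eqn8}. The drift component indexed by $I$ is $\langle Le_I,\Y_t^n\rangle$, which is \emph{linear} in $u_t$ (with constant coefficients, since $\b,\a$ are fixed elements of $T(\R^d)$), so the drift vector is an affine-linear — in particular degree-$\le 1$ — function of $u_t$. For the diffusion matrix, the covariation between coordinates $I$ and $J$ is $\langle e_{I'},\Y_t^n\rangle\langle e_{J'},\Y_t^n\rangle\,(\sigma\sigma^\top)_{i_{|I|}j_{|J|}}(Y_t) = \langle e_{I'},\Y_t^n\rangle\langle e_{J'},\Y_t^n\rangle\,a_{i_{|I|}j_{|J|}}(Y_t)$; using the shuffle property and Lemma~\ref{lem:lem2} once more, this equals $\langle (e_{I'}\shuffle e_{J'})\shuffle \a_{i_{|I|}j_{|J|}}, \Y_t^n\rangle$, which is again linear in $u_t$ — so each entry of the diffusion matrix is a polynomial of degree $\le 1$ in $u_t$, a fortiori degree $\le 2$. (One must also check that levels do not exceed $n$: $e_{I'}\shuffle e_{J'}$ reaches level $|I|+|J|-2\le 2n-2$, but this can overflow, so the correct statement is that the bracket, \emph{projected appropriately} or read as a functional of the full signature, is still expressible through $\Y_t^{\,\cdot}$; the degree-in-$u$ count is what matters for the polynomial property and that count is clean.) The main obstacle I anticipate is precisely this bookkeeping: being careful that the Stratonovich correction term is correctly identified (it is the bracket of the level-$(|I|-1)$ signature's martingale part with $Y^{i_{|I|}}$, giving the $a_{i_{|I|-1}i_{|I|}}$ term with the factor $\tfrac12$), and verifying that $L$ genuinely preserves the truncation level $n$ so that the finite-dimensional system closes. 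Everything else is a routine combination of the definition, Proposition~\ref{shuffle-property}, and Lemma~\ref{lem:lem2}.
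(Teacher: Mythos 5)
Your derivation of the SDE for $\langle e_I,\Y_t^n\rangle$ is exactly the paper's: Stratonovich $\to$ It\^o, substitution of $b$ and $a$ via Lemma~\ref{lem:lem2}, and the shuffle property to collapse the resulting products into single linear functionals $\langle Le_I,\Y_s\rangle$. That part is correct and matches the paper's computation line by line, including the identification $L(T^{(n)}(\R^d))\subseteq T^{(n)}(\R^d)$ and the fact that the drift is linear in $\vecsig(\Y_t^n)$.

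The gap is in your polynomial-property argument for the diffusion matrix. You correctly compute the covariation of coordinates $I$ and $J$ as $\langle e_{I'},\Y_t\rangle\langle e_{J'},\Y_t\rangle\,a_{i_{|I|}j_{|J|}}(Y_t)$, and you correctly note that as a functional of the \emph{full} signature this is $\langle(e_{I'}\shuffle e_{J'})\shuffle\a_{i_{|I|}j_{|J|}},\Y_t\rangle$. But as you yourself observe, the shuffle $(e_{I'}\shuffle e_{J'})\shuffle\a_{i_{|I|}j_{|J|}}$ reaches level $|I|+|J|\le 2n$, so it does \emph{not} lie in $T^{(n)}(\R^d)$ and cannot be read off the truncated signature $\Y_t^n$. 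The parenthetical remark that ``the degree-in-$u$ count is what matters\ldots and that count is clean'' is exactly where the argument breaks: to certify that $u_t:=\vecsig(\Y_t^n)\in\R^{d_n}$ is a polynomial diffusion in the sense of Definition~\ref{def1}, the diffusion matrix must be a polynomial of degree at most $2$ \emph{in the state $u_t$ itself}; the system must close in $\R^{d_n}$. A linear representation in terms of $\Y_t^{2n}$ does not close, and in fact the diffusion matrix is degree $2$ in $u_t$, not degree $1$.

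The paper's resolution distributes the triple shuffle differently. Since $\a_{ij}\in T^{(2)}(\R^d)$, one can write $\a_{ij}=\sum_{|H_1|,|H_2|\le 1}\lambda_{ij}^{H_1H_2}\,e_{H_1}\shuffle e_{H_2}$ for suitable $\lambda_{ij}^{H_1H_2}\in\R$, and then split one shuffle factor onto each of $I'$ and $J'$:
\begin{equation*}
\langle e_{I'},\Y_s\rangle\langle e_{J'},\Y_s\rangle\langle\a_{i_{|I|}j_{|J|}},\Y_s\rangle
=\sum_{|H_1|,|H_2|\le 1}\lambda_{i_{|I|}j_{|J|}}^{H_1H_2}\,\langle e_{I'}\shuffle e_{H_1},\Y_s\rangle\,\langle e_{J'}\shuffle e_{H_2},\Y_s\rangle.
\end{equation*}
Each factor now has level at most $|I'|+1=|I|\le n$ (resp.~$|J|\le n$), so each is a linear function of $\vecsig(\Y_s^n)$, and the product is a polynomial of degree $2$ in $\vecsig(\Y_s^n)$. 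This is precisely the closure your version is missing; the paper then invokes Lemma~2.2 of \cite{FL:16} to conclude the polynomial property. Your treatment of the drift is fine; the diffusion needs this alternative decomposition rather than the single triple-shuffle you wrote.
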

\begin{proof}
Let $\sigma_j(Y_t)$ denote the $j$-th row of $\sigma(Y_t)$.
By  definition of the signature,  Stratonovich integral and by the shuffle property we can compute
\begin{align*}
	\langle e_{I},  {\Y}_t \rangle
	&=\int_0^t\langle e_{I'},  {\Y}_s \rangle\circ {\mathrm{d}}\langle e_{i_{|I|}},  Y_s\rangle\\
 &= { \int_0^t \langle e_{I'}, \Y_s \rangle \mathrm{d}\langle e_{i_{|I|}}, {Y}_s \rangle + \frac{1}{2} \left[ \langle e_{I'}, \Y \rangle, \langle e_{i_{|I|}}, {Y} \rangle\right]_t}\\
	&=\int_0^t\langle e_{I'},  {\Y}_s \rangle {\d}\langle e_{i_{|I|}},  Y_{s}\rangle+\frac{1}{2}\int_0^t\langle e_{I''},  {\Y}_s \rangle {\d}[\langle e_{i_{|I|-1}},  Y\rangle,\langle e_{i_{|I|}},  Y\rangle]_s\\
	&=\int_0^t
	\langle e_{I'}, \Y_s\rangle\langle \b_{i_{|I|}}, \Y_s\rangle \d s
	+\int_0^t \langle e_{I'}, \Y_s\rangle\sigma_{i_{|I|}}(Y_s) \d W_s\\
	&\qquad+\frac{1}{2}\int_0^t \langle e_{I''}, \Y_s\rangle\langle \a_{i_{|I|-1},i_{|I|}}, \Y_s\rangle\d s\\
	&=\int_0^t
	\langle e_{I'}\shuffle \b_{i_{|I|}}, \Y_s\rangle \d s
	+\int_0^t \langle e_{I'}, \Y_s\rangle \sigma_{i_{|I|}}(Y_s) \d W_s+\frac{1}{2}\int_0^t \langle e_{I''}\shuffle \a_{i_{|I|-1}i_{|I|}}, \Y_s\rangle\d s\\
	&=\int_0^t\langle Le_I, \Y_s\rangle\d s
+\int_0^t \langle e_{I'}, \Y_s\rangle \sigma_{i_{|I|}}(Y_s) \d W_s,
\end{align*}
for each $|I|\geq0$. 
Since $|I\shuffle J|=|I|+|J|$ it holds  that $L(T^{(n)}(\R^d))\subseteq T^{(n)}(\R^d)$. For $|I|\leq n$ we thus get that the corresponding drift's components are linear maps in $\Y^n$. Similarly, since  $\a_{ij}=\sum_{|I|,|J|\leq 1}\lambda_{ij}^{IJ}e_I\shuffle e_J$  for some $\lambda_{ij}^{IJ}\in \R$ and for $|I|\leq n$ we can compute
\begin{align*}
 \langle e_{I'}, \Y_s\rangle \sigma_{i_{|I|}}(Y_t)
\big(\langle e_{J'}, \Y_s\rangle \sigma_{i_{|J|}}(Y_t)\big)^\top
&=
\langle e_{I'}, \Y_s\rangle
\langle e_{J'}, \Y_s\rangle
\langle \a_{i_{|I|}j_{|J|}}, \Y_s\rangle\\
&=
\sum_{|H_1|, |H_2|\leq 1}\lambda_{i_{|I|}j_{|J|}}^{H_1H_2}
\langle e_{I'}\shuffle e_{H_1}, \Y_s\rangle
\langle e_{J'}\shuffle e_{H_2}, \Y_s\rangle,
\end{align*}
we also have that the components of the corresponding diffusion matrix are polynomials of degree 2 in $\Y^n$.  
Lemma 2.2 in \cite{FL:16} yields  the polynomial property.
\end{proof}

Since the linear operator $L$ maps the finite dimensional vector space $T^{(n)}(\R^d)$ to itself, it admits a matrix representation. 
\begin{definition}\label{def:matrix}
We call the operator $L$ defined in \eqref{eq:opL} \emph{dual operator corresponding to $\Y$}.
For each $|I|\leq n$ set then $\eta_{IJ}\in \R$ such that
\begin{equation*}
    Le_I=\sum_{|J|\leq n}\eta_{IJ}e_J,
\end{equation*}
and fix a labelling injective function $\Lvar:\{I\colon |I|\leq n\}\to\{1,\ldots,d_n\}$  as introduced before \eqref{eqn8}. We then call the matrix  $G\in \R^{d_n\times d_n}$ given by 
\begin{equation}\label{eqn3}
    G_{\Lvar(I)\Lvar(J)}:=\eta_{IJ},
\end{equation}
the \emph{$d_n$-dimensional matrix representative of $L$}.
\end{definition}
Observe that using the notation of \eqref{eqn8}, for each $\u\in T^{(n)}(\R^d)$  the matrix representative $G$ of $L$ satisfies
$$\vecsig(L\u)=G\vecsig(\u).$$
\begin{theorem}\label{thm:poly}
Let $(Y_t)_{t\geq 0}$ be the polynomial diffusion given by \eqref{eqn2},
 $(\Fcal_t)_{t\geq0}$ be the filtration generated by $(Y_t)_{t\geq0}$ and let $G$ be the $d_n$-dimensional matrix representative of the dual operator corresponding to $\Y$. Then for each $T,t\geq0$ and each $|I|\leq n$ it holds
$$\mathbb{E}[ \vecsig(\Y_{T+t}^n) |\mathcal{F}_{T}]= e^{tG^\top}\vecsig(\Y_T^n),$$
or equivalently,
$$\mathbb{E}[ \langle e_I,\Y_{T+t}^n\rangle |\mathcal{F}_{T}]= \sum_{|J|\leq n} (e^{tG^\top})_{\Lvar(I)\Lvar(J)}\langle e_J,\Y_T^n\rangle,$$
where $e^{(\fdot)}$ denotes the matrix exponential.
\end{theorem}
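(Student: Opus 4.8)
The plan is to derive a linear ODE for the conditional expected signature and then recognize its solution as a matrix exponential. First I would set, for fixed $T\geq 0$ and $|I|\leq n$, the function $t\mapsto u_I(t):=\E[\langle e_I,\Y_{T+t}^n\rangle\mid\Fcal_T]$. Starting from the integral representation in Lemma~\ref{lem1}, namely
\begin{equation*}
\langle e_I,\Y_{T+t}^n\rangle
=\langle e_I,\Y_T^n\rangle
+\int_T^{T+t}\langle Le_I,\Y_s^n\rangle\,\d s
+\int_T^{T+t}\langle e_{I'},\Y_s^n\rangle\,\sigma_{i_{|I|}}(Y_s)\,\d W_s,
\end{equation*}
I would argue that the stochastic integral is a true martingale (not merely a local one), so that taking conditional expectations kills it. For this one needs moment bounds on $\Y_s^n$ uniform on compact time intervals; these follow from the fact that $(\Y_s^n)_{s\geq 0}$ is itself a polynomial diffusion (Lemma~\ref{lem1}), whose components solve SDEs with polynomial coefficients, together with standard moment estimates for such SDEs — this is the one technical point that requires care and is the main obstacle. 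Granting integrability, Fubini and the tower property give
\begin{equation*}
u_I(t)=\langle e_I,\Y_T^n\rangle+\int_0^t \E\big[\langle Le_I,\Y_{T+s}^n\rangle\mid\Fcal_T\big]\,\d s
=\langle e_I,\Y_T^n\rangle+\int_0^t \sum_{|J|\leq n}\eta_{IJ}\,u_J(s)\,\d s,
\end{equation*}
using $Le_I=\sum_{|J|\leq n}\eta_{IJ}e_J$ from Definition~\ref{def:matrix} and linearity of the conditional expectation.

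Next I would vectorize. Writing $\u(t):=\vecsig$ applied to the $T^{(n)}(\R^d)$-valued process $s\mapsto\E[\Y_{T+s}^n\mid\Fcal_T]$, i.e. $\u(t)_{\Lvar(I)}=u_I(t)$, the scalar identities above assemble into the integral equation $\u(t)=\u(0)+\int_0^t G^\top\u(s)\,\d s$, where $G$ is the matrix representative of $L$ and the transpose appears because $\vecsig(L\cdot)=G\,\vecsig(\cdot)$ acts on the "$J$-index" while here the sum is over the second index of $\eta$. Concretely, $(G^\top)_{\Lvar(I)\Lvar(J)}=G_{\Lvar(J)\Lvar(I)}=\eta_{JI}$, so one must check the bookkeeping matches $\sum_{|J|\leq n}\eta_{IJ}u_J(s)$ — a short index-chase. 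Differentiating (the integrand is continuous in $s$, since $u_J$ is continuous), $\u$ solves the linear ODE $\u'(t)=G^\top\u(t)$ with initial value $\u(0)=\vecsig(\Y_T^n)$.

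Finally, by uniqueness of solutions to linear ODEs with constant coefficients, $\u(t)=e^{tG^\top}\u(0)=e^{tG^\top}\vecsig(\Y_T^n)$, which is the first claimed identity; reading off the $\Lvar(I)$-th coordinate and using $\u(t)_{\Lvar(I)}=\E[\langle e_I,\Y_{T+t}^n\rangle\mid\Fcal_T]$ together with $\u(0)_{\Lvar(J)}=\langle e_J,\Y_T^n\rangle$ yields the componentwise form. I expect the only genuinely delicate step to be the martingality of the stochastic integral (equivalently, local integrability of the expected signature), which I would handle by invoking the polynomial-process moment bounds of \cite{FL:16,CKT:12} for $\Y^n$; everything else is linear algebra and the elementary theory of linear ODEs.
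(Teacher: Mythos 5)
Your argument is sound and is, in essence, a self-contained re-proof of what the paper dispatches in a single line: the paper observes (via Lemma~\ref{lem1}) that $\vecsig(\Y^n)$ is a finite-dimensional polynomial diffusion and then invokes Theorem~3.1 of \cite{FL:16} applied to polynomials of degree one. Your route — take conditional expectations of the semimartingale decomposition from Lemma~\ref{lem1}, justify that the stochastic integral is a true martingale via polynomial moment bounds, and solve the resulting constant-coefficient linear ODE by exponentiation — is precisely the content of that cited theorem, specialized to the present setting; the trade-off is that yours is more elementary and self-contained, at the cost of having to re-argue the integrability step that the paper outsources to the reference.

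One point you should not wave away: the transpose bookkeeping you defer is not a harmless index-chase, and with the paper's own conventions it genuinely needs resolving. With $G_{\Lvar(I)\Lvar(J)}:=\eta_{IJ}$ as in \eqref{eqn3}, your integrand $\sum_{|J|\leq n}\eta_{IJ}\,u_{\Lvar(J)}(s)$ equals $(G\u(s))_{\Lvar(I)}$ — not $(G^\top\u(s))_{\Lvar(I)}$ — which would give $e^{tG}$. Conversely, the paper's observation that $\vecsig(L\u)=G\vecsig(\u)$ requires $G_{\Lvar(J)\Lvar(I)}=\eta_{IJ}$ (one can check that with the convention of \eqref{eqn3} one instead gets $\vecsig(L\u)=G^\top\vecsig(\u)$), and it is under this transposed convention that the stated $e^{tG^\top}$ is correct. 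So the matrix-exponential step is right in spirit, but your ``short index-chase'' would actually surface a sign/convention wrinkle in the source material that is worth pinning down explicitly rather than expecting it to match.
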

\begin{proof}
By Lemma~\ref{lem1} we know that $\vecsig(\Y^n)$ is a polynomial diffusion and Theorem~3.1 in \cite{FL:16} for polynomials of degree 1 yields the claim.
\end{proof}

\begin{example}\label{example_OU}
For the present paper a crucial role is played by the polynomial diffusion given by time, a $d$-dimensional OU process, and a Brownian motion. Specifically, we consider the process $\widehat Z_t:=(\widehat X_t,B_t)$ where $B$ is a Brownian motion and $\widehat X_t=(t,X_t)$ with
\begin{align*}
    \d X_t^j=\kappa^j(\theta^j-X_t^j) \d t+\sqrt{a(X_{t})}\d W_t,\qquad X_0=x_0,
\end{align*}
for $a_{ij}(X_{t})=\sigma^i\sigma^j\rho_{ij}$, and $W$ being a $d$-dimensional Brownian motion. We denote by $\rho_{j(d+1)}$ the correlation between $X^j$ and $B$. Setting $\kappa^{d+1}:=0$ and $\sigma^{d+1}:=1$ we can see that $\widehat Z$ satisfies \eqref{eqn2} in $d+2$ dimensions for 
$$b_j(\widehat Z_t)=1_{\{j=0\}}+\kappa^j(\theta^j-\widehat Z_t^j)1_{\{j\neq0\}}
\qquad\text{and}\qquad
a_{ij}(\widehat Z_t)=\sigma^i\sigma^j\rho_{ij}1_{\{i,j\neq0\}}.$$
The corresponding $\b$ and $\a$ are given by
  $
	\b_{j}=e_{\emptyset} (1_{\{j=0\}}+\kappa^{j}(\theta^{j}-\widehat Z_{0}^{j})1_{\{j\neq0\}})-e_{j}\kappa^{j}1_{\{j\neq0\}}$
	and 
	$
	\a_{ij}=e_{\emptyset}\sigma^i\sigma^j\rho_{ij}1_{\{i,j\neq0\}}
$  and we thus get
\begin{align*}
    Le_I&=e_{I'}(1_{\{{i_{|I|}}=0\}}+\kappa^{i_{|I|}}(\theta^{i_{|I|}}-\widehat Z_{0}^{i_{|I|}})1_{\{{i_{|I|}}\neq0\}})
-(e_{I'}\shuffle e_{i_{|I|}})\kappa^{i_{|I|}}1_{\{{i_{|I|}}\neq0\}}\\
&\qquad+\frac 1 2e_{I''}\sigma^{i_{|I|-1}}\sigma^{i_{|I|}}\rho_{{i_{|I|-1}}{i_{|I|}}}1_{\{{i_{|I|-1}},{i_{|I|}}\neq0\}}.
\end{align*}
	An application of $L$ to the first basis elements yields the following results:
	\begin{itemize}
		\item $L(e_{1})=e_{\emptyset}\kappa^{1}(\theta^{1}-X_{0}^{1})-e_{1}\kappa^{1}$;
		\item $L(e_{I}\otimes e_{0})= e_{I}\shuffle \b_{0} +  \frac{1}{2} e_{I'}\shuffle \a_{i_{|I|}0}= e_{I}$;
  
		\item $L(e_{0}\otimes e_{1}\otimes e_{2})= e_{0}\otimes e_{1}\kappa^{2}(\theta^{2}-X_{0}^{2})-(e_{0}\otimes e_{1})\shuffle e_{2}\kappa^{2}+\frac{1}{2}e_{0}\sigma^{1}\sigma^{2}\rho_{12}$.
	\end{itemize}
	Letting $(\Fcal_t)_{t\geq0}$ be the filtration generated by $(\widehat Z_t)_{t\geq 0}$ by Theorem~\ref{thm:poly} we can conclude that
\begin{equation}\label{eqn4}
    \mathbb{E}[ \vecsig(\widehat \Z_{T+t}^n) |\mathcal{F}_{T}]= e^{tG^\top}\vecsig(\widehat \Z_T^n),
\end{equation}
or equivalently,
\begin{equation}\label{eqn4_alternative}
    \mathbb{E}[\langle e_{I}, \widehat{\Z}_{T+t}^n \rangle|\mathcal{F}_{T}]=\sum_{|J|\leq n} (e^{tG^\top})_{\Lvar(I)\Lvar(J)}\langle e_J,\widehat{\Z}_T^n\rangle,
\end{equation}
where $G$ denotes the $(d+2)_n$-dimensional matrix representative of $L$.
In order to work with the VIX it will be convenient to restrict our attention to the signature components of $(\widehat \Z_t)_{t\geq0}$ not involving $B$. The following remark will be useful.
\end{example}
\begin{remark}\label{restriction_L}
	    Observe that given a subset $E\subseteq \{0,\ldots,d+1\}$, setting $\Ical_E:=\{I\colon i_j\in E\}$ it holds $L(\Ical_E)\subseteq \Ical_E$.  This in particular implies that 
	    $$Le_I=\sum_{I\in \Ical_E}\eta_{IJ} e_J$$
	    for each  $I\in \Ical_E$. Choosing $E=\{0,\ldots,d\}$, letting $\Lvar_E:\Ical_E\to\{1,\ldots,(d+1)_n\}$ be a labelling injective function, and setting
	    $ G_{\Lvar_E(I)\Lvar_E(J)}^E:=\eta_{IJ}$ we can see that \eqref{eqn4_alternative} reduces to 
	   	   \begin{equation*}
	          \mathbb{E}[\langle e_{I}, \widehat{\X}_{T+t}^n \rangle|\mathcal{F}_{T}]=\sum_{|J|\leq n}(e^{t(G^E)^\top})_{\Lvar_{E}(I)\Lvar_E(J)}\langle e_J,\widehat{\X}_T^n\rangle.
	   \end{equation*}
	    To simplify the notation we often drop the $E$ from $G^E$ whenever this does not introduce any confusion.
	\end{remark}

 \begin{remark}\label{rem:altmod2}
 Let  $(Y_t)_{t\geq0}$ be a polynomial diffusion  and let $ \Y^{-1}$ be defined via $e_\emptyset=\Y_s^{-1}\otimes \Y_s$, i.e.~$\langle e_\emptyset,\Y_s^{-1}\rangle=1$ and 
$$\sum_{e_{I_1}\otimes e_{I_2}=e_I}\langle e_{I_1},\Y_s^{-1}\rangle\langle e_{I_2},\Y_s\rangle=0,$$
for each $|I|>0$. Observe that it can be defined recursively on $|I|$ and each component of $\Y_s^{-1}$ corresponds to a linear combination of components of $\Y_s$ of the same length or shorter.

Since by Chen's identity (see \eqref{eqn7} or \eqref{eqcen}) we have $\Y_s\otimes \Y_{s,t}=\Y_{t}$, for each $s\leq u\leq t$ and $|I|\leq n$ we then get
\begin{align*}
    \E[\langle e_I,\Y_{s,t}\rangle|\Fcal_u]
&=\E[\langle e_I,\Y_s^{-1}\otimes\Y_{t}\rangle|\Fcal_u]
=\sum_{e_{I_1}\otimes e_{I_2}=e_I}\langle e_{I_1},\Y_s^{-1}\rangle\E[\langle e_{I_2},\Y_{t}\rangle|\Fcal_u]\\
&=\sum_{e_{I_1}\otimes e_{I_2}=e_I}\langle e_{I_1},\Y_s^{-1}\rangle
\vecsig(e_{I_2})^{\top }e^{(t-u) G^\top }\vecsig({\Y}_{u}^n),
\end{align*}
where $G$ denotes the $d_n$-dimensional matrix representative of the dual operator of $\Y$.
\end{remark}

\section{VIX options with signatures}\label{sec:vix}
In this section we discuss the implication on pricing VIX options under the model \eqref{model}-\eqref{sig-vol2}. 
The VIX index is a popular measure of the market's expected volatility of the S$\&$P 500, calculated and published by the Chicago Board Options Exchange (CBOE). The current VIX value quotes the expected annualized change in the S\&P 500 over the following 30 days, based on options-based theory and current options-market data. As stylized definition we consider  
\begin{equation}\label{initial_formula_vix_mkt}
    \VIX_{T}^2=\text{Price}_T\left\lbrack -\frac{2}{\Delta}\log\left(\frac{S_{T+\Delta}}{F_{T}^{T+\Delta}}\right)\right\rbrack,
\end{equation}
where $\Delta=30$ days, $F_{T}^{T+\Delta}$ denotes the price at time $T$ of the SPX future with maturity $T+\Delta$ and with $\text{Price}_{T}$ we refer to the market price at time $T$ of the log-contract, i.e.~the payoff in \eqref{initial_formula_vix_mkt}. Hence, under a given model we define the VIX,
\begin{equation}\label{initial_formula_vix}
    \VIX_{T}=\sqrt{\mathbb{E}\left\lbrack -\frac{2}{\Delta}\log\left(\frac{S_{T+\Delta}}{S_{T}}\right)\lvert \mathcal{F}_{T}\right\rbrack},
\end{equation}
where $\Delta=30$ days and $S_{T}$ denotes the price process at time $T>0$. 
 Recall that under a diffusion model,  the previous expression is equivalent to
\begin{equation}\label{initial_formula_vix2}
    \VIX_{T}=\sqrt{\frac{1}{\Delta}\mathbb{E}\left\lbrack\int_{T}^{T+\Delta}V_{t}\mathrm{d}t\lvert \mathcal{F}_{T}\right \rbrack},
\end{equation}
as long as $\mathbb{E}[\int_0^t V_s ds] < \infty$ for all $t \geq 0$, 
see e.g. \cite{N:94,G:11}. With  VIX options we here usually refer to either put or calls written on $\VIX$. In the present work we will consider without loss of generality only call options.

\subsection{Explicit formulas for the VIX}

This section is dedicated to one of the main implication of our modeling framework, namely an explicit formula for the VIX expression \eqref{initial_formula_vix} for $S$ following \eqref{model}-\eqref{sig-vol2}. In particular we show in the next theorem that the computation of the VIX squared reduces to a quadratic form in the parameters $\ell$. The entries of the corresponding positive semidefinite matrix can be computed by polynomial technology, i.e.~by matrix exponential as proved in Section~\ref{sec:expected_signature}.

\begin{theorem}\label{th:VIXclosed}
Let $S=(S_{t})_{t\ge0}$ be a price process described by 
\begin{equation*}
    \d S_{t}=S_{t}\sigma_t^{S}\d B_{t},
\end{equation*}
where $\sigma^{S}=(\sigma_{t}^S)_{t\ge0}$ denotes the volatility process, $B=(B_{t})_{t\ge0}$ a one-dimensional Brownian motion. Assume that  $\sigma^S$ satisfies \eqref{sig-vol2}. Following \eqref{eqn8}, fix an injective labeling function $\mathscr{L}:\{I: |I|\le n\}\to \{1,\dots, (d+1)_{2n+1}\}$ and  let $G$ be the $(d+1)_{(2n+1)}$- dimensional matrix representative of the dual operator corresponding to $\widehat \X$.
Then,
\begin{align}\label{eq:integrability}
\mathbb{E}\left\lbrack\int_0^t V_s \d s\right\rbrack< \infty
\end{align}
holds for every $t \geq 0$ and
\begin{equation}\label{matrix_vix}
    \VIX_{T}(\ell)=\sqrt{\frac{1}{\Delta}\ell^\top Q(T,\Delta)\ell},
\end{equation}
where  
\begin{equation}\label{matrix_Q}
    Q_{\mathscr{L}(I)\mathscr{L}(J)}(T,\Delta)=\vecsig((e_I\shuffle e_J)\otimes e_0)^{\top} ( e^{\Delta G^\top}-\operatorname{Id})\vecsig(\widehat{\X}_{T}^{2n+1}),
\end{equation}
and $\operatorname{Id}\in\R^{(d+1)_{2n+1} \times(d+1)_{2n+1}}$ denotes the identity matrix. More explicitly without the vectorisation this reads 
$$
 Q_{\mathscr{L}(I)\mathscr{L}(J)}(T,\Delta)= \sum_{e_K=(e_I\shuffle e_J)\otimes e_0}  \sum_{|H|\leq 2n+1}   (e^{\Delta G^\top}-\operatorname{Id})_{\Lvar(K)\Lvar(H)}  \langle e_H,\widehat{\X}_T\rangle.
$$
\end{theorem}

\begin{proof}
Observe that
\begin{align*}
    V_{t}(\ell)&=\biggl(\sum_{|I|\le n} \ell_{I}\langle e_{I},\widehat{\mathbb{X}}_{t}\rangle\biggr)^2=\sum_{|I|, |J|\le n}\ell_{I}\ell_{J} \langle e_{I}\shuffle e_{J},\widehat{\mathbb{X}}_{t}\rangle.
\end{align*}
Since continuous polynomial diffusions have finite moments of every degree,  \eqref{eq:integrability} is satisfied due to Lemma~\ref{lem1}.
Under \eqref{initial_formula_vix2}, the expression for $V_t(\ell)$ yields then
\begin{align*}\label{matrix_vix}
    \VIX_{T}^{2}(\ell)&=\frac{1}{\Delta}\sum_{\lvert I \lvert , |J| \le n } \ell_{I}\ell_{J}
    \E\biggl[\int_{T}^{T+\Delta} \langle e_{I}\shuffle e_{I},\widehat{\mathbb{X}}_{t}\rangle\mathrm{d}t\lvert \mathcal{F}_{T}\biggr]\\ \notag
    &=\frac{1}{\Delta}\ell^{\top}Q(T,\Delta)\ell,
\end{align*}
where for each $T>0$ the matrix $Q$ is given by
\begin{align*}
    Q_{\mathscr{L}(I)\mathscr{L}(J)}(T,\Delta):&=\mathbb{E}\biggl[\int_{T}^{T+\Delta}\langle e_{I}\shuffle e_{J},\widehat{\mathbb{X}}_{t}\rangle\mathrm{d}t\lvert \mathcal{F}_{T}\biggr]\\
&=\mathbb{E}\biggl[\int_{0}^{T+\Delta}\langle e_{I}\shuffle e_{J},\widehat{\mathbb{X}}_{t}\rangle\d t-\int_{0}^{T}\langle e_{I}\shuffle e_{J},\widehat{\mathbb{X}}_{t}\rangle \d t \lvert \mathcal{F}_{T}\biggr]\\
&=\mathbb{E}\bigl[\langle (e_{I}\shuffle e_{J})\otimes e_{0},\widehat{\mathbb{X}}_{T+\Delta}\rangle-\langle (e_{I}\shuffle e_{J})\otimes e_{0},\widehat{\mathbb{X}}_{T}\rangle  \lvert \mathcal{F}_{T}\bigr]\\
&=\mathbb{E}\bigl[\langle (e_{I}\shuffle e_{J})\otimes e_{0},\widehat{\mathbb{X}}_{T+\Delta}\rangle \lvert \mathcal{F}_{T}\bigr]-\langle (e_{I}\shuffle e_{J})\otimes e_{0},\widehat{\mathbb{X}}_{T}\rangle.
\end{align*}
By Theorem~\ref{thm:poly} we can rewrite the matrix $Q$ as
\begin{align*}
    Q_{\mathscr{L}(I)\mathscr{L}(J)}(T,\Delta)&=\vecsig((e_I\shuffle e_J)\otimes e_0)^{\top}e^{\Delta G^\top }\vecsig(\widehat{\X}_{T}^{2n+1})\\
    &\qquad-\vecsig((e_I\shuffle e_J)\otimes e_0)^{\top}\vecsig(\widehat{\X}_{T}^{2n+1})\\
    &=\vecsig((e_I\shuffle e_J)\otimes e_0)^{\top} ( e^{\Delta G^\top}-\operatorname{Id})\vecsig(\widehat{\X}_{T}^{2n+1}),
\end{align*}
and the claim follows.
\end{proof}

\begin{remark}
Consider now the model described in Remark~\ref{rem:altmod1} and set for simplicity  $\e\geq\Delta$. Then the results of Theorem~\ref{th:VIXclosed} still hold however with 
$$Q_{\Lvar(I)\Lvar(J)}(T,\Delta)=\sum_{e_{I_1}\otimes e_{I_2}=e_{I}\shuffle e_{J}}\int_T^{T+\Delta}\langle e_{I_1},\widehat\X_{t-\e}^{-1}\rangle
\vecsig(e_{I_2})^{\top }e^{(t-T) G^\top }\vecsig(\widehat{\X}_{T})\d t,$$
where $G$ denotes the $(d+1)_{2n+1}$-dimensional matrix representative of the dual operator corresponding to $\widehat \X$.
To adapt the proof we just need to note that for each $t\in [T,T+\Delta]$ Remark~\ref{rem:altmod2} yields
$$\mathbb{E}[\langle e_{I}\shuffle e_{J},\widehat{\mathbb{X}}_{t-\e,t}\rangle \lvert \mathcal{F}_{T}]
=\sum_{e_{I_1}\otimes e_{I_2}=e_{I}\shuffle e_{J}}\langle e_{I_1},\widehat\X_{t-\e}^{-1}\rangle
\vecsig(e_{I_2})^{\top }e^{(t-T) G^\top }\vecsig(\widehat{\X}_{T}).
$$
Note that since the integration's variable $t$ appears twice in this expression the time integral cannot be incorporated in the signature. 
 \end{remark}

\begin{remark}
     Observe that accounting for the scaling factor of 100, conventionally introduced by \href{https://www.cboe.com/tradable_products/vix}{CBOE}, the VIX index squared can equivalently be redefined (see e.g., \cite{RZ:21,R:22}) as
\begin{equation}\label{vix1}
	\VIX_{T}^{2}=\frac{100^2}{\Delta}    \mathbb{E}\left\lbrack\int_{T}^{T+\Delta}V_{t}\mathrm{d}t\lvert \mathcal{F}_{T}\right \rbrack,
\end{equation}
where $T,t>0$ and $\Delta=\frac{1}{12}$, i.e., approximately 30 days. 
Notice that since the expressions \eqref{initial_formula_vix2} and \eqref{vix1} differ only by a scaling factor, all the theoretical results of the present work hold true disregarding this scaling. For sake of simplicity we will always use \eqref{initial_formula_vix2}. 
We address the reader to Chapter 11 in \cite{G:11} for further details about the conventions of CBOE and its link with \eqref{initial_formula_vix}.
\end{remark}
We observe that the expression \eqref{matrix_Q} is computationally appealing as we can unpack the computation in three parts: compute the coordinate vector $\vecsig((e_I\shuffle e_J)\otimes e_0)$, which depends just on $d>0$ and $n>0$, calculate the matrix exponential of $G^\top$ which depends on the choice of the primary process $X$, and finally sample $\widehat{\X}_{T}^{2n+1}$ which is the only part that depends on the chosen maturity time $T$. {  In order to compute the matrix exponential we rely on \cite{BBC:19} who developed a Padé-insipired approximation to reduce the matrix multiplications, see also \cite{MVL:03} for further possible methods.  For  the implementation of the signature samples and its computational complexity  we refer to \cite{RG:18,KL:20}.}

\begin{remark}
    In general the computation of $G\in \R^{(d+1)_{2n+1}\times (d+1)_{2n+1}}$, even if done only once, can be costly. For this reason it can sometimes be interesting to avoid the last time integral and to consider the following equivalent expression of the matrix $Q$, for $|I|,|J|\le n$:
    \begin{equation}\label{Q_old}
            Q_{\mathscr{L}(I)\mathscr{L}(J)}(T,\Delta)=\bigg(\int_{T}^{T+\Delta}\vecsig(e_I\shuffle e_J)^\top e^{(t-T) G^\top}\d t\bigg)  \vecsig(\widehat{\X}_{T}^{2n}),
    \end{equation}
    where now $G\in \R^{(d+1)_{2n}\times (d+1)_{2n}}$ and where we use the fact that we can interchange the conditional expectation with the time-integral by dominated convergence.  As $G$ is singular, this time integral has to be computed numerically, in general. We propose here two possible methods that can be used in order to compute it efficiently.
\begin{enumerate}
    \item[(i)]\label{trapezoidal} \textbf{Approximation of the time integral}: e.g., via the trapezoidal rule also applied for $\VIX^2$ in \cite{BDM:21}. Hence if we consider the shuffled coordinates $\vecsig(e_I\shuffle e_J)$ of the exponential matrix we can use the symmetry of the shuffle to reduce the number of integrals to be solved from $((d+1)_{2n})^{2}$ to $\frac{(d+1)_{n}((d+1)_{n}+1)}{2}\cdot (d+1)_{2n}$, instead of $(d_{2n})^{2}$. Observe that for our integral the error of such an approximation is given by
    \begin{equation*}
        \textrm{Err}(N)=-\frac{\Delta^2}{12 N^2} G^\top( e^{ G^\top\Delta}-I)+\mathcal{O}(N^{-3}),
    \end{equation*}
    as $N\to+\infty$.
    As a further dimension reduction one can exploit the polynomial nature of $\widehat \X^n$ to obtain a matrix representation of its second order moments. Without entering into details, the matrix $G$ would then be the matrix corresponding to the linear operator acting on coefficients of polynomials of degree 2 in $\widehat \X^n$. Its dimension would thus be $\frac{(d+1)_{n}((d+1)_{n}+1)}{2}$.

    \item[(ii)]\label{taylor} \textbf{Approximation of the matrix exponential}: we can avoid to approximate the integral  by approximating the matrix exponential. Assuming that 
    \begin{equation}\label{spectral_radius_condition}
    \lim_{N\to+\infty} (G^{\top}\Delta)^{N}=0,
\end{equation}
this can for instance be done via its Taylor expansion:
\begin{equation*}
     \int_T^{T+\Delta}e^{tG^\top} \d t
     =\Delta\left(I+\frac{G^{\top}\Delta}{2\text{!}}+\cdots+\frac{(G^{\top}\Delta)^{N}}{N+1\text{!}}+ \mathcal{O}((G^{\top}\Delta)^{N+1})\right).
\end{equation*}
Observe that \eqref{spectral_radius_condition} holds true whenever the spectral radius, i.e., the maximal eigenvalue in absolute value, of the matrix $G^{\top}\Delta$ is less than 1 (see for instance Theorem~1.5 in \cite{QSS:10}). This requirement suggests that for numerical purposes the parameters of the primary process have to be chosen accordingly.

An interesting example is given by the case where $X$ is a $d$-dimensional correlated Brownian motion, as considered for instance in \cite{CGS:23}. In this case the process has no linear drift and the corresponding matrix $G$ is nilpotent, meaning that 
$G^{n}=0,
$ for each $n$ big enough.

In general, this Taylor approach permits to avoid a numerical integration and produces an   accurate approximation, allocating as few memory as possible.
\end{enumerate}
\end{remark}
\begin{remark}\label{remark:cholesky}
    A further step in the direction of a fast evaluation of $\VIX_T(\ell)$ can be taken by noticing that the matrix $Q$ in \eqref{matrix_Q} admits a Cholesky decomposition. Indeed since $Q$ is positive semidefinite and symmetric by the shuffle property, we know that there exists an upper triangular matrix $U_{T}\in \R^{(d+1)_{n}\times (d+1)_{n}}$, with possible zero elements on the diagonal, such that
    \begin{equation*}
        Q(T,\Delta)=U_{T} U_{T}^\top,
    \end{equation*}
    where for sake of simplicity we drop the dependence on $\Delta$ of $U_{T}$.
    Hence the evaluation of the $\VIX_{T}(\ell)$ reduces to 
    \begin{equation*}
        \VIX_{T}(\ell)=\sqrt{\frac{1}{\Delta}\ell^\top U_T U_T^\top \ell}=\frac{1}{\sqrt{\Delta}}\sqrt{(U_T^{\top}\ell)^2}=\frac{1}{\sqrt{\Delta}}\lVert U_T^\top \ell \lVert,
    \end{equation*}
    where here $\lVert\fdot\lVert$ denotes the Euclidean norm. We stress the fact that the Cholesky decomposition can be carried out offline, and the computational benefit  is immediate if several samples of the signature are considered.
\end{remark}
In the following remark we discuss a possible dimension reduction technique from which one can benefit computationally. {  Inspired by the approach of \cite{CGGOT:21a, CSBOHT:23}, we employ the Johnson-Lindenstrauss Lemma and consider a random projection of the signature.}  A first way to use this tool is the following.

\begin{remark}\label{remark:randomized_sig}
    Let $d_{<}\in\mathbb{N}$ be the dimension of the space to which we would like to project the signature of order $n>0$, such that $d_{<}\ll (d+1)_{n}$. Consider $A=(\alpha_{ij})\in \mathbb{R}^{d_{<}\times (d+1)_{n}}$, such that $\alpha_{ij}\sim \mathcal{N}(0,1/d_{<})$. 
    Then a possible way to employ the randomised signature is to parametrize the volatility process as follows,
    \begin{equation*}
        \sigma_{t}^{S}(\ell):= \tilde{\ell}^\top A\cdot\vecsig(\widehat{\X}_{t}^{n})
    \end{equation*}
    where with $\tilde{\ell}=\ell\cdot A^{\top}\in \R^{d_{<}}$ we denote the randomised parameters.
    Due to the linearity of integral and conditional expectation in \eqref{initial_formula_vix2} this modeling choice is equivalent to consider the randomised matrix $\widetilde{Q}\in \mathbb{R}^{d_{<}\times d_{<}}$ given by
    \begin{equation*}
        \widetilde{Q}_{\mathscr{L}(I)\mathscr{L}(J)}(T,\Delta):=A Q_{\mathscr{L}(I)\mathscr{L}(J)}(T,\Delta) A^{\top},
    \end{equation*}
   which leads to the following  representation of $\VIX_{T}(\ell)$:
    \begin{equation*}
        \VIX_{T}(\ell)=\sqrt{\frac{1}{\Delta}\tilde{\ell}^{\top}\tilde{Q}(T,\Delta)\tilde{\ell}}.
    \end{equation*}
    Observe that even if this procedure does not reduce the number iterated integrals to be computed  offline, it reduces the number of parameters to calibrate,  yielding in general to a faster evaluation of $\VIX_{T}(\ell)$. 
\end{remark}

\begin{remark}[Options on VIX]
Note that VIX options are written on VIX futures. The price process of a VIX future contract with maturity $T>0$, is given by 
\begin{equation}\label{eq:underlying-vix}
    F_t(T) := \mathbb{E}\left[ \VIX_T \lvert \mathcal{F}_t\right],
\end{equation}
and we write in particular $F(T):=F_0(T)$ to simplify notation. We point out that the VIX index does not pay dividends. The correct implied volatility for VIX options can then be obtained by inverting the Black-Scholes formula with interest rate $r>0$ and $e^{-r(T-t)}F_t(T)$ as underlying. When calibrating to VIX options, we stress that we additionally calibrate to VIX futures' prices, see Section~\ref{sec:calibration-vix-options}. This is important since   futures prices under the calibrated model are employed to compute its implied volatility surface. Including VIX futures in the calibration leads to a consistent model, both for VIX options and VIX futures, see e.g.~\cite{PPR:18,GLOW:20,G:20,G:21}.   Using  market prices of the VIX futures to invert the implied volatility surface could lead to inconsistencies if one would like to price further derivatives with the calibrated model.
\end{remark}

\subsection{Variance reduction for pricing VIX options}\label{sec:variance_reduction}
We here discuss variance reduction techniques (see e.g.~\cite{G:04}) that can speed up the calibration in the subsequently applied Monte Carlo approach further. The key idea is to introduce a control variate, namely an easy to evaluate random variable  $\Phi^{cv}$ such that given $T>0$ and $K>0$,
\begin{equation*}
    \mathbb{E}[\Phi^{cv}]=0, \qquad \qquad \Var \big((\VIX_{T}(\ell)-K)^{+}-\Phi^{cv}\big)< \Var \big((\VIX_{T}(\ell)-K)^{+}\big).
\end{equation*}
A well-working example of control variates used for pricing and calibrating neural SDE models can be found in  \cite{GSSSZ:20}, where $\Phi^{cv}$ is constructed from hedging strategies.

In the following we describe two possible choices of control variates, which consist of polynomials on $\VIX$ futures. We stress the fact that these can be seen as linear functions of the signature of the primary process $\widehat{X}$, hence they belong to the class of sig-payoffs, see \cite{LNP:20,PSS:20} and Section~4.2.2 in \cite{CGS:23}.
\begin{itemize}
    \item The first example is to employ the $\VIX$ squared as main ingredient, see for instance \cite{BDM:21, GG:22} for a similar choice within a rough Bergomi model for pricing $\VIX$ options. This is particularly easy to treat in our set up, as for any given maturity $T>0$ we  have
\begin{equation*}
    \E[\VIX_{T}^2(\ell)]=\frac 1 \Delta\ell^\top Q^{cv}(T,\Delta)\ell,
\end{equation*}
with $Q^{cv}(T,\Delta):=\E[Q_{\Lvar(I)\Lvar(J)}(T,\Delta)]$. By Theorem~\ref{th:VIXclosed} and Theorem~\ref{thm:poly} we indeed have
\begin{align*}
Q_{\Lvar(I)\Lvar(J)}^{cv}(T,\Delta)&=\vecsig((e_I\shuffle e_J)\otimes e_0)^{\top} (e^{\Delta G^\top }-\operatorname{Id})   \,\E[\vecsig(\widehat{\X}_{T}^{2n+1})] \\
&=\vecsig((e_I\shuffle e_J)\otimes e_0)^{\top} (e^{\Delta G^\top }-\operatorname{Id})e^{T G^\top}\vecsig(\widehat{\X}_{0}^{2n+1})\\
&=\vecsig((e_I\shuffle e_J)\otimes e_0)^{\top} (e^{(T+\Delta) G^\top }-e^{T G^\top})\vecsig(\widehat{\X}_{0}^{2n+1})
\end{align*}
where $G$ denotes the $(d+1)_{2n+1}$-dimensional matrix representative of the dual operator corresponding to $\widehat \X$ and  $\vecsig(\widehat{\X}_{0}^{2n+1})=e_\emptyset\in \R^{(d+1)_{2n+1}}$.

Observe that $Q^{cv}$ can again be computed  offline similarly to the matrix $Q$. Thus to compute the expectation of $\VIX_{T}^2(\ell)$ we only have to evaluate the previous quadratic form. To apply this now for pricing a call option with maturity $T>0$ and strike $K>0$,
we set
\begin{align*}
    \Phi^{cv}(\ell,T,K):&=c_{T,K}(\Delta\VIX_{T}^{2}(\ell)-\ell^{\top}Q^{cv}(T,\Delta)\ell),\\
    &=c_{T,K}(\ell^{\top}( Q(T,\Delta)-Q^{cv}(T,\Delta))\ell),
\end{align*}
where the constant $c_{T,K}$ maximizing the variance reduction is given by:
\begin{equation*}
    c_{T,K}^{\ast}=\frac{\text{Cov}((\VIX_{T}(\ell)-K)^{+},\ell^{\top}Q(T,\Delta)\ell)}{\text{Var}(\ell^{\top}Q(T,\Delta)\ell)}.
\end{equation*}
Notice that also in this case both $Q$ and $Q^{cv}$ satisfy the condition for applying the Cholesky decomposition, leading to a faster evaluation of the control variate as discussed in Remark~\ref{remark:cholesky}. Note that the Cholesky decomposition cannot be applied to $Q-Q^{cv}$, as this is in general an indefinite matrix.
\item As a second example we consider a generic polynomial in $\VIX^2$ as control variate by defining 
\begin{equation}\label{control-variate2}
    Y_{m}^{cv}(\ell,T,K) = \sum_{i=0}^{m} \alpha_i(T,K) (\VIX_{T}^2(\ell))^i
\end{equation}
where $\alpha_i(T,K)$ are chosen to approximate the payoff $(\VIX_T -K)^{+}$ with  strike price $K$ for some $m\geq 1$. The corresponding control-variate is then defined as $\Phi^{cv}(\ell, T, K):= c_{T,K} \left(Y_{m}^{cv}(\ell,T,K) - \mathbb{E}[Y_{m}^{cv}(\ell,T,K)]\right)$. Regarding the computational effort, let us remark the following.
\begin{enumerate}
\item $\VIX^2_T$ is computed anyway for every realisation and is hence already available, therefore the computation of $Y_{m}^{cv}(\ell,T,K)$ is not expensive.
\item It is possible to calculate $\mathbb{E}[Y_{m}^{cv}(\ell,T,K)]$ analytically relying on the moment formula, see Theorem~\ref{thm:poly}. 
\item The choice of $c_{T,K}\in\R$ is important and the optimal one, i.e., the one leading the highest variance reduction, is given by the following expression
\begin{equation*}
    c_{T,K}^{\ast}=\frac{\text{Cov}((\VIX_{T}(\ell)-K)^{+},Y_m^{cv}(\ell,T,K))}{\text{Var}(Y_m^{cv}(\ell,T,K))},
\end{equation*}
see for instance Section 4.1.1 in \cite{G:04}.
\end{enumerate}
We stress the fact that for $m=1$ the two control variates introduced coincide.
\end{itemize}

\subsection{Calibration to VIX options}\label{sec:calibration-vix-options}
In this section we focus on the calibration to VIX options only. 
Let $\mathcal{T}$ be a set of maturities and $\mathcal{K}$ a collection of strikes. Consider the model given by \eqref{model} and \eqref{sig-vol2}. 

Using Monte Carlo compute an approximation of option and futures' prices with $N_{MC}>0$ samples, i.e.
\begin{equation}\label{model_prices}
    \pi_{\VIX}^{\text{model}}(\ell,T,K)\approx\frac{e^{-rT}}{N_{MC}}\sum_{i=1}^{N_{MC}}(\VIX_{T}(\ell,\omega_{i})-K)^{+}, \qquad F_{\VIX}^{\text{model}}(\ell,T)\approx\frac{1}{N_{MC}}\sum_{i=1}^{N_{MC}}\VIX_{T}(\ell,\omega_{i}),
\end{equation}
where 
\begin{equation*}
    \VIX_{T}(\ell,\omega)=\sqrt{\frac{1}{\Delta}\ell^{\top}Q(T,\Delta)(\omega)\ell}=\frac{1}{\sqrt{\Delta}}\Vert U_{T}^{\top}(\omega)\ell\lVert.
\end{equation*}

It is crucial to note that in this framework a Monte Carlo approach is tractable since   for \emph{every} $\ell$ the same samples can be used. This means that we do not need to carry out any simulation during the optimization task. Indeed, 
 the matrix $Q$ can be simulated offline while only the products with $\ell\in \mathbb{R}^{(d+1)_{n}}$ enter in the calibration step.
 
Observe that an auxiliary  randomization can be employed in every optimisation step as discussed in Remark~\ref{remark:randomized_sig}. Moreover, if we want to use control variates to reduce the variance of the Monte Carlo estimator as described in the previous section, we would consider 
\begin{equation*}
\pi_{\VIX}^{\text{model}}(\ell,T,K)\approx\frac{e^{-rT}}{N_{VR}}\sum_{i=1}^{N_{VR}}(\VIX_{T}(\ell,\omega_{i})-K)^{+}-\Phi^{cv}(\ell, T,K)(\omega_{i}).
\end{equation*}
Due to the variance reduction the number of samples needed is $N_{VR}\ll N_{MC}$ and $\Phi^{cv}$ is as in Section~\ref{sec:variance_reduction}

  The calibration to VIX call options and the corresponding futures on $\mathcal{T}$ and $\mathcal{K}$ consists in minimizing the  functional 
\begin{align}\label{L_cv_vix}
        L_{\VIX}(\ell):=\sum_{T\in \Tcal, K\in\Kcal}&\mathcal{L}\left(\pi_{\VIX}^{\text{model}}(\ell,T,K),\pi_{\text{\VIX}}^{b,a}(T,K),\sigma_{\text{\VIX}}^{b,a}(T,K), F_{\VIX}^{\text{model}}(\ell,T),F_{\VIX}^{mkt}(T)\right),\end{align}
where $\mathcal{L}$ denotes a real-valued loss function,
$F_{\VIX}^{mkt}(T)$ the market's futures' prices and
\begin{equation*}
    \pi_{\text{\VIX}}^{b,a}(T,K):= \{\pi_{\text{\VIX}}^{mkt,b}(T,K),\pi_{\text{\VIX}}^{mkt,a}(T,K)\}, \quad 
    \sigma_{\text{\VIX}}^{b,a}(T,K):= \{\sigma_{\text{\VIX}}^{mkt,b}(T,K), \sigma_{\text{\VIX}}^{mkt,a}(T,K)\},
\end{equation*}
the market's option bid/ask prices $\pi_{\text{\VIX}}^{mkt,b}(T,K),\pi_{\text{\VIX}}^{mkt,a}(T,K)$, and  bid/ask implied volatilities $\sigma_{\text{\VIX}}^{mkt,b}(T,K), \sigma_{\text{\VIX}}^{mkt,a}(T,K)$, respectively. We will specify the choice of the function $\mathcal{L}$ in Section~\ref{numerical_result_vix} and Section~\ref{numerical_result_joint}. {  In both sections we employ the same optimizer, i.e.~BFGS with default parameters in ${\tt scipy.optimize}$.}

\begin{remark}[Initial guess search]\label{remark:initial_guess}
    Since within our model choice we are given  a quadratic function in $\ell$ to be minimized, a stochastic optimization with an initial guess is employed. In order to achieve faster convergence we consider an hyperparameter search to choose the starting parameters. The steps are outlined as follows.
    \begin{itemize}
        \item Find the magnitude of the coefficients returning Monte Carlo prices of the VIX options \emph{close} to the one observable on the market. To this extent we sample $N_{\ell}>0$ times parameters $\ell\in J_{i}=[-10^{-i},10^{-i}]^{(d+1)_{n}}$, for $i=1,\dots,m$ with $m>0$.
        \item Select $J^{\ast}\in (J_{i})_{i=1}^{m}$ such that
        \begin{equation*}
            J^{\ast}\in \argmin_{i:\ \ell\in J_{i}} L_{\VIX}(\ell).
        \end{equation*}
            \item Choose the initial guess to be
        \begin{equation*}
            \ell_{\text{initial}}\in \argmin_{ \ell \in J^{\ast}}L_{\VIX}(\ell).
        \end{equation*}
    \end{itemize}
\end{remark}

\subsubsection{Numerical results}\label{numerical_result_vix}
In the present section we report the results of the calibration to $\VIX$ options only. Here we consider call options written on the $\VIX$ on the trading day 02/06/2021, the same as in \cite{GLJ:22}. We stress that for such recent dates the bid-ask spreads for $\VIX$ options are rather tight with respect to older dated options as considered for instance in \cite{GJR:20,BPS:22}. The maturities are reported in the following table with the corresponding range of strikes (in percentage) with respect to the market's futures prices.
\vspace{0.2cm}
\begin{center}
\begin{tabular}{||c |c|c| c|c|c ||} 
 \hline
 $T_{1}=0.0383$ & $T_{2}=0.0767$ & $T_{3}=0.1342$ & $T_{4}=0.2108$ & $T_{5}=0.2875$ & $T_{6}=0.3833$\\ 
 \hline\hline
 (90$\%$,250$\%$) & (90$\%$,250$\%$) & (80$\%$,310$\%$) &  (80$\%$,300$\%$) & (75$\%$,395$\%$) & (80$\%$,405$\%$) \\ 
 \hline
\end{tabular}
\end{center}
\vspace{0.2cm}
We underline that the shortest maturity considered is 14 days. Regarding our modeling choice we fix $d=2$, $n=3$, which means to calibrate $40$ parameters. 
For $X$ we choose a $2$-dimensional Ornstein-Uhlenbeck processes, see Example~\ref{example_OU}, with the following (hyperparameter) configuration:
\begin{equation*}
    \kappa=(0.1,25)^\top, \qquad \theta=(0.1,4)^\top,\qquad \sigma=(0.7, 10)^\top, \qquad  \rho=	\begin{pmatrix}
	1 & -0.577 & 0.3 \\
	\cdot & 1 & -0.6\\
	\cdot & \cdot & 1 \\
	\end{pmatrix},
\end{equation*}
where we slightly abuse notation and denote by $\rho$ the correlation matrix of $(X, B)$. This implies that its last column describes the correlations of $X$ with the Brownian motion $B$ driving the price process $S$.

These hyperparameters are chosen randomly. Indeed, in spirit of reservoir computing, the idea is to view the OU-process' signature as (randomly chosen) reservoir, while a simple readout mechanism is trained, i.e. the linear function defined by $\{\ell_{I}:|I|\le n \}$, to map the state of the reservoir to the desired output (in our case instantaneous volatilities). However, it is of course possible to perform a hyperparameter optimization or to add expert knowledge, e.g.~that a high mean reversion rate is important.
We tried the latter by mimicking a rough or strong mean-reverting model as suggested in \cite{R:19,R:22}.

We also refer to Appendix~\ref{appendix} for  numerical results where we use only a correlated $2$-dimensional Brownian motion as primary process, which yields significantly worse results. 
 Note that the second simplest choice after Brownian motion within the family of polynomial diffusions (also with exact simulation) is the Ornstein-Uhlenbeck process which we thus applied. \\

Before stating the loss function $\mathcal{L}$ that we employed in the calibration task, let us make the following remark.

\begin{remark}\label{remark:greeks}
    Let $f:\R^{+}\times\R^{+}\to \R^+$ be the call pricing functional in the Black-Scholes model, depending on the volatility $\sigma^{\textrm{BS}}$ and the spot price $\xi$, i.e., $f: (\sigma^{\textrm{BS}},\xi)\mapsto f(\sigma^{\textrm{BS}},\xi)$. By Taylor expansion in an appropriate neighbourhood of $(\sigma^{mkt},\xi^{mkt})$ we obtain
    \begin{align*}
        f(\sigma^{\textrm{BS}},\xi)\approx& f(\sigma^{mkt},\xi^{mkt})+\frac{\partial f}{\partial \sigma }(\sigma^{mkt},\xi^{mkt})(\sigma^{\textrm{BS}}-\sigma^{mkt})+\frac{\partial f}{\partial \xi}(\sigma^{mkt},\xi^{mkt})(\xi-\xi^{mkt}),
    \end{align*}
    which equivalently gives
    \begin{align} \label{eq:taylor}
      (\sigma^{\textrm{BS}}-\sigma^{mkt})\approx&\frac{1}{\frac{\partial f}{\partial \sigma }(\sigma^{mkt},\xi^{mkt})}\bigl( f(\sigma^{\textrm{BS}},\xi)- f(\sigma^{mkt},\xi^{mkt})\bigr)-\frac{\frac{\partial f}{\partial \xi}(\sigma^{mkt},\xi^{mkt})}{\frac{\partial f}{\partial \sigma }(\sigma^{mkt},\xi^{mkt})}(\xi-\xi^{mkt}),
    \end{align}
    where we recognize for the derivatives with respect to $\sigma$ and $\xi$, the Greeks Vega and Delta, respectively. 
    \end{remark}
 Motivated by Remark~\ref{remark:greeks} we propose, for a fixed maturity and strike price, the following loss-function for $\beta\in \{0,1\}$
\begin{align}\label{greek_loss}
    &\mathcal{L}^{\beta}(\pi, \pi^{mkt,b,a}, \sigma^{mkt,b,a}, F, F^{mkt})=\\ \notag &\quad\left(\frac{\big(\beta\tilde{1}_{\{\pi \notin [\pi^{mkt,b},\pi^{mkt,a}]\}}+ (1-\beta)\big)\big|\pi-(\pi^{mkt,a}+\pi^{mkt,b})/2\big|+\big|\delta^{mkt}e^{-rT}(F-F^{mkt})\big|}{{\upsilon^{mkt} (\sigma^{mkt,a}-\sigma^{mkt,b})}}\right)^2,
\end{align}
where 
\begin{itemize}
    \item $\upsilon^{mkt}$ and $\delta^{mkt}$ denote the Vega and Delta of the option under the Black-Scholes model which depend on the maturity and on the strike price;
    \item $F$ and $F^{mkt}$ denote futures with maturity $T$ such that the variables $\xi, \xi^{mkt}$ appearing in Remark~\ref{remark:greeks} are  $\xi=e^{-rT}F$ and $\xi^{mkt}=e^{-rT}F^{mkt}$, respectively;
    \item $\tilde{1}_{\{x \notin [y^{b},y^{a}]\}}:=s(y^b-x)+s(x-y^a)$ for $s(x):=\frac{1}{2}\tanh(100x)+\frac{1}{2}$ a smooth version of the indicator function.
\end{itemize}
\begin{remark}
\begin{enumerate}
\item 
We observe that by Remark~\ref{remark:greeks} minimizing $\mathcal{L}^0$ is equivalent to minimizing an upper bound of the square of the right-hand side of \eqref{eq:taylor} normalized by the bid-ask spread of the implied volatilities. Note that we slightly abused notation, since $\upsilon^{mkt}$ and $\delta^{mkt}$ of course depend on the strike and the maturity.

\item Note that as $\ell \mapsto \VIX_T(\ell, \omega)=\frac{1}{\sqrt{\Delta}} \| U_T^{\top}\ell\|$ is convex and the call payoff is convex and increasing, the model option and futures prices are convex in $\ell$. If $\beta=0$ and the initialization of $\ell$ is such that both the model and futures prices are higher than the market ones, then we actually deal with a convex optimization problem.

\item If our aim does not consist in calibrating to the mid-price or mid-implied-volatility precisely, but we merely want to be within the bid-ask spreads we can set $\beta=1$

\end{enumerate}
\end{remark}

For the next calibration result we minimize $\Lcal^{1}$ as introduced above   with  $N_{MC}=80000$ Monte Carlo samples for the previous maturities and strikes.

\begin{center}
 \begin{figure}[H]
        \centering
		\includegraphics[scale=0.13]{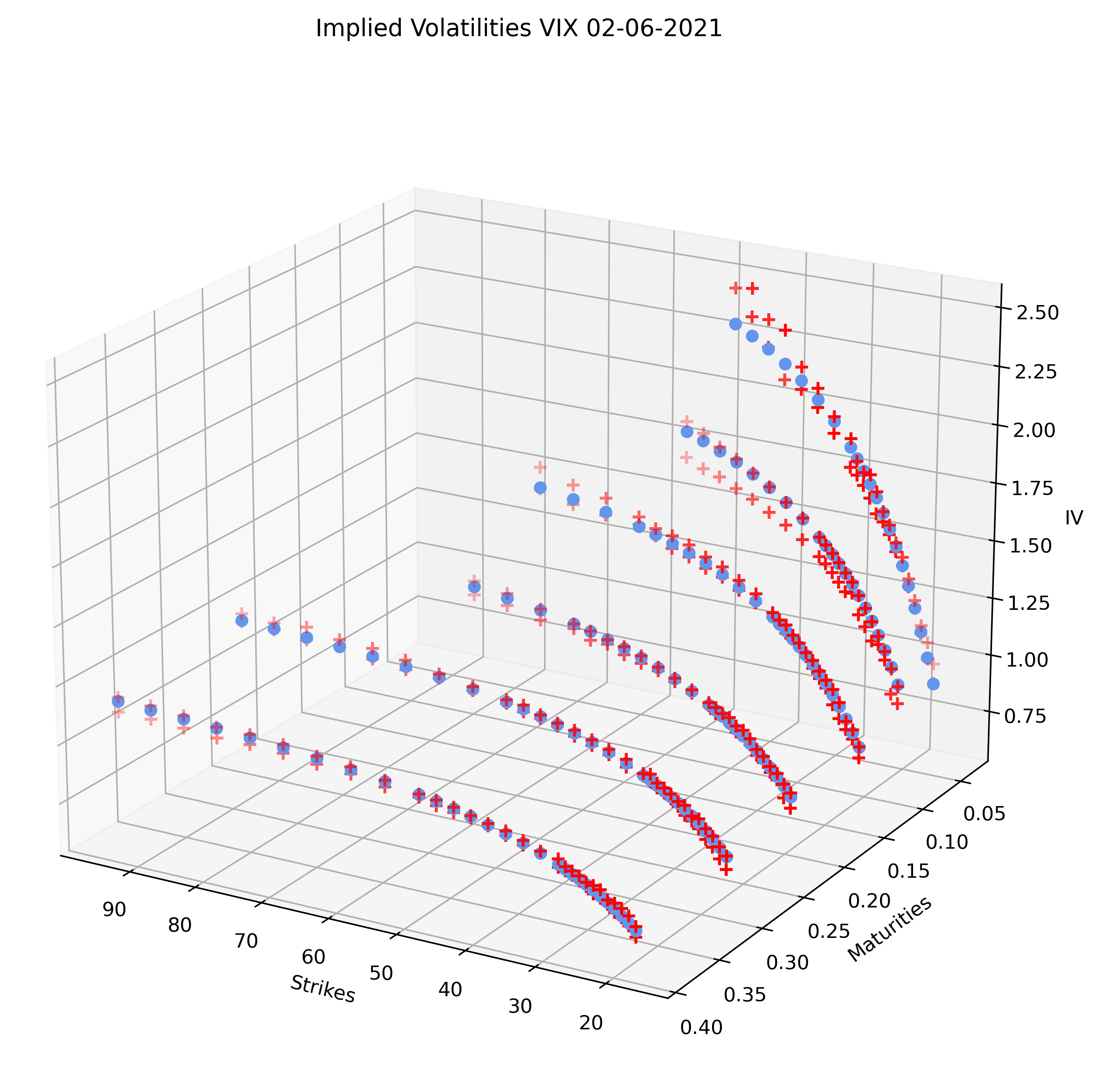}
		\caption{The red crosses denote the bid-ask spreads (of the implied volatilities) for each maturity, while the azure dots denote the calibrated implied volatilities of the model. On the $x$-axis we find the strikes and on the $y$-axis we find the maturities.}
		\label{fig:iv_calib1}
\end{figure}
\end{center}
We observe that the calibrated VIX smiles fall systematically in the bid-ask interval for all the maturities considered. We report additionally in the next tables the relative error between the market futures prices and the calibrated ones for each maturity, i.e.,
\begin{equation}\label{relative_AE}
    \varepsilon_{T}:=\frac{|F^{mkt}(T)-F_{\VIX}^{\text{model}}(\ell^{\ast},T)|}{F^{mkt}(T)},
\end{equation}
where $\ell^\ast\in \R^{40}$ denotes the calibrated parameters and here $F_{\VIX}^{\text{model}}(\ell^{\ast},T)$ stands for the calibrated future model price. In Figure~\ref{fig:FTS} we can find an illustration of the calibrated and the market futures' term structure.
\begin{center}
\begin{tabular}{||c |c|c ||} 
 \hline
 $T_{1}=0.0383$ & $T_{2}=0.0767$ & $T_{3}=0.1342$\\ 
 \hline\hline
 $\varepsilon_{T_{1}}=7.0 \times 10^{-6}$ & $\varepsilon_{T_{2}}=2.1 \times 10^{-3}$ & $\varepsilon_{T_{3}}=1.3 \times 10^{-5}$\\ 
 \hline
\end{tabular}
\end{center}
\begin{center}
\begin{tabular}{||c |c|c||} 
 \hline
 $T_{4}=0.2108$ & $T_{5}=0.2875$ & $T_{6}=0.3833$\\ 
 \hline\hline
  $\varepsilon_{T_{4}}=1.5  \times 10^{-4}$ & $\varepsilon_{T_{5}}=1.9  \times 10^{-6}$ & $\varepsilon_{T_{6}}=1.3  \times 10^{-6}$ \\ 
 \hline
\end{tabular}
\end{center}

\begin{center}
 \begin{figure}[H]
        \centering
		\includegraphics[scale=0.45]{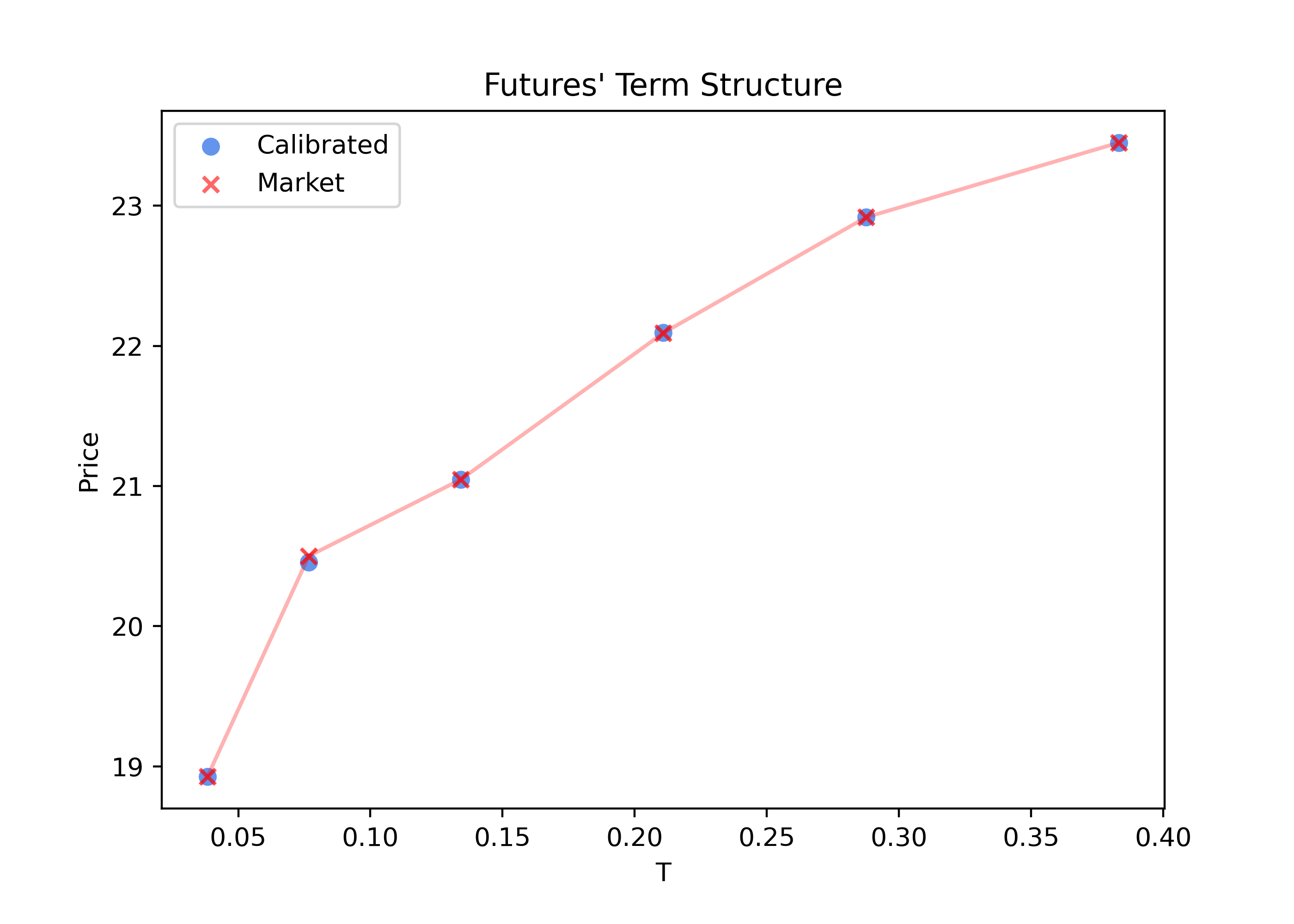}
		
		\caption{The blue circles denote the calibrated futures prices and the red crosses the futures prices on the market, in between a linear interpolation is reported.}
		\label{fig:FTS}
\end{figure}
\end{center}

\subsection{The case of time-varying parameters}\label{TV_VIX}
We now consider the case of maturity dependent parameters as for instance employed in \cite{GSSSZ:20,CGS:23}. Since it will be important later on to distinguish maturities of options written on the VIX index from  maturities of options written on the SPX index, we introduce the two sets $\mathcal{T}^{\VIX}$ and $\mathcal{T}^{\SPX}$. Let us fix here  $\mathcal{T}^{\VIX}=\{T_{1},\dots,T_{N}\}$, where $T_{i}<T_{i+1}$ for any in $i=1,\dots,N-1$ and denote by $\ell(T_{i})\in \mathbb{R}^{d_{n}}$ the parameters depending on the maturity $T_{i}>0$.  We set $T_0=0$ and $T_{N+1}=+ \infty$.
Then, we consider for any $t\geq 0$ the volatility process to be
\begin{equation}\label{variance_TV}
    \sigma_{t}^{S}(\ell)
    = \sum_{i=0}^{N} \sum_{|I|\le n}\ell_{I}(T_{i})1_{[T_{i}, T_{i+1})}(t) \langle e_{I},\widehat{\mathbb{X}}_{t}\rangle.
\end{equation}
Therefore the variance process reads as follows,
\begin{equation}\label{variance_process_tv}
    {V_{t}}(\ell)= \sum_{i=0}^{N} \sum_{|J|,|I|\le n}\ell_{I}(T_{i})\ell_{J}(T_{i})1_{[T_{i}, T_{i+1})}(t) \langle e_{I}\shuffle e_J,\widehat{\mathbb{X}}_{t}\rangle.
\end{equation}

\begin{assumption}\label{ass:distance_mat}
    Assume that for a set of maturities $\mathcal{T}^{\VIX}$ it holds that $|T_i-T_j|\geq \Delta$ for all $i\neq j$. 
\end{assumption}

\begin{proposition}\label{prop:vix_tv}
    Let $\mathcal{T}^{\VIX}$ be a set of maturities on the VIX index and let $Q(T,\tau)$ be the matrix  as defined in \eqref{matrix_Q} (here for general $\tau >0$ instead of $\Delta$).  
    Then, under \eqref{variance_process_tv} the VIX squared at time $T_i\in \mathcal{T}^{\VIX}$ is given by
        \begin{align*}
        \textrm{VIX}_{T_i}^2(\ell)&= \frac{1}{\Delta} \Big( \sum_{j=i}^{N} \ell(T_j)^{\top} \left({Q}(T_i, (T_{j+1}-T_i)\land \Delta) -  {Q}(T_i, (T_j-T_i)\land \Delta)\right)\ell(T_j) \Big).
    \end{align*}
    Note that, if $T_{i+1}-T_i > \Delta$ (which is in particular holds under Assumption~\ref{ass:distance_mat}) then,
    \begin{equation*}
        \textrm{VIX}_{T_i}^2(\ell)= \frac{1}{\Delta} \ell(T_i)^{\top} Q(T_i,\Delta) \ell(T_i).
    \end{equation*}
\end{proposition}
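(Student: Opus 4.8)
The plan is to reduce the claim to the basic representation \eqref{initial_formula_vix2}. By part~\ref{iti} of Theorem~\ref{th:VIXclosed} — whose proof is just an application of It\^o's formula and does not care about the time dependence of $\ell$ — it suffices to compute
\[
\VIX_{T_i}^2(\ell)=\frac{1}{\Delta}\,\mathbb{E}\Big[\int_{T_i}^{T_i+\Delta}V_t(\ell)\,\mathrm{d}t\ \Big|\ \mathcal{F}_{T_i}\Big].
\]
First I would note that the integrability condition \eqref{eq:integrability} holds here for exactly the reason given in Theorem~\ref{th:VIXclosed}\ref{itii}: by \eqref{variance_process_tv} the process $V_t(\ell)$ is, on each interval $[T_j,T_{j+1})$, a finite linear combination of signature components of $\widehat{\mathbb{X}}$, and continuous polynomial processes (Lemma~\ref{lem1}) have finite moments of every order; this also legitimises interchanging $\mathbb{E}[\,\cdot\mid\mathcal{F}_{T_i}]$ with the finite sums and with the time integral below. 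Next I would substitute \eqref{variance_process_tv} and split the window $[T_i,T_i+\Delta]$ along the deterministic grid: since $\{[T_j,T_{j+1})\}_{j=0}^N$ partitions $[0,\infty)$ and $T_i$ is the left endpoint of $[T_i,T_{i+1})$, only the indices $j\ge i$ contribute and
\[
\int_{T_i}^{T_i+\Delta}V_t(\ell)\,\mathrm{d}t=\sum_{j=i}^{N}\sum_{|I|,|J|\le n}\ell_I(T_j)\ell_J(T_j)\int_{T_j}^{(T_i+\Delta)\wedge T_{j+1}}\langle e_I\shuffle e_J,\widehat{\mathbb{X}}_t\rangle\,\mathrm{d}t,
\]
with the convention that the inner integral is $0$ whenever $T_j>T_i+\Delta$ (these indices are included only to make the formula uniform).

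The second step is to rewrite each slice as a difference of integrals of the canonical form $\int_{T_i}^{T_i+\tau}$, for which conditional expectations are, by construction in the proof of Theorem~\ref{th:VIXclosed} (with a general length $\tau>0$ in place of $\Delta$), exactly $Q(T_i,\tau)$, i.e.
\[
\mathbb{E}\Big[\int_{T_i}^{T_i+\tau}\langle e_I\shuffle e_J,\widehat{\mathbb{X}}_t\rangle\,\mathrm{d}t\ \Big|\ \mathcal{F}_{T_i}\Big]=Q_{\mathscr{L}(I)\mathscr{L}(J)}(T_i,\tau).
\]
Concretely, I would establish the pathwise identity
\[
\int_{T_j}^{(T_i+\Delta)\wedge T_{j+1}}\langle e_I\shuffle e_J,\widehat{\mathbb{X}}_t\rangle\,\mathrm{d}t=\int_{T_i}^{T_i+((T_{j+1}-T_i)\wedge\Delta)}\langle e_I\shuffle e_J,\widehat{\mathbb{X}}_t\rangle\,\mathrm{d}t-\int_{T_i}^{T_i+((T_j-T_i)\wedge\Delta)}\langle e_I\shuffle e_J,\widehat{\mathbb{X}}_t\rangle\,\mathrm{d}t,
\]
valid for all $j\ge i$ together with the zero convention. \textbf{This case check is the only genuine obstacle in the proof.} One has to distinguish the three positions of $[T_j,T_{j+1})$ relative to $[T_i,T_i+\Delta]$: if $T_{j+1}\le T_i+\Delta$ both minima are inactive and the right-hand side equals $\int_{T_j}^{T_{j+1}}$; if $T_j\le T_i+\Delta<T_{j+1}$ only the first minimum is active and one gets $\int_{T_j}^{T_i+\Delta}$; and if $T_i+\Delta<T_j$ both minima equal $\Delta$, so the right-hand side vanishes, matching the zero convention. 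The degenerate endpoint $j=i$ is covered since $(T_i-T_i)\wedge\Delta=0$ and $Q(T_i,0)=0$.

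Finally I would assemble the pieces: taking conditional expectations in the split and using the slice identity, the double sum over $I,J$ collapses to the quadratic form $\ell(T_j)^{\top}(\,\cdot\,)\ell(T_j)$, giving
\[
\VIX_{T_i}^2(\ell)=\frac{1}{\Delta}\sum_{j=i}^{N}\ell(T_j)^{\top}\big(Q(T_i,(T_{j+1}-T_i)\wedge\Delta)-Q(T_i,(T_j-T_i)\wedge\Delta)\big)\ell(T_j),
\]
which is the asserted formula. For the simplified expression, if $T_{i+1}-T_i>\Delta$ (in particular under Assumption~\ref{ass:distance_mat}) then every $j>i$ satisfies $T_j-T_i\ge T_{i+1}-T_i>\Delta$, hence $(T_{j+1}-T_i)\wedge\Delta=(T_j-T_i)\wedge\Delta=\Delta$ and all terms with $j>i$ cancel; the surviving $j=i$ term equals $\ell(T_i)^{\top}\big(Q(T_i,\Delta)-Q(T_i,0)\big)\ell(T_i)=\ell(T_i)^{\top}Q(T_i,\Delta)\ell(T_i)$, which is the last display.
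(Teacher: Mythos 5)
Your proof is correct and follows essentially the same route as the paper's: substitute the time-varying variance \eqref{variance_process_tv} into $\VIX_{T_i}^2(\ell)=\frac1\Delta\E[\int_{T_i}^{T_i+\Delta}V_t\,\d t\mid\Fcal_{T_i}]$, split the time integral along the maturity grid, rewrite each slice as a difference of two integrals starting at $T_i$, and identify each with $Q(T_i,\tau)$. The only cosmetic difference is that the paper writes the lower limit of each slice as $T_j\wedge(T_i+\Delta)$ (so the degenerate case is absorbed automatically), whereas you use $T_j$ together with a stated zero convention and then verify the three positions of $[T_j,T_{j+1})$ relative to $[T_i,T_i+\Delta]$ explicitly; the substance is identical.
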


\begin{proof}
By the definition of the VIX, it holds that
\begin{align*}
\VIX_{T_i}^2(\ell)
= &\frac{1}{\Delta} \mathbb{E}\biggl[ \int_{T_i}^{T_i+ \Delta} \sum_{j=i}^N \sum_{|J|,|I|\le n}\ell_{I}(T_{j})\ell_{J}(T_{j})1_{[T_{j}, T_{j+1})}(t)  \langle e_{I}\shuffle e_J,\widehat{\mathbb{X}}_{t}\rangle  \d t\bigg\lvert \mathcal{F}_{T_i}\biggr] \\
=&
\frac{1}{\Delta} \sum_{j=i}^N\sum_{ |J|, |I|\le n}\ell_{I}(T_{j})\ell_{J}(T_{j})\mathbb{E}\biggl[ \int_{T_{j}\land (T_i+\Delta) }^{T_{j+1}\land (T_i+\Delta)} \langle e_{I}\shuffle e_J,\widehat{\mathbb{X}}_{t}\rangle  \d t\bigg\lvert \mathcal{F}_{T_i}\biggr]\\
=& \frac1{\Delta}  \sum_{j=i}^N\sum_{ |J|, |I|\le n}\ell_{I}(T_{j})\ell_{J}(T_{j})  \Bigg(\mathbb{E}\biggl[\int_{T_i}^{T_{j+1}\land (T_i+\Delta)}\langle e_{I}\shuffle e_J,\widehat{\mathbb{X}}_{t}\rangle\d t\bigg\lvert \mathcal{F}_{T_i}\biggr]\\
&\qquad-\mathbb{E}\biggl[\int_{T_i}^{T_{j}\land (T_i+\Delta)}\langle e_{I}\shuffle e_J,\widehat{\mathbb{X}}_{t}\rangle\d t\bigg\lvert \mathcal{F}_{T_i}\biggr]\Bigg)
\end{align*}
and hence the first statement follows by the definition of $Q$ in \eqref{matrix_Q}.
\end{proof}

Notice that also in the case of Proposition~\ref{prop:vix_tv}, Remark~\ref{remark:cholesky} applies.

\section{SPX as a signature-based model}\label{spx_sigsde}

The goal of this section is to express the discounted, dividend-adjusted price of the SPX, modeled via \eqref{model}-\eqref{sig-vol2}
$$
\mathrm{d}S_{t}(\ell)=S_{t}(\ell)\sigma_t^S(\ell)\mathrm{d}B_{t}, 
$$
 in terms of the
signature of $(t, X_t , B_t)_{t \geq 0}$, allowing again to precompute  its samples and use the same ones  for every $\ell$.
This is in the same spirit as in \cite{CGS:23}, even though there the asset price was directly modeled as  linear function of the signature of some primary process.

Recall that by \eqref{sig-vol2} 
 $\sigma^S$ is parametrized as follows
$$
	\sigma_t^S(\ell):=\ell_{\emptyset}+\sum_{0<\lvert I \lvert \le n } \ell_{I} \langle e_{I},\widehat{\mathbb{X}}_{t}\rangle,
$$
where $\widehat{X}_{t}=(t,X_t)$ with $X$ a $d$-dimensional polynomial diffusion $X$ in the sense of Definition~\ref{def1}. Before addressing a more tractable expression for $S$, that allows to avoid (Euler) simulation schemes, we recall the following well-known integrability result.
\begin{lemma}
Assume that  $\mathbb{E}[S_0]< \infty$. Then, the process $(S_t)_{t\geq0}$ is a (non-negative) supermartingale and in particular $\E[S_t]<\infty$ for each $t\geq 0$. 
\end{lemma}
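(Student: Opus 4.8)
The plan is to show that $(S_t)_{t\geq 0}$ is a non-negative local martingale, hence a supermartingale by Fatou, and then deduce finiteness of expectations from the supermartingale property together with the integrability of $S_0$.

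First I would observe that $S$ solves the linear SDE $\d S_t = S_t \sigma_t^S \d B_t$ with $S_0 \in \R^+$, so $S$ can be written explicitly as the stochastic exponential
\begin{equation*}
S_t = S_0 \exp\Big( \int_0^t \sigma_s^S \d B_s - \tfrac12 \int_0^t (\sigma_s^S)^2 \d s \Big),
\end{equation*}
provided $\int_0^t (\sigma_s^S)^2 \d s < \infty$ almost surely. Since $\sigma_s^S = \sigma_s^S(\ell)$ is a linear functional of the (continuous, hence locally bounded) signature path $\widehat{\X}_s$, the integral $\int_0^t (\sigma_s^S)^2 \d s$ is indeed finite for each $t$ almost surely, so $S$ is a well-defined non-negative process and in particular a non-negative local martingale (the stochastic exponential of a continuous local martingale is a non-negative local martingale). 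Alternatively, and perhaps cleaner, one simply invokes that any non-negative process satisfying $\d S_t = S_t \sigma_t^S \d B_t$ is a non-negative local martingale directly.

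Next I would use the standard fact that a non-negative local martingale is a supermartingale: if $(\tau_k)$ is a localizing sequence of stopping times with $\tau_k \uparrow \infty$, then for $s \leq t$,
\begin{equation*}
\E[S_t \mid \Fcal_s] = \E\big[\liminf_k S_{t \wedge \tau_k} \mid \Fcal_s\big] \leq \liminf_k \E[S_{t\wedge \tau_k}\mid \Fcal_s] = \liminf_k S_{s \wedge \tau_k} = S_s,
\end{equation*}
by conditional Fatou's lemma, the martingale property of the stopped process, and the fact that $S_{t\wedge\tau_k} \geq 0$. Taking $s = 0$ and using $\E[S_0] < \infty$ gives $\E[S_t] \leq \E[S_0] < \infty$ for every $t \geq 0$, which is the claim; combined with non-negativity this also confirms $(S_t)_{t\geq 0}$ is a genuine (integrable) supermartingale.

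I do not expect a serious obstacle here, since this is the classical argument. The only point requiring a word of care is ensuring $S$ is genuinely well-defined (i.e.\ $\int_0^t (\sigma_s^S)^2\d s < \infty$ a.s.), which follows from continuity of the signature path; everything else is the textbook "non-negative local martingale $\Rightarrow$ supermartingale" reasoning.
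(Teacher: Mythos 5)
Your proof is correct and takes essentially the same route as the paper: identify $S$ as the stochastic exponential of $\int_0^\cdot \sigma_s^S\,\d B_s$, hence a non-negative local martingale, then invoke Fatou to conclude it is a supermartingale, which gives $\E[S_t]\le\E[S_0]<\infty$. You simply spell out the localization/Fatou step and the well-definedness of $\int_0^t(\sigma_s^S)^2\,\d s$ in more detail than the paper does.
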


\begin{proof}
    Note that $S_t= S_0\mathcal{E}\left( \int_0^{\cdot} \sigma_s^S \d B_s\right)_{t}$ for all $t\ge0$. Moreover $(\int_0^{t} \sigma_s^S \d B_s)_{t\ge0}$ is a local martingale and hence, by the properties of the stochastic exponential, $S_t$ is a non-negative local martingale. 
    It follows from Fatou's Lemma that non-negative local martingales are supermartingales. 
\end{proof}
In the following we suppose without loss of generality that $S_{0}=1$.
\begin{remark}
    Recall that if Novikov's condition is satisfied, then a stochastic exponential of the form $S_t= \mathcal{E}\left( \int_0^{\cdot} \sigma_s^S \d B_s\right)_{t}$ for $t\in [0,T]$ is a true martingale. For $\sigma_s^S$ as in \eqref{sig-vol2},  such condition reads 
    \begin{equation*}
        \E\biggl[\exp\biggl\{\frac{1}{2}\int_{0}^{T}V_{t}(\ell)\d t\biggr\}\biggr]<+\infty.
    \end{equation*}
    Observe that
    \begin{align}\label{novikov}\nonumber
         \E\biggl[\exp\biggl\{\frac{1}{2}\int_{0}^{T}V_{t}(\ell)\d t  \biggr\}\biggr]&=\E\biggl[\exp\biggl\{\frac{1}{2}\sum_{|I|,|J|\le n}\ell_{I}\ell_{J}\int_0^T\langle e_{I}\shuffle e_J,\widehat{\X}_{t}\rangle \d t\biggr\}\biggr]\\
         &=\E\biggl[\exp\biggl\{\frac{1}{2}\ell^\top Q^{0}(T)\ell\biggr\}\biggr],
    \end{align}
   where for $\mathscr{L}:\{I: |I|\le n\}\to \{1,\dots, (d+1)_{n}\},$
    \begin{equation*}
        Q_{\mathscr{L}(I)\mathscr{L}(J)}^{0}(T):=\langle (e_{I}\shuffle e_{J})\otimes e_{0},\widehat{\X}_{T}\rangle.
    \end{equation*}
    We point out that the previous condition is not necessarily satisfied for all $\ell\in \R^{(d+1)_{n}}$. Indeed, let us consider $X$ to be a one-dimensional Brownian motion and choose $\ell$ such that the only non trivial component is the last one, i.e., 
    \begin{align}
        \ell_{I}:=\begin{cases}
        c \in \R, \quad &\text{if}\  I=(1,\dots,1),\ |I|=n,\\
        0, \quad &\text{otherwise}.
        \end{cases}
    \end{align}
    Then, \eqref{novikov} translates into 
    \begin{equation*}
        \E\biggl[\exp\biggl\{\frac{c^2}{2}\int_0^T \frac{2n\text{!}}{n\text{!}n\text{!}} \frac{X_{t}^{2n}}{2n \text{!}}\mathrm{d}t\biggr\}\biggr]=\E\biggl[\exp\biggl \{\frac{c^2}{2(n\text{!})^2}\int_0^T   X_{t}^{2n} \mathrm{d}t\biggr \}\biggr],
    \end{equation*}
    which is not finite in general, e.g. if $n=2$, $c=\sqrt{2}(n\text{!})$ then by Jensen's inequality it follows
    \begin{equation*}
        \E\biggl[\exp\biggl \{\int_0^T   X_{t}^{4} \mathrm{d}t\biggr \}\biggr]\ge \E\biggl[  \frac{1}{T} \int_0^T e^{ TX_{t}^{4}}\mathrm{d}t\biggr]=\frac{1}{T}\int_{0}^{T}\E[e^{TX_{t}^{4}}]\d t =+\infty.
    \end{equation*}
\end{remark}

{ 
\begin{remark}
As well known from the results of \cite{DS:94}
 the existence of an equivalent local martingale measure is sufficient for NFLVR, and  risk neutral pricing works, too. This is important when the process $S(\ell)$ is a non-negative true \emph{local} martingale such that $\E[S_T(\ell)]<S_{0}$. If one could go short 
in the asset, and thus gets $S_0(\ell)$,
and long in the `call option with strike 0' (corresponding to the payoff $S_T(\ell)$) with price $\E[S_T(\ell)] < S_0(\ell)$, an arbitrage would be created. But the latter is simply not allowed as trading strategy under NFLVR.
We address the reader to \cite{KKN:15} for further details.
\end{remark}
}

The key idea is to rewrite \eqref{model} as a type of signature-based model in sense of \cite{CGS:23} including $B=(B_{t})_{t\ge0}$ as part of the primary process. This is possible since It\^o integrals with respect to primary process' components can be rewritten as linear functions of the signature of the primary process itself. Before stating the result we need to introduce some auxiliary notation. We denote by $(Z_t)_{t\geq0}$ the $(d+1)$-dimensional process given by 
\begin{equation}\label{eqZ}
Z_t=(X_t,B_t),
\end{equation}
by $(\widehat Z_t)_{t\geq0}$ its time extension, and by $(\widehat \Z_t)_{t\geq0}$ the signature of $(\widehat Z_t)_{t\geq0}$. With a slight abuse of notation we again denote by $\rho$ the correlation matrix process between the components of $Z$. Observe that $\rho$ encodes in particular the correlation between $X$ and $B$. Finally, we let $a_{ij}^{J}\in\R$ denote the coefficients satisfying 
$$\d[Z^i,Z^j]_t=\sum_{|J|\leq 2} a_{ij}^J\langle e_J,\widehat \Z_t\rangle \d t,$$
for each $i,j\in\{1,\ldots,d+1\}$.

\begin{proposition}\label{sig-model-spx}
    Let $S=(S_{t})_{t\ge0}$ satisfy \eqref{model} with $S_0=1$, and $\sigma^S=(\sigma_{t}^S)_{t\ge0}$ satisfy \eqref{sig-vol2}.  Then,
    \begin{equation}\label{logsig-model-representation-SPX}
        \log(S_{t}(\ell))=-\frac{1}{2}\ell^\top Q^{0}(t)\ell+\sum_{|I|\le n}\ell_{I}\langle \tilde{e}_{I}^{B},\widehat{\mathbb{Z}}_{t}\rangle,
    \end{equation}
    where 
 \begin{equation*}
     \tilde e_\emptyset^B:=e_{d+1}, \qquad \tilde e_I^B:=e_{I}\otimes e_{d+1}- \sum_{|{ J}|\leq 2}\frac {a_{i_{|{ I}|}(d+1)}^J} 2(e_{{I}'}\shuffle e_{ J})\otimes e_0 ,
 \end{equation*} 
 for each $|I|>0$, and the components of the matrix $Q^{0}(t)\in \mathbb{R}^{(d+1)_{n}\times (d+1)_{n}}$ are given by
\begin{equation*}
    Q^{0}_{\mathscr{L}(I)\mathscr{L}(J)}(t)=\langle (e_{I}\shuffle e_{J})\otimes e_{0},\widehat{\mathbb{X}}_{t}\rangle,
\end{equation*}
for a labeling function $\mathscr{L}:\{I: |I|\le n\}\to \{1,\dots,(d+1)_{n}\}$.
\end{proposition}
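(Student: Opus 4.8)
The plan is to apply It\^o's formula to $\log S$ and then identify each of the two resulting integrals as a linear functional of $\widehat{\mathbb{Z}}_t$, using only the definition of the signature, the shuffle property (Proposition~\ref{shuffle-property}) and Assumption~\ref{ass1}. Since $\sigma^S$ is a continuous, hence locally bounded, process, $\int_0^t V_s(\ell)\,\d s<\infty$ almost surely, and $S_t=\mathcal{E}\big(\int_0^{\cdot}\sigma^S_s\,\d B_s\big)_t>0$, so It\^o's formula gives, for $S_0=1$,
\begin{equation*}
    \log(S_t(\ell))=-\frac12\int_0^t V_s(\ell)\,\d s+\int_0^t \sigma_s^S(\ell)\,\d B_s .
\end{equation*}

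For the first integral I would write $\sigma^S_s(\ell)=\langle\ell,\widehat{\mathbb{X}}_s\rangle$ with $\ell\in T^{(n)}(\R^{d+1})$, so that by Proposition~\ref{shuffle-property}, $V_s(\ell)=\langle\ell\shuffle\ell,\widehat{\mathbb{X}}_s\rangle=\sum_{|I|,|J|\le n}\ell_I\ell_J\langle e_I\shuffle e_J,\widehat{\mathbb{X}}_s\rangle$. Then I use the elementary fact that integrating a signature component against time appends the letter $0$: since $\widehat X^0_s=s$ has finite variation, $\int_0^t\langle e_K,\widehat{\mathbb{X}}_s\rangle\,\d s=\int_0^t\langle e_K,\widehat{\mathbb{X}}_s\rangle\circ\d\widehat X^0_s=\langle e_K\otimes e_0,\widehat{\mathbb{X}}_t\rangle$ directly from the definition of the signature. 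This yields $\int_0^t V_s(\ell)\,\d s=\sum_{|I|,|J|\le n}\ell_I\ell_J\langle(e_I\shuffle e_J)\otimes e_0,\widehat{\mathbb{X}}_t\rangle=\ell^\top Q^0(t)\ell$ with $Q^0$ as in the statement.

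The main work is the stochastic integral $\int_0^t\sigma_s^S(\ell)\,\d B_s=\sum_{|I|\le n}\ell_I\int_0^t\langle e_I,\widehat{\mathbb{X}}_s\rangle\,\d B_s$. First, for a word $I$ with letters in $\{0,\dots,d\}$ one has $\langle e_I,\widehat{\mathbb{X}}_s\rangle=\langle e_I,\widehat{\mathbb{Z}}_s\rangle$, since the iterated integrals defining the two quantities only involve the components $\widehat Z^0,\dots,\widehat Z^d=\widehat X^0,\dots,\widehat X^d$ (a short induction on $|I|$). Writing $B=\widehat Z^{d+1}$ and converting the It\^o integral into a Stratonovich one,
\begin{equation*}
    \int_0^t\langle e_I,\widehat{\mathbb{Z}}_s\rangle\,\d B_s=\int_0^t\langle e_I,\widehat{\mathbb{Z}}_s\rangle\circ\d\widehat Z^{d+1}_s-\tfrac12\big[\langle e_I,\widehat{\mathbb{Z}}_{\cdot}\rangle,\,\widehat Z^{d+1}\big]_t ,
\end{equation*}
the Stratonovich integral equals $\langle e_I\otimes e_{d+1},\widehat{\mathbb{Z}}_t\rangle$ by the definition of the signature. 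For the bracket, from $\d\langle e_I,\widehat{\mathbb{Z}}_s\rangle=\langle e_{I'},\widehat{\mathbb{Z}}_s\rangle\circ\d\widehat Z^{i_{|I|}}_s$ one sees that only its diffusive part contributes, so $[\langle e_I,\widehat{\mathbb{Z}}_{\cdot}\rangle,\widehat Z^{d+1}]_t=\int_0^t\langle e_{I'},\widehat{\mathbb{Z}}_s\rangle\,\d[\widehat Z^{i_{|I|}},\widehat Z^{d+1}]_s$; this vanishes when $i_{|I|}=0$ (as $[t,B]=0$), and otherwise Assumption~\ref{ass1} turns $\d[\widehat Z^{i_{|I|}},\widehat Z^{d+1}]_s$ into $\sum_{|J|\le m}a^J_{i_{|I|}(d+1)}\langle e_J,\widehat{\mathbb{Z}}_s\rangle\,\d s$. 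Applying once more the shuffle property and the ``append $e_0$'' rule gives $[\langle e_I,\widehat{\mathbb{Z}}_{\cdot}\rangle,\widehat Z^{d+1}]_t=\sum_{|J|\le m}a^J_{i_{|I|}(d+1)}\langle(e_{I'}\shuffle e_J)\otimes e_0,\widehat{\mathbb{Z}}_t\rangle$, hence $\int_0^t\langle e_I,\widehat{\mathbb{Z}}_s\rangle\,\d B_s=\langle\tilde e_I^B,\widehat{\mathbb{Z}}_t\rangle$ for $|I|>0$, while for $I=\emptyset$ the integral is simply $B_t=\langle e_{d+1},\widehat{\mathbb{Z}}_t\rangle=\langle\tilde e_\emptyset^B,\widehat{\mathbb{Z}}_t\rangle$. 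Summing over $I$ and combining with the previous step yields \eqref{logsig-model-representation-SPX}.

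I expect the only genuinely delicate point to be the computation of the quadratic covariation $[\langle e_I,\widehat{\mathbb{Z}}_{\cdot}\rangle,B]$: one must argue carefully that the finite-variation parts of the semimartingale $\langle e_I,\widehat{\mathbb{Z}}_{\cdot}\rangle$ (the Stratonovich correction together with any drift) do not contribute, so that the bracket is entirely governed by $[\widehat Z^{i_{|I|}},B]$, which is exactly the quantity Assumption~\ref{ass1} is designed to linearise. Everything else is bookkeeping with the shuffle product and the observation that integration against time appends the letter $0$ to a word.
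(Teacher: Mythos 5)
Your proof is correct and follows the same overall structure as the paper's: apply It\^o's formula to $\log S$, then identify the two resulting integrals as linear functionals of the signature $\widehat{\mathbb{Z}}_t$. The only real difference is that the paper outsources the key identity $\int_0^t\langle e_I,\widehat{\mathbb{X}}_s\rangle\,\d B_s=\langle\tilde e_I^B,\widehat{\mathbb{Z}}_t\rangle$ to Lemma~3.10 of \cite{CGS:22}, whereas you rederive it in-line via the It\^o--Stratonovich correction, the definition of the signature, the shuffle property, and Assumption~\ref{ass1}. Your computation of the bracket $[\langle e_I,\widehat{\mathbb{Z}}_\cdot\rangle,B]$ --- noting that only the martingale part of $\langle e_I,\widehat{\mathbb{Z}}_\cdot\rangle$ contributes, that the term vanishes when $i_{|I|}=0$, and that otherwise Assumption~\ref{ass1} linearises $\d[\widehat Z^{i_{|I|}},B]_s$ --- is precisely the content of the cited lemma, so what your version buys is self-containedness rather than a new idea.
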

\begin{proof}
   We can compute
    \begin{align*}
         \log(S_{t}(\ell))
         &=-\frac{1}{2}  \int_0^tV_s(\ell)\d s + \int_0^t\sigma_s^S(\ell)\d B_{s}\\
         &=-\frac{1}{2}\sum_{|I|,|J|\le n}\ell_{I}\ell_{J}\int_{0}^{t}\langle e_{I}\shuffle e_{J},\widehat{\mathbb{X}}_{s}\rangle\d s
         + \sum_{|I|\le n}\ell_{I}\int_{0}^{t}\langle e_{I},\widehat{\mathbb{X}}_{s}\rangle\mathrm{d}B_{s}\\
         &\stackrel{(\ast)}=-\frac{1}{2}\sum_{|I|,|J|\le n}\ell_{I}\ell_{J}\langle( e_{I}\shuffle e_{J})\otimes e_{0},\widehat{\mathbb{X}}_{t}\rangle +\sum_{|I|\le n}\ell_{I}\langle \tilde{e}_{I}^{B},\widehat{\mathbb{Z}}_{t}\rangle\\
          &=-\frac{1}{2}\ell^\top Q^{0}(t)\ell+\sum_{|I|\le n}\ell_{I}\langle \tilde{e}_{I}^{B},\widehat{\mathbb{Z}}_{t}\rangle,
    \end{align*}
where for $(\ast)$ we used that $\int_{0}^{t}\langle e_{I},\widehat{\mathbb{X}}_{s}\rangle\mathrm{d}B_{s}=\langle \tilde{e}_{I}^{B}, \widehat{\mathbb{Z}}_{t}\rangle$ by Lemma~3.10 in \cite{CGS:23}. 
\end{proof}

\begin{remark}
Consider again the model described in Remark~\ref{rem:altmod1}. Then the results of Proposition~\ref{sig-model-spx} still hold with 
$$Q^{0}_{\mathscr{L}(I)\mathscr{L}(J)}(t):=\int_0^t\langle e_{I}\shuffle e_{J},\widehat\X_{s-\e,s}\rangle
\d s,$$
and $\int_0^t \langle e_I,\widehat \X_{s-\e,s}\rangle \d B_s$ instead of $\langle \tilde{e}_{I}^{B},\widehat{\mathbb{Z}}_{t}\rangle$. Since the proof follows closely the proof of the original result, we omit it.
 \end{remark}

\begin{remark}\label{remark_Q0}
    \begin{itemize}
    \item Observe that since the matrix $(\langle e_{I}\shuffle e_{J},\widehat{\X}_{t}\rangle)_{|I|,|J|\le n}$ is positive semidefinite, by monotonicity of the time integral on $[0,t]$ for some $t>0$, we also have
\begin{equation*}
    \ell^{\top}Q^{0}(t)\ell \ge 0,
\end{equation*}
for all $\ell \in \mathbb{R}^{(d+1)_{n}}$. This means that for any $t>0$, we can rewrite the log-price as 
\begin{equation*}
    \log(S_{t})=-\frac{1}{2}\lVert (U_{t}^{0})^ {\top}\ell\lVert^2+\sum_{|I|\le n}\ell_{I}\langle \tilde{e}_{I}^{B},\widehat{\Z}_{t}\rangle,
\end{equation*}
where $U^0_{t}$ is the upper-triangular matrix of the Cholesky decomposition of $Q^{0}(t)$.
\item Notice that the log-price model in \eqref{logsig-model-representation-SPX}, it is not exactly a signature-based model in the sense of \cite{CGS:23}, as here it is given by a linear part in the parameters $\ell$ and an additional quadratic part. It can also be rewritten as
\begin{equation*}
    \d\log(S_t)=-\frac{1}{2}\ell^\top \tilde{Q}(t)\ell \d t +\ell^\top \vecsig(\widehat{\X}_{t}^{n})\d B_{t},
\end{equation*}
where $\tilde{Q}$ is given by 
\begin{equation*}
    \tilde{Q}_{\mathscr{L}(I)\mathscr{L}(J)}(t):=\langle e_{I}\shuffle e_{J},\widehat{\mathbb{X}}_{t}\rangle.
\end{equation*}
 Hence, the relevant factors entering in the dynamics of $\log S$ and $\sigma^S$ are the components of $\widehat{\mathbb{X}}^{2n}$. Note also that $(\log S, \widehat{\mathbb{X}}^{2n})$ is a $1+(d+1)_{2n}$ dimensional polynomial diffusion (see \eqref{eqn10}), whence in particular Markovian. This is in spirit of path-dependent factor model, for instance also considered in \cite{GLJ:22}, with the additional tractability feature that $(\log S, \widehat{\mathbb{X}}^{2n})$ is a polynomial diffusion. Therefore all techniques for polynomial processes in view of pricing and hedging can be applied.

\item In order to sample the log-price at  maturity, consistently with the $\VIX$, we follow the following road map. We simulate $\widehat{\Z}$ and  compute $\langle\tilde{e}_{I}^{B}, \widehat{\Z}\rangle$ for each $I$ as specified above. Next, we drop from the samples of
$\widehat{\Z}$ the terms where  $B$ appears, i.e.~the components corresponding to indices containing the letter $d+1$.  The result coincides with a sampling of $\widehat \X$ and is then used to work with both $Q$ and $Q^0$.

This is equivalent to sampling $\widehat{\X}$ for the variance process and to compute an additional It\^o integral as in \eqref{model}.
    \end{itemize}
\end{remark}

In the following corollary we  state the form of  $\tilde{e}_I^B$ when $X$ is 
$d$-dimensional OU-process. We omit the proof for sake of brevity.

 \begin{corollary}\label{corollary:Z_OU}
    Let $X$ be a $d$-dimensional OU-process as in Example \ref{example_OU} driven by a $d$-dimensional Brownian motion with correlation matrix $\rho$. Then $\tilde{e}_I^B$ is given by
    \begin{equation*}
    \tilde{e}_I^B = e_I\otimes e_{d+1} -\frac{1}{2} 1_{\{i_{|I|}\neq 0\}} (\sigma^{i_{|I|}} \rho_{i_{|I|} d+1}) e_{I'} \otimes e_0,
    \end{equation*}
    for any multi-index $I \neq \emptyset$.
\end{corollary}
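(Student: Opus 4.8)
The plan is to specialize Proposition~\ref{sig-model-spx} to the OU primary process; the only real task is to read off the coefficients $a_{ij}^J$ of Assumption~\ref{ass1} in this case, everything else being a direct substitution.

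First I would check that Assumption~\ref{ass1} holds. Writing the components of $Z=(X,B)$ as $Z^0$ for time, $Z^j=X^j$ for $1\le j\le d$, and $Z^{d+1}=B$, the driving noise of $Z$ in Example~\ref{example_OU} has constant instantaneous covariance: $\d[Z^i,Z^j]_t=\sigma^i\sigma^j\rho_{ij}\,\d t$ for $i,j\in\{1,\dots,d+1\}$ (with the convention $\sigma^{d+1}=1$ and $\rho$ the full $(d+1)\times(d+1)$ correlation matrix), while $\d[Z^0,Z^j]_t=0$ for every $j$ because time has zero quadratic variation. In particular each $\d[Z^i,Z^j]_t/\d t$ is a constant, hence trivially of the form required by Assumption~\ref{ass1} with $m=0$ and
\begin{equation*}
a_{ij}^\emptyset=\sigma^i\sigma^j\rho_{ij}\,1_{\{i\neq0\}}1_{\{j\neq0\}},\qquad a_{ij}^J=0\ \text{ for }|J|\geq 1.
\end{equation*}

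Then I would substitute these values into the expression for $\tilde e_I^B$ given in Proposition~\ref{sig-model-spx}. Since $m=0$, the sum over $|J|\le m$ reduces to the single term $J=\emptyset$, and $e_{I'}\shuffle e_\emptyset=e_{I'}$ by definition of the shuffle product. Using $a_{i_{|I|}(d+1)}^\emptyset=\sigma^{i_{|I|}}\sigma^{d+1}\rho_{i_{|I|}(d+1)}1_{\{i_{|I|}\neq0\}}=\sigma^{i_{|I|}}\rho_{i_{|I|}(d+1)}1_{\{i_{|I|}\neq0\}}$ (recall $\sigma^{d+1}=1$, and the factor $1_{\{d+1\neq0\}}$ is vacuous), one obtains
\begin{equation*}
\tilde e_I^B=e_I\otimes e_{d+1}-\frac{1}{2}\,1_{\{i_{|I|}\neq0\}}\bigl(\sigma^{i_{|I|}}\rho_{i_{|I|}(d+1)}\bigr)\,e_{I'}\otimes e_0
\end{equation*}
for every $I\neq\emptyset$, which is the claim (the case $I=\emptyset$ giving $\tilde e_\emptyset^B=e_{d+1}$, exactly as in Proposition~\ref{sig-model-spx}). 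No serious obstacle is expected here: the statement is immediate once the instantaneous covariances of the OU primary process are recorded as constants, and the only point needing minor care is that $[\,t,B\,]\equiv0$, so the relevant coefficient vanishes when $i_{|I|}=0$ — which is precisely what the indicator $1_{\{i_{|I|}\neq0\}}$ encodes, so no separate case analysis is needed.
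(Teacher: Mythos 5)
Your proof is correct and follows the route the paper intends (the paper omits the proof). You correctly read off the constant instantaneous covariances $\d[Z^i,Z^j]_t=\sigma^i\sigma^j\rho_{ij}\,\d t$ from Example~\ref{example_OU}, so Assumption~\ref{ass1} holds with $m=0$ and the only nonzero coefficient $a_{ij}^\emptyset=\sigma^i\sigma^j\rho_{ij}$, and then substituting into the formula of Proposition~\ref{sig-model-spx} with $e_{I'}\shuffle e_\emptyset=e_{I'}$ and $\sigma^{d+1}=1$ gives exactly the claimed $\tilde e_I^B$; your remark that the indicator $1_{\{i_{|I|}\neq0\}}$ simply records that $[t,B]\equiv 0$ (so the Itô--Stratonovich correction is absent when the last letter is time) is the one slightly delicate point and you handle it correctly.
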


\begin{remark}[Variance reduction for pricing SPX options]\label{var_red:spx}
    Observe that a possible control variate for reducing the variance of the Monte Carlo estimator for pricing $\SPX$ options is the value at maturity of the log-price process. This  means, 
    \begin{align*}
       \Phi^{cv}(\ell,T,K):
       &=c_{T,K}\Big(\log(S_{T}(\ell))+\frac{1}{2}\ell^\top Q^{0,cv}(T)\ell\Big),
    \end{align*}
    where, using that the linear part (in $\ell$) of $\log(S_{T}(\ell))$ vanishes under the risk-neutral expectation, we have
    \begin{equation*}
        Q_{\mathscr{L}(I)\mathscr{L}(J)}^{0,cv}(T)= \vecsig((e_I\shuffle e_J)\otimes e_0)^{\top}  e^{TG^\top}\vecsig(\widehat{\X}_0^{2n+1}),
    \end{equation*}
    for  $G\in \R^{(d+1)_{2n+1}\times(d+1)_{2n+1}}$ denoting the $(d+1)_{2n+1}$-dimensional matrix representative of the dual operator corresponding to $\widehat \X$. We choose the optimal $c_{T,K}^{\ast}\in\R$ as
    \begin{equation*}
    c_{T,K}^{\ast}=\frac{\text{Cov}((S_{T}(\ell)-K)^{+},\log(S_{T}(\ell)))}{\text{Var}(\log(S_{T}(\ell)))}.
\end{equation*}
\end{remark}

\subsection{Exploiting the affine nature of the signature: Fourier pricing of SPX and VIX options}\label{sec:affine-nat}

This section is dedicated to outline how the linear parametrizations of the log-price and the volatility process in $\widehat{\mathbb{Z}}$ can be used for Fourier pricing.
Assume that 
\begin{equation*}
	\mathrm{d} Z_{t}^{j}=\kappa^{j}(\theta^{j}-Z_{t}^{j})\mathrm{d}t + \sigma^{j}\mathrm{d} W_{t}^{j},\qquad Z_0^j=0,
\end{equation*}
 for each $j=1,\dots,d+1$, where $W$ denotes a $(d+1)$-dimensional Brownian motion with $W^{d+1}=B$.
  All parameters $\kappa^j, \theta^j, \sigma^j $ are in $ \mathbb{R}$ with $\kappa^{d+1}=\theta^{d+1}=0$ and $\sigma^{d+1}=1$ so that $Z^{d+1}=W^{d+1}=B$. Note that we do not account for correlations.
 
 We illustrate now how to apply the results of \cite{CST:22} in the present setting. Since $(\widehat Z_t)_{t\geq0}$ is a polynomial diffusion, by Lemma~\ref{lem:lem2} there are $\b\in (T((\R^{d+2})))^{d+2}$ and $\a\in (T((\R^{d+2})))^{(d+2)\times (d+2)}$ such that
\begin{equation*}
	\mathrm{d} \widehat Z_{t}^{j}=\langle \b_{j},\widehat \Z_t\rangle \mathrm{d}t + \sqrt{\langle \a_{jj},\widehat\Z_t\rangle}\mathrm{d} W_{t}^j,
\end{equation*}
where  $\b_j=\kappa^j\theta^je_\emptyset-\kappa_je_j$ and $\a_{jj}=(\sigma^j)^2e_\emptyset$, using that (with a small abuse of notation) $\kappa^0\theta^0:=1$, $\kappa^j:=0$ and $\sigma^0:=0$.
Consider then the  Riccati operator $\Rcal$  given by 
\begin{align*}
    \Rcal(\mathbf u)&=\sum_{j=0}^{d+1} \sum_{|I|\geq0}
\Big(\kappa^j\theta^j\u_{(Ij)}e_{I}
+ \kappa^j \u_{(Ij)}e_j\shuffle e_{I}+\frac 1 2(\sigma^j)^2(\u_{(Ijj)}e_{I}+\u_{(Ij)}^2e_I\shuffle e_I)
\Big).
\end{align*}

By Theorem~4.23 in \cite{CST:22}, we expect that
\[
\mathbb{E}[\exp(\langle \mathbf{u}, \widehat{\Z}_T  \rangle)]= \exp( {\bm \psi}(T)_\emptyset  ),
\]
where ${\bm \psi}$ is a solution of  the extended tensor algebra valued Riccati equation\footnote{We refer to  \cite{CST:22} for the appropriate solution concept and to a numerical treatment  in the one dimensional case where \eqref{eq:Ric} reduces to a sequence-valued Riccati equation.} 
\begin{align}\label{eq:Ric}
\partial_t {\bm \psi}(t)=\Rcal({ \bm\psi}(t)), \quad {\bm \psi}(0) =\mathbf{u}.
\end{align}
Choosing $\u$ as
$$\u(\ell):=-\frac 1 2 ({ \ell}\shuffle { \ell})\otimes e_0+\tilde\ell$$
where  $\tilde \ell:=\sum_{|I|\leq n}\ell_I\tilde e_I^B$, by Proposition~\ref{sig-model-spx} we get 
$$\log(S_t(\ell))=\langle \u(\ell),\widehat\Z_t\rangle.$$
The representation of the Fourier-Laplace transform described above can then be used for Fourier pricing. We dedicate the remaining part of this section to illustrate how this can be done. 

From Fourier analysis we know that for $K>0$ and $C<0$ it holds
$$(K-e^y)^+=\frac 1 {2\pi}\int_\R e^{(i\lambda+C)y}\frac{K^{-C+1-i\lambda}}{(i\lambda+C)(i\lambda+C-1)}\d\lambda.$$
This in particular implies that
\begin{align*}
    \E[(K-S_T(\ell))^+]
    &=\frac 1 {2\pi}\int_\R \E[e^{(i\lambda+C)\log(S_T(\ell))}]\frac{K^{-C+1-i\lambda}}{(i\lambda+C)(i\lambda+C-1)}\d\lambda\\
    &=\frac 1 {2\pi}\int_\R \E[e^{\langle \u_\lambda,\widehat \Z_T\rangle}]\frac{K^{-C+1-i\lambda}}{(i\lambda+C)(i\lambda+C-1)}\d\lambda\\
    &=\frac 1 {2\pi}\int_\R e^{ \bm\psi_\lambda(T)_\emptyset}\frac{K^{-C+1-i\lambda}}{(i\lambda+C)(i\lambda+C-1)}\d\lambda,
\end{align*}
where $\u_\lambda:=(i\lambda+C)\u(\ell)$ and $\bm\psi_\lambda$ is a solution of the Riccati equation with initial condition $\bm\psi_\lambda(0)=\u_\lambda$. 

Let us now consider the case of $\VIX$ options where Fourier pricing can be applied by computing the Fourier-Laplace transform of VIX squared, see also \cite{S:08,PS:14,BPS:22} and references therein for a Fourier-based approach to pricing VIX options. 
Fix a labelling injective function $\Lvar:\{I\colon |I|\leq n\}\to\{1,\ldots,(d+1)_{(2n+1)}\}$  as introduced before \eqref{eqn8} and recall that by Theorem~\ref{th:VIXclosed} it holds
\begin{equation*}
    \VIX_T^2(\ell)=\frac 1 \Delta \ell^\top Q(T,\Delta)\ell
\end{equation*}
for 
$$
 Q_{\mathscr{L}(I)\mathscr{L}(J)}(T,\Delta)= \sum_{e_K=(e_I\shuffle e_J)\otimes e_0}  \sum_{|H|\leq 2n+1}   (e^{\Delta G^\top}-\operatorname{Id})_{\Lvar(K)\Lvar(H)}  \langle e_H,\widehat{\X}_T\rangle.
$$
where $G$ denotes the $(d+1)_{(2n+1)}$-dimensional matrix representative of the dual operator corresponding to $\widehat \X$.

Setting for $|I|,|J|\le n$
$$I(\Delta)\u:=  \sum_{|K|,|H|\leq 2n+1}   (e^{\Delta G^\top}-\operatorname{Id})_{\Lvar(K)\Lvar(H)} \u_K e_H$$
we can write
\begin{align*}
    \VIX_T^2(\ell)
    &=\frac 1 \Delta\sum_{|I|,|J|\leq n} \ell_I\ell_J
    \sum_{e_K=(e_I\shuffle e_J)\otimes e_0}  \sum_{|H|\leq 2n+1}   (e^{\Delta G^\top}-\operatorname{Id})_{\Lvar(K)\Lvar(H)}  \langle e_H,\widehat{\X}_T\rangle\\
    &=\frac 1 \Delta \sum_{|K|,|H|\leq 2n+1}   (e^{\Delta G^\top}-\operatorname{Id})_{\Lvar(K)\Lvar(H)} \langle e_K,(\ell\shuffle \ell)\otimes e_{0}\rangle \langle e_H,\widehat\X_t\rangle\\
    &=\frac 1 \Delta \langle  I(\Delta)((\ell\shuffle \ell)\otimes e_{0}),\widehat\X_T\rangle.
\end{align*}
Also in this case since 
$\frac 1 {\sqrt{2\pi}}\int_\R e^{i\lambda y}(K-\sqrt{|y|})^+ \d y=\frac{\sqrt{\frac 2{\pi}}F_S(K\sqrt{|\lambda|})}{|\lambda|^{3/2}}$ is integrable for $F_S(u)=\int_0^{u} \sin( z^2)\d z$,
Fourier analysis yields 
$$(K-\sqrt{y})^+=\frac 1 {\pi}\int_\R e^{-i\lambda y}\frac{F_S(K\sqrt{|\lambda|})}{|\lambda|^{3/2}}\d \lambda,$$
for each $y\geq 0$. This in particular implies that
$$\E[(K-\VIX_T(\ell))^+]
=\frac 1 {\pi}\int_\R \E[e^{-i\lambda \VIX_T^2(\ell)}]\frac{F_S(K\sqrt{|\lambda|})}{|\lambda|^{3/2}}\d\lambda
=\frac 1 {\pi}\int_\R e^{ {\bm\psi}_\lambda(T)_\emptyset}\frac{F_S(K\sqrt{|\lambda|})}{|\lambda|^{3/2}}\d\lambda,$$
where $\bm \psi_\lambda$ is a solution of the Riccati equation with initial condition $\bm\psi_\lambda(0)=-i\lambda {\mathbf{v}}(\ell)$ for ${\bf v}(\ell):= \frac 1 \Delta I(\Delta)((\ell\shuffle \ell)\otimes e_{0})$. 

Analogous formulations in terms of the error function are also possible, see for instance \cite{BPS:22}. In the same spirit one can also obtain a representation of futures prices. 
We  here do not provide an implementation of this Fourier pricing approach but numerical experiments can be found in \cite{CST:22}.

\subsection{The case of time-varying parameters}\label{TV_SPX}
Analogously to Section~\ref{TV_VIX}, we now further enhance Proposition~\ref{sig-model-spx} by allowing the parameters $\ell$ to depend on the maturity.

\begin{proposition}\label{prop:price_tv}   Let $S=(S_{t})_{t\ge0}$ satisfy \eqref{model} with $S_0=1$, and $(\sigma_t^S)_{t\geq0}$ satisfy \eqref{variance_TV}  for a set of maturities $\Tcal^\VIX=\{T_{1},\dots,T_N\}$. Recall that in this case $V=(V_{t})_{t\ge0}$ satisfy
$${V_{t}}(\ell)= \sum_{i=0}^{N} \sum_{|J|,|I|\le n}\ell_{I}(T_{i})\ell_{J}(T_{i})1_{[T_{i}, T_{i+1})}(t) \langle e_{I}\shuffle e_J,\widehat{\mathbb{X}}_{t}\rangle.$$
Then, with the notation of Proposition~\ref{sig-model-spx} we write the following recursion for the log-price process 
\begin{align*}
    \log(S_t(\ell^{<m+1}))
    & =
    \sum_{i=0}^N\bigg[-\frac{1}{2} \ell(T_i)^{\top} \big( Q^0({t\land T_{i+1}})-Q^0(t\land T_i) \big) \ell(T_i)\\
    &\qquad+ \sum_{ |I|\leq n} \ell_I(T_i) \langle \tilde{e}_I^B, \widehat{\mathbb{Z}}_{{t\land T_{i+1}}} -\widehat{\mathbb{Z}}_{t\land T_{i}} \rangle\bigg] 
\end{align*}
for each $t\geq 0$, where $T_0:=0$, $\ell^{< m+1}:=\{\ell(0),\dots, \ell(T_{m})\}$,  
$m=\max\{j\colon  T_{j} < t\}$, and 
\begin{equation*}
    Q^{0}_{\mathscr{L}(I)\mathscr{L}(J)}(t)=\langle (e_{I}\shuffle e_{J})\otimes e_{0},\widehat{\mathbb{X}}_{t}\rangle,
\end{equation*}
for a labeling function $\mathscr{L}:\{I: |I|\le n\}\to \{1,\dots,(d+1)_{2n+1}\}$. 
\end{proposition}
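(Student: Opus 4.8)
The plan is to mimic the proof of Proposition~\ref{sig-model-spx}, but now carrying the piecewise-constant structure of $\ell$ through the computation. First I would apply It\^o's formula to obtain, for every $t\geq 0$,
\[
\log(S_t(\ell^{<m+1}))=-\frac12\int_0^t V_s(\ell)\,\d s+\int_0^t\sigma_s^S(\ell)\,\d B_s,
\]
which is legitimate by the same integrability argument as in Theorem~\ref{th:VIXclosed}\ref{iti}, since $V$ is again a (piecewise) linear combination of shuffled signature components of a polynomial process and hence integrable. Then I would substitute the expressions \eqref{variance_TV} for $\sigma^S$ and \eqref{variance_process_tv} for $V$. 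Because the indicators $1_{[T_i,T_{i+1})}$ partition $[0,\infty)$, both time integrals split as sums over $i=0,\dots,N$ of integrals over $[t\land T_i,\,t\land T_{i+1}]$, and only the terms with $T_i<t$ (i.e.\ $i\le m$) contribute nontrivially.

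Next I would treat the two pieces separately. For the drift piece, on the interval $[t\land T_i,t\land T_{i+1}]$ the parameters $\ell(T_i)$ are constant, so
\[
-\frac12\sum_{|I|,|J|\le n}\ell_I(T_i)\ell_J(T_i)\int_{t\land T_i}^{t\land T_{i+1}}\langle e_I\shuffle e_J,\widehat\X_s\rangle\,\d s
=-\frac12\,\ell(T_i)^\top\big(Q^0(t\land T_{i+1})-Q^0(t\land T_i)\big)\ell(T_i),
\]
using that $\int_0^u\langle e_I\shuffle e_J,\widehat\X_s\rangle\,\d s=\langle(e_I\shuffle e_J)\otimes e_0,\widehat\X_u\rangle$ (the same identity marked $(\ast)$ in the proof of Proposition~\ref{sig-model-spx}) and the definition of $Q^0$. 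For the martingale piece, on the same interval the integrand is $\sum_{|I|\le n}\ell_I(T_i)\langle e_I,\widehat\X_s\rangle$, and by Lemma~3.10 in \cite{CGS:22} one has $\int_{t\land T_i}^{t\land T_{i+1}}\langle e_I,\widehat\X_s\rangle\,\d B_s=\langle\tilde e_I^B,\widehat\Z_{t\land T_{i+1}}-\widehat\Z_{t\land T_i}\rangle$, where the increment of $\widehat\Z$ is understood in the sense of Chen's identity; this gives the second bracketed term. Summing over $i$ yields the claimed recursion.

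The routine part is the bookkeeping of the indicators and the telescoping of the $Q^0$-increments, which is harmless. The one genuine point that needs care — and the main obstacle — is justifying that the stochastic-integral decomposition over the intervals $[t\land T_i,t\land T_{i+1}]$ is valid and that $\int_{t\land T_i}^{t\land T_{i+1}}\langle e_I,\widehat\X_s\rangle\,\d B_s$ equals an \emph{increment} of the signature-type functional $\langle\tilde e_I^B,\widehat\Z_\cdot\rangle$ rather than a value started from zero. This is exactly where one invokes the semimartingale version of Chen's identity (Lemma~\ref{lem5}) together with Lemma~3.10 in \cite{CGS:22}: the It\^o integral $\int_0^{t}\langle e_I,\widehat\X_s\rangle\,\d B_s$ is an additive functional in the sense that $\int_0^{t_2}-\int_0^{t_1}=\langle\tilde e_I^B,\widehat\Z_{t_2}\rangle-\langle\tilde e_I^B,\widehat\Z_{t_1}\rangle$, so the restriction to $[t\land T_i,t\land T_{i+1}]$ is unambiguous. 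Once this is in place, the proof is essentially the one of Proposition~\ref{sig-model-spx} applied on each subinterval and summed; I would therefore present it compactly, referencing the earlier proof for the steps that are unchanged.
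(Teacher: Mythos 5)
Your proof is correct and follows the same line as the paper's: apply It\^o's formula, split both integrals by the indicators $1_{[T_i,T_{i+1})}$ into pieces over $[t\land T_i, t\land T_{i+1}]$, rewrite the time-integral pieces via $(\ast)$ to get the $Q^0$ increments, and rewrite the stochastic-integral pieces via Lemma~3.10 of \cite{CGS:22} and the additivity $\int_0^{t_2}-\int_0^{t_1}=\langle\tilde e_I^B,\widehat\Z_{t_2}\rangle-\langle\tilde e_I^B,\widehat\Z_{t_1}\rangle$. One small remark: the difference $\widehat\Z_{t\land T_{i+1}}-\widehat\Z_{t\land T_i}$ appearing in the statement is a plain arithmetic subtraction in $T((\R^{d+2}))$ paired against the linear functional $\langle\tilde e_I^B,\cdot\rangle$, so no appeal to Chen's identity (the $\otimes$-increment $\widehat\Z_{t\land T_i,\,t\land T_{i+1}}$) is actually needed or used here — the additivity you state in your final sentence is the whole justification.
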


\begin{proof}
 We know that 
\begin{equation*}
    \log(S_t(\ell)) = -\frac{1}{2} \int_0^t  V_s(\ell)\d s + \int_0^t \sigma_s^S(\ell)\d B_s
\end{equation*}
and we will calculate each integral separately. We start with the first one.
\begin{align*}
 \int_0^t V_s(\ell) \d s 
=&  \int_0^t \sum_{i=0}^N \sum_{ |I|, |J| \leq n} \ell_I(T_i) \ell_J(T_i) 1_{[T_i, T_{i+1})} \langle e_I\shuffle e_J, \widehat{\mathbb{X}}_s \rangle \d s \\
= &  \sum_{i=0}^{N}  \sum_{ |I|, |J| \leq n} \ell_I(T_i) \ell_J(T_i) \int_{t\land T_i}^{t\land T_{i+1}} \langle e_I\shuffle e_J, \widehat{\mathbb{X}}_s \rangle \d s  \\
= &   \sum_{i=0}^{N}  \sum_{ |I|, |J| \leq n} \ell_I(T_i) \ell_J(T_i) \left( \int_{0}^{t\land T_{i+1}} \langle e_I\shuffle e_J, \widehat{\mathbb{X}}_s \rangle \d s 
- \int_{0}^{t\land T_{i}} \langle e_I\shuffle e_J, \widehat{\mathbb{X}}_s \rangle \d s \right) \\  
=& \sum_{i=0}^{N}  \sum_{ |I|, |J| \leq n} \ell_I(T_i) \ell_J(T_i) \left( \langle (e_I\shuffle e_J)\otimes e_0, \widehat{\mathbb{X}}_{t\land T_{i+1}} \rangle -  \langle (e_I\shuffle e_J)\otimes e_0, \widehat{\mathbb{X}}_{t\land T_{i}} \rangle \right) \\
=& \sum_{i=0}^{N} \left[ \ell(T_i)^{\top} \left( Q^0(t\land T_{i+1}) - Q^0(t\land T_i)\right) \ell(T_i)\right].
\end{align*}
Using similar arguments and Lemma 3.10 in \cite{CGS:23}, the second integral yields
\begin{align*}
\int_0^t \sigma_s^S(\ell)\d B_s 
=& \int_0^t \sum_{i=0}^N \sum_{ |I| \leq n} \ell_I(T_i)  1_{[T_i, T_{i+1})} \langle e_I, \widehat{\mathbb{X}}_s \rangle \d B_s \\
=& \sum_{i=0}^{N}  \sum_{ |I| \leq n} \ell_I(T_i) \int_{t\land T_i}^{t\land T_{i+1}} \langle e_I, \widehat{\mathbb{X}}_s \rangle \d B_s  \\ 
=& \sum_{i=0}^N  \sum_{ |I| \leq n} \ell_I(T_i) \left( \int_{0}^{t\land T_{i+1}} \langle e_I, \widehat{\mathbb{X}}_s \rangle \d B_s -\int_{0}^{t\land T_{i}} \langle e_I, \widehat{\mathbb{X}}_s \rangle \d B_s\right) \\ 
=& \sum_{i=0}^N  \sum_{ |I| \leq n} \ell_I(T_i)   \langle \tilde{e}^B_I, \widehat{\mathbb{Z}}_{t\land T_{i+1}}- \widehat{\mathbb{Z}}_{t\land T_i} \rangle,
\end{align*}
and the claim follows.
\end{proof}

\section{Joint calibration of SPX and VIX options}\label{sec:joint-calib}
We here  consider again the model introduced in \eqref{model}-\eqref{sig-vol2}. Note that we just work with
call options, but the setup can easily be extended also to other liquid options on the market.  
Again we denote by $\Tcal^{\VIX}$ and $\Tcal^{\SPX}$ the maturities set for options written on $\VIX$ and $\SPX$, respectively. Similarly we use the notation $\Kcal^{\VIX}$ and $\Kcal^{\SPX}$ for the corresponding strikes.
The functional to be minimized in order to achieve a joint calibration of the SPX/VIX options reads as follows:
\begin{equation}\label{L_joint}
    L_{\text{joint}}(\ell,\lambda):=\lambda L_{\SPX}(\ell)+(1-\lambda) L_{\VIX}(\ell),
\end{equation}
where $\lambda\in(0,1)$ and
\begin{itemize}
    \item $L_{\text{VIX}}(\ell)$ is as in \eqref{L_cv_vix}, i.e.
    \begin{align*}
    \sum_{T\in \Tcal^{\VIX}, K\in\Kcal^{\VIX}}\!\!\!\mathcal{L}\left(\pi_{\VIX}^{\text{model}}(\ell,T,K),\pi_{\textrm{\VIX}}^{b,a}(T,K),\sigma_{\text{\VIX}}^{b,a}(T,K), F_{\VIX}^{\text{model}}(\ell,T),F_{\VIX}^{mkt}(T)\right),
    \end{align*}
    with $\pi_{\VIX}^{model}$ and $F_{\VIX}^{model}$ as in \eqref{model_prices} for
    $\VIX_T(\ell,\omega_i)$ defined as in \eqref{matrix_vix};
    \item $L_{\SPX}(\ell)$ is the SPX loss function given by
\begin{align*}
   L_{\SPX}(\ell):=\!\!\!\sum_{T\in\Tcal^{\SPX},K\in\Kcal^{\SPX}}\!\!\!\mathcal{L}(\pi_{\SPX}^{\text{model}}(\ell,T,K),\pi_{\textrm{\SPX}}^{b,a}(T,K),\sigma_{\text{\SPX}}^{b,a}(T,K)),
     \end{align*}
    for a real-valued function $\mathcal{L}$. Observe that with a slight abuse of notation we denote this function as the one for $L_{\VIX}$, but for $\SPX$ options we do not have to calibrate to futures, hence the last term of \eqref{greek_loss} vanishes. 
\end{itemize}
By Proposition~\ref{sig-model-spx} the SPX call option payoff with  maturity $T>0$ and a strike price $K>0$ reads in our model as follows
\begin{equation*}
    e^{-rT}(\tilde{S}_{T}(\ell)-K)^{+}=e^{-rT}\biggl(\exp\biggl\{(r-q)T-\frac{1}{2}\ell^\top Q^{0}(t)\ell +\sum_{|I|\le n}\ell_{I}\langle \tilde{e}_{I}^{B},\widehat{\mathbb{Z}}_{T}\rangle\biggr\}-K\biggr)^{+},
\end{equation*}
where $\tilde{S}$ denotes the undiscounted, unadjusted process as discussed in Remark~\ref{rem:interests} and $r,q>0$ the interest rate and the dividends, respectively. Recall also that the call option payoff written on the $\VIX$ is given by
\begin{equation*}
    e^{-rT}(\VIX_{T}(\ell)-K)^{+}= e^{-rT}\biggl(\sqrt{\frac{1}{\Delta}\ell^{\top}Q(T,\Delta)\ell}-K\biggr)^{+}= e^{-rT}\left(\frac{1}{\sqrt{\Delta}}\lVert U_{T}^{\top}\ell \lVert- K\right)^{+},
\end{equation*}
where $U_{T}$ denotes the upper-triangular matrix of the Cholesky decomposition of the symmetric positive semidefinite matrix $Q(T,\Delta)$.

{ 
\begin{remark}

We report 
in the table below 
the   (average over $10^3$ trials) timings of evaluating $\VIX_{T}(\ell)$ and $S_{T}(\ell)$ for $\ell\in\mathbb{R}^{85}$, a fixed $T>0$ and $N_{MC}=8 \cdot 10^{5}$  samples on both CPU (on the left) and GPU (on the right) with PyTorch, respectively:

\begin{center}
\begin{tabular}{||c| c ||} 
 \hline
 $(\VIX_{T}(\ell)(\omega_{i}))_{i=1}^{N_{\text{MC}}}$ & $(S_{T}(\ell)(\omega_{i}))_{i=1}^{N_{\text{MC}}}$\\
 \hline\hline
 $0.771s$ &  $0.702s$\\
 \hline
\end{tabular}
\quad
\begin{tabular}{||c| c ||} 
 \hline
 $(\VIX_{T}(\ell)(\omega_{i}))_{i=1}^{N_{\text{MC}}}$ & $(S_{T}(\ell)(\omega_{i}))_{i=1}^{N_{\text{MC}}}$\\
 \hline\hline
 $0.04s$ &  $0.014s$\\
 \hline
\end{tabular}
\end{center}

This evaluations are the relevant operations in the Monte Carlo pricing and in turn in the calibration procedure. Note again that  both the sampling of the signature components and the matrix exponential, are  achieved offline, as they do not depend on $\ell$. 
\end{remark}}

\subsection{Numerical results}\label{numerical_result_joint}

Before presenting our numerical results, let us discuss two different ways of approaching the joint calibration problem that can be found in the recent literature. 
\begin{enumerate}
    \item\label{approach_1} The first approach consists in choosing for instance the first maturity of SPX and VIX to coincide (or differ up to two days), i.e., $T_{1}^{\SPX}=T_{1}^{\VIX}$ and then for $j\ge2$, $T_{j}^{\SPX}=T_{j-1}^{\VIX}+\Delta$, see for instance \cite{G:21,GLOW:21,GLJ:22}.
 \item\label{approach_2} The second approach is to consider $\Tcal^{\SPX}=\Tcal^{\VIX}$, i.e., to choose the same (or  close together) maturities both for SPX and VIX options. This perspective has been adopted for instance by \cite{GJR:18,RZ:21,GL:22,BPS:22}.
\end{enumerate}

\begin{figure}[H]
    \centering
        \begin{tikzpicture}
        \draw (0,0)-- (11,0);
        
        \foreach \x in {0} {
            \draw (\x,0.1) -- (\x,-0.1) node[below] {$0$}; 
        }
        
        \foreach \x in {1,...,2} {
            \draw (\x*2-0.7,0.1) -- (\x*2-0.7,-0.1) node[below] {$T_{\x}$}; 
        }
        
        \draw (6.3,0.1) -- (6.3,-0.1) node[below] {$T_{1}+\Delta$};
        
        \draw (8.3,0.1) -- (8.3,-0.1) node[below] {$T_{2}+\Delta$};

         \draw[blue] (0,1.5) -- (1.3,1.5); 
        \draw[fill,blue] (0,1.5) circle (0.05); 
        \draw[fill,blue] (1.3,1.5) circle (0.05);

        \draw[blue] (0,1.3) -- (6.3,1.3);
        \draw[fill,blue] (0,1.3) circle (0.05); 
        \draw[fill,blue] (6.3,1.3) circle (0.05);
        
        \draw[blue] (0,1.1) -- (8.3,1.1);
        \draw[fill,blue] (0,1.1) circle (0.05);
        \draw[fill,blue] (8.3,1.1) circle (0.05);

        \draw[red] (1.3,1.9) --  (6.3,1.9);
        \draw[fill,red] (1.3,1.9) circle (0.05);
        \draw[fill,red] (6.3,1.9) circle (0.05);
        
        \draw[red] (3.3,2.1) --  (8.3,2.1);
        \draw[fill,red] (3.3,2.1) circle (0.05);
        \draw[fill,red] (8.3,2.1) circle (0.05);

    \end{tikzpicture}
    \vspace{1cm}\\
 \begin{tikzpicture}
        \draw (0,0)-- (11,0); 
        
        \foreach \x in {0} {
            \draw (\x,0.1) -- (\x,-0.1) node[below] {$0$}; 
        }
        
        \foreach \x in {1,...,3} {
            \draw (\x*2-0.7,0.1) -- (\x*2-0.7,-0.1) node[below] {$T_{\x}$}; 
        }
        
        \draw (6.3,0.1) -- (6.3,-0.1) node[below] {$T_{1}+\Delta$};
        
        \draw (8.3,0.1) -- (8.3,-0.1) node[below] {$T_{2}+\Delta$};
        
        \draw (10.3,0.1) -- (10.3,-0.1) node[below] {$T_{3}+\Delta$};

        \draw[blue] (0,1.5) -- (1.3,1.5); 
        \draw[fill,blue] (0,1.5) circle (0.05); 
        \draw[fill,blue] (1.3,1.5) circle (0.05);
        \draw[blue] (0,1.3) -- (3.3,1.3); 
        \draw[fill,blue] (0,1.3) circle (0.05); 
        \draw[fill,blue] (3.3,1.3) circle (0.05);
        
        \draw[blue] (0,1.1) -- (5.3,1.1);
        \draw[fill,blue] (0,1.1) circle (0.05); 
        \draw[fill,blue] (5.3,1.1) circle (0.05);

        \draw[red] (1.3,1.9) --  (1.3+5,1.9);
        \draw[fill,red] (1.3,1.9) circle (0.05);
        \draw[fill,red] (1.3+5,1.9) circle (0.05);
        
        \draw[red] (3.3,2.1) --  (3.3+5,2.1);
        \draw[fill,red] (3.3,2.1) circle (0.05);
        \draw[fill,red] (3.3+5,2.1) circle (0.05);
        
        \draw[red] (5.3,2.3) --  (5.3+5,2.3);
        \draw[fill,red] (5.3,2.3) circle (0.05);
        \draw[fill,red] (5.3+5,2.3) circle (0.05);

    \end{tikzpicture}
    \caption{The blue lines denote the time interval where the dynamics of the variance process influence the SPX option up to the maturity time. For instance the shortest blue line denotes the time interval where the dynamics of the variance process enter up to maturity $T_{1}$. Similarly the red lines denote the corresponding ones for the VIX, as for instance the variance process enters here in the time integral on $\lbrack T_{1},T_{1}+\Delta\rbrack$, see \eqref{initial_formula_vix2}.
    On the upper graph a representation of the joint calibration approach~\ref{approach_1} is given where we notice that the maturities of the VIX are chosen so that there is a maximal overlap with the ones of the SPX. On the lower graph a representation of approach~\ref{approach_2} is given where the maturities $\Tcal=\{T_{1},T_{2},T_{3}\}$ are considered. We observe that there is less overlap between the maturities of the SPX and VIX  than in approach \ref{approach_1}.}
\end{figure}
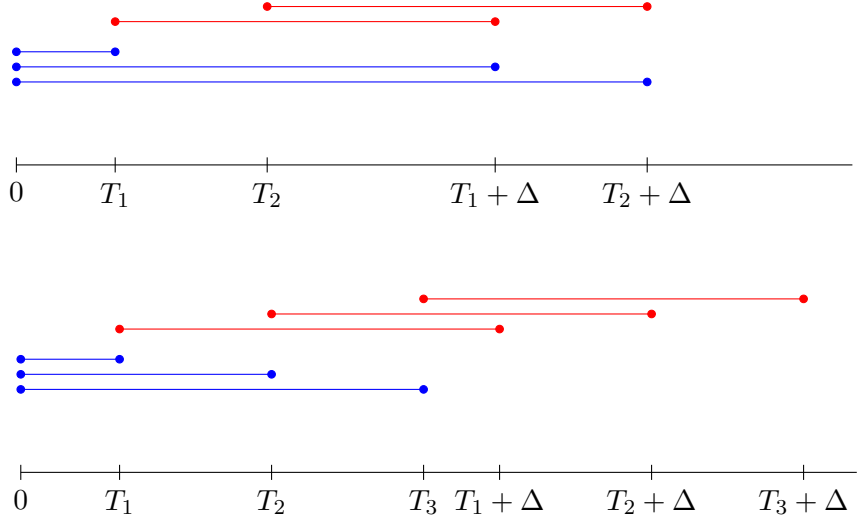

Both approaches deal with the joint modeling of SPX and VIX options and in order to be consistent with both viewpoints taken in the literature, we show how our signature-based model solves the joint calibration within both settings. For this reason we split the rest of the section in two subsection and discuss them separately.

\subsubsection{First approach}
Here we consider call options for both indices on the trading day 02/06/2021, as in \cite{GLJ:22}. Maturities are reported in the following tables with the corresponding range of strikes (in percentage) with respect to the spot and the market's futures prices.
\vspace{0.2cm}
\begin{center}
\begin{tabular}{||c | c ||} 
 \hline
 $T_{1}^{\VIX}=0.0383$ & $T_{2}^{\VIX}=0.0767$\\
 \hline\hline
 (90$\%$,220$\%$) & (90$\%$,220$\%$)\\
 \hline
\end{tabular}
\end{center}
\begin{center}
\begin{tabular}{||c |c| c||} 
 \hline
 $T_{1}^{\SPX}=0.0383$ & $T_{2}^{\SPX}=0.1205$ & $T_{3}^{\SPX}=0.1588$ \\
 \hline\hline
 (92$\%$,105$\%$) & (70$\%$,105$\%$) & (80$\%$,120$\%$)\\
 \hline
\end{tabular}
\end{center}
\vspace{0.2cm}

We stress that the shortest maturity considered is of 14 days for both SPX and VIX, then the second and third maturity of the SPX are 44 days and 58 days, respectively, and the second one for the VIX is 28 days. Moreover, we consider a high moneyness level (up to 220$\%$) for $\VIX$ options, usually rather difficult to fit. Regarding our modeling choice we fix $d=3$, $n=3$ and choose the primary process $X$ to be a three dimensional Ornstein-Uhlenbeck process (see Example~\ref{example_OU}) with parameters
\begin{equation*}
    \kappa=(0.1,25,10)^\top, \qquad \theta=(0.1,4,0.08)^\top,\qquad \sigma=(0.7, 10,5)^\top,
\end{equation*}
\begin{equation*}
   \rho=	\begin{pmatrix}
	1 & 0.213 & -0.576 & 0.329 \\
	\cdot & 1 & -0.044 & -0.549 \\
	\cdot & \cdot & 1 & -0.539 \\
	\cdot & \cdot & \cdot & 1 \\
	\end{pmatrix}, \qquad X_{0}=(1,0.08,2)^\top.
\end{equation*}
Note that with this configuration we need to calibrate $85$ parameters, i.e., $\ell\in \R^{85}$.
Concerning the calibration task, we solve \eqref{L_joint} with $\lambda=0.35$  with $N_{MC}=80000$ Monte Carlo samples for the previous maturities and strikes. Furthermore we specify the loss function $\mathcal{L}$ as in \eqref{greek_loss} for $\beta=1$ both for $\SPX$ and $\VIX$.
\begin{figure}[H]
    \centering
    \includegraphics[scale=0.44]{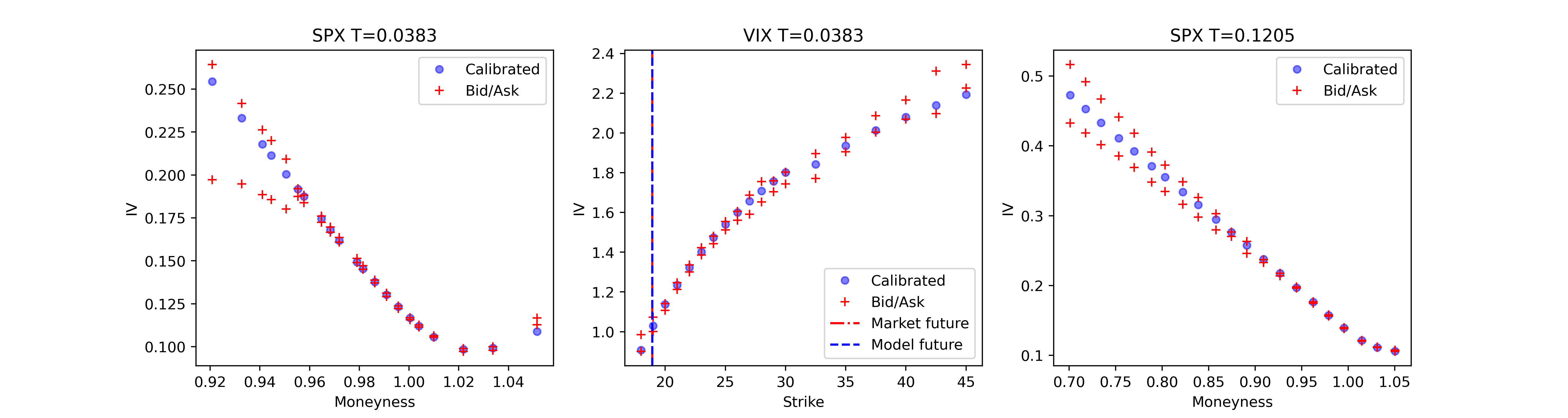}
\end{figure}
\begin{figure}[H]
    \centering
    \includegraphics[scale=0.44]{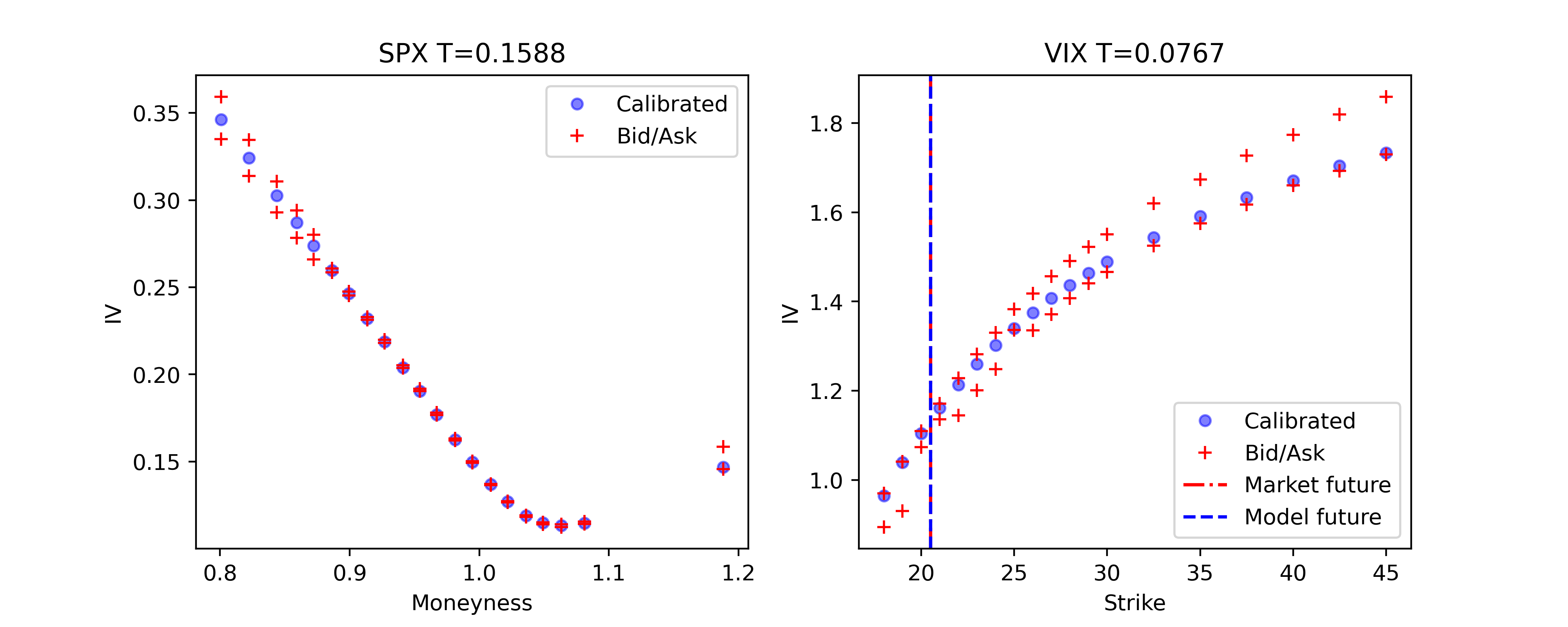}
    \caption[]{In blue the calibrated implied volatility smiles from top-left at maturities $T_{1}^{\SPX}, T_{1}^{\VIX}, T_{2}^\SPX, T_{3}^{\SPX}, T_{2}^{\VIX}$. In red the corresponding bid-ask spreads. In the graphs of the VIX smiles the red dashed line indicates the market future price at maturity and the blue dashed line the calibrated one.}
    \label{fig:joint_constant_ell}
\end{figure}
We report also the relative error between the market futures prices and the calibrated ones
as defined in \eqref{relative_AE}:
\vspace{0.2cm}
\begin{center}
\begin{tabular}{||c| c ||} 
 \hline
 $T_{1}^{\VIX}=0.0383$ & $T_{2}^{\VIX}=0.0767$\\
 \hline\hline
 $\varepsilon_{T_{1}^\VIX}=9.8\cdot 10^{-6}$ &  $\varepsilon_{T_{2}^\VIX}=6.6\cdot 10^{-8}$\\
 \hline
\end{tabular}
\end{center}
\vspace{0.2cm}

\paragraph{Simulation of time-series of SPX and VIX}
Let $\ell^\star\in \R^{85}$ be the  calibrated parameters already used for Figure~\ref{fig:joint_constant_ell}. We then fix $T=60$ days the longest considered maturity for the SPX and  sample a trajectory for $(V_{t}(\ell^\star))_{t\in[0,T]}$, $(\VIX_{t}(\ell^\star))_{t\in[0,T]}$, $(S_{t}(\ell^\star))_{t\in[0,T]}$. Precisely, we sample 12 grid points per day, i.e.~we consider a 2 hours sampling per calendar day, for a total of $N=720$ grid points. The results of this simulation are reported in Figure~\ref{fig:timeseries}.

\begin{figure}[H]
    \centering
    \includegraphics[scale=0.6]{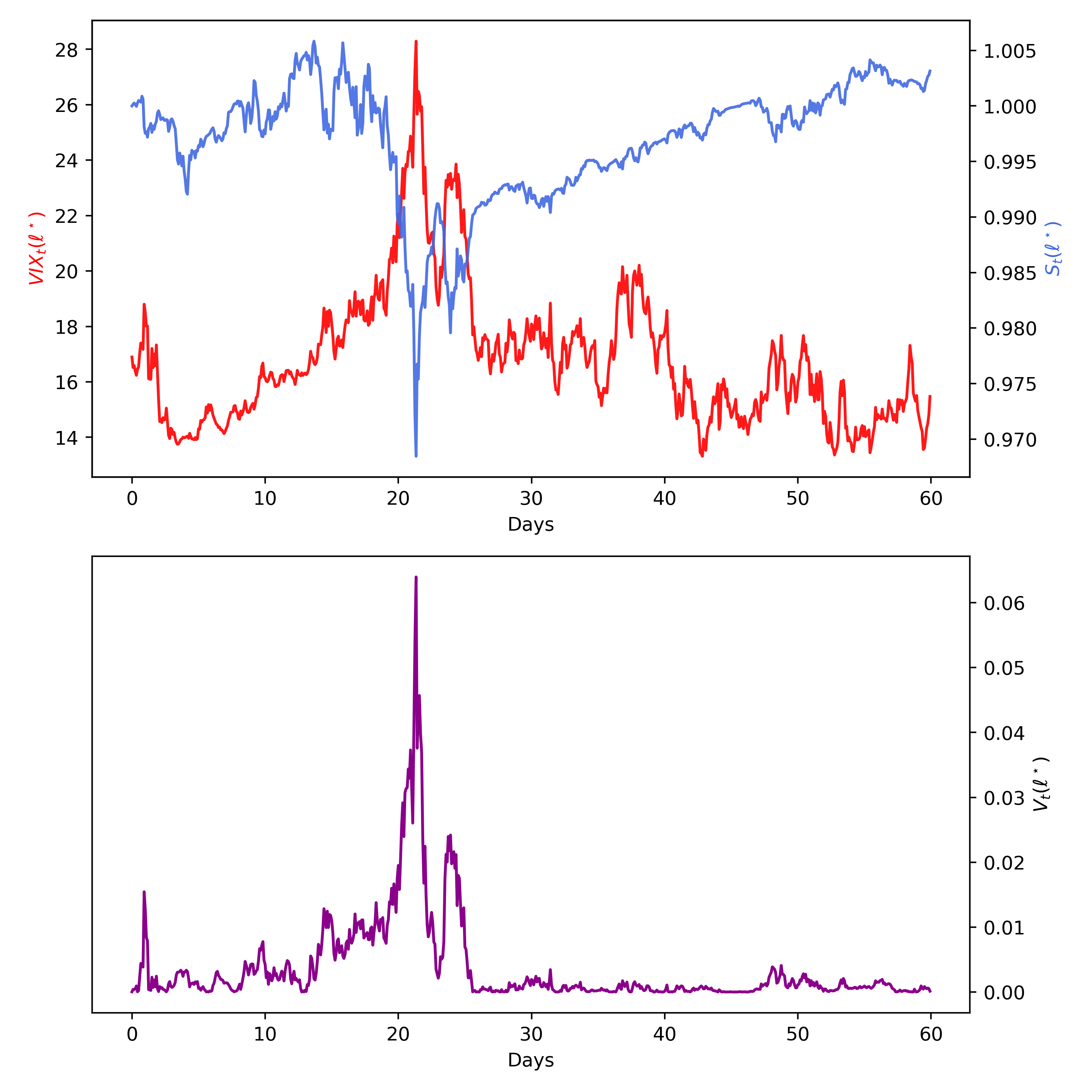}
    \caption{On the top: one realization of the calibrated model $S(\ell^\star)$ for the SPX (in blue) and the corresponding calibrated VIX (in red). On the bottom: the corresponding realization of the calibrated variance process $V(\ell^\star)$.}
\label{fig:timeseries}
\end{figure}

Observe that even though $\ell^*$ was only calibrated to option prices, 
the trajectories produced by the model are economically reasonable and also in line with several stylized facts, such as negative correlation between SPX and VIX or volatility clustering. To obtain the dynamics under the physical measure, these trajectories could still be adjusted by an appropriate market price of risk, but the  quantities which are invariant under equivalent measure changes like the volatility of volatility or the correlation stay the same.

\paragraph{The case of time-varying parameters}

Next, we consider again the case of time-varying parameters as introduced in Section~\ref{TV_VIX} and Section~\ref{TV_SPX} for $\VIX$ and $\SPX$, respectively.
Although the joint calibration is mainly considered for short-dated options in the literature as $\VIX$ options are then more liquid, it is even more challenging to provide an accurate fit for both, short and long maturities. 
Allowing the parameters $\ell$ of our model to depend on time, in particular on the maturities, we are able to calibrate additionally to longer maturities than the ones considered in Figure~\ref{fig:joint_constant_ell}.
We consider for the choice of the primary process the same configuration as we used for Figure~\ref{fig:joint_constant_ell}. The procedure of our time-varying calibration routine is as follows:
\begin{enumerate}[label*=\arabic*.]
    \item Calibrate jointly $T_{1}^{\SPX}, T_{1}^\VIX$ and $T_{2}^\SPX$.
   \item Use the parameters from the calibration of $T_{j}^\SPX$ and $T_{j-1}^\VIX$ to fit jointly the maturities $T_{j+1}^\SPX$ and $T_{j}^\VIX$ for $j=2,\dots,J$.
\end{enumerate}
We consider $J=4$, where the last maturity for the SPX is 170 days, and the last maturity for the VIX is 77 days. For the first two maturities of the SPX and the first of the VIX we consider the same moneyness ranges as in Figure~\ref{fig:joint_constant_ell}, hence we specify here only the ranges for the longer maturities:
\vspace{0.2cm}
\begin{center}
\begin{tabular}{||c | c| c ||} 
 \hline
 $T_{2}^{\VIX}=0.1342$ & $T_{3}^{\VIX}=0.2875$& $T_{4}^{\VIX}=0.3833$ \\
 \hline\hline
(90\%,330\%) & (78\%,395\%) & (80\%,405\%)\\
 \hline
\end{tabular}
\end{center}
\begin{center}
\begin{tabular}{||c |c| c ||} 
 \hline
 $T_{3}^{\SPX}=0.2163$ & $T_{4}^{\SPX}=0.3696$ & $T_{5}^{\SPX}=0.4654$  \\
 \hline\hline
 (75\%,125\%)& (60\%,135\%) & (50\%,145\%)\\
 \hline
\end{tabular}
\end{center}
\vspace{0.2cm}
We observe that for this choice of maturities Assumption~\ref{ass:distance_mat} is satisfied. Hence the second expression for the time-varying $\VIX$ is used from Proposition \eqref{prop:vix_tv}. On the other hand in order to compute the price of the $\SPX$ options in the time-varying case we use the representation of the log-price provided in Proposition~\ref{prop:price_tv}. In \eqref{L_joint}, we employ  $\lambda=0.25$ for each calibration within the rolling procedure and we consider always as loss function $\Lcal^{\beta}$ as introduced in \eqref{greek_loss} for $\beta=0$. It is worth  mentioning that the initial parameter search discussed in Remark \ref{remark:initial_guess}, has been employed for calibrating jointly $T_{1}^{\SPX}, T_{1}^\VIX$ and $T_{2}^\SPX$, whereas for the next slices we have considered the previously calibrated parameters as starting point of the optimization.
\begin{figure}[H]
    \centering
    \includegraphics[scale=0.12]{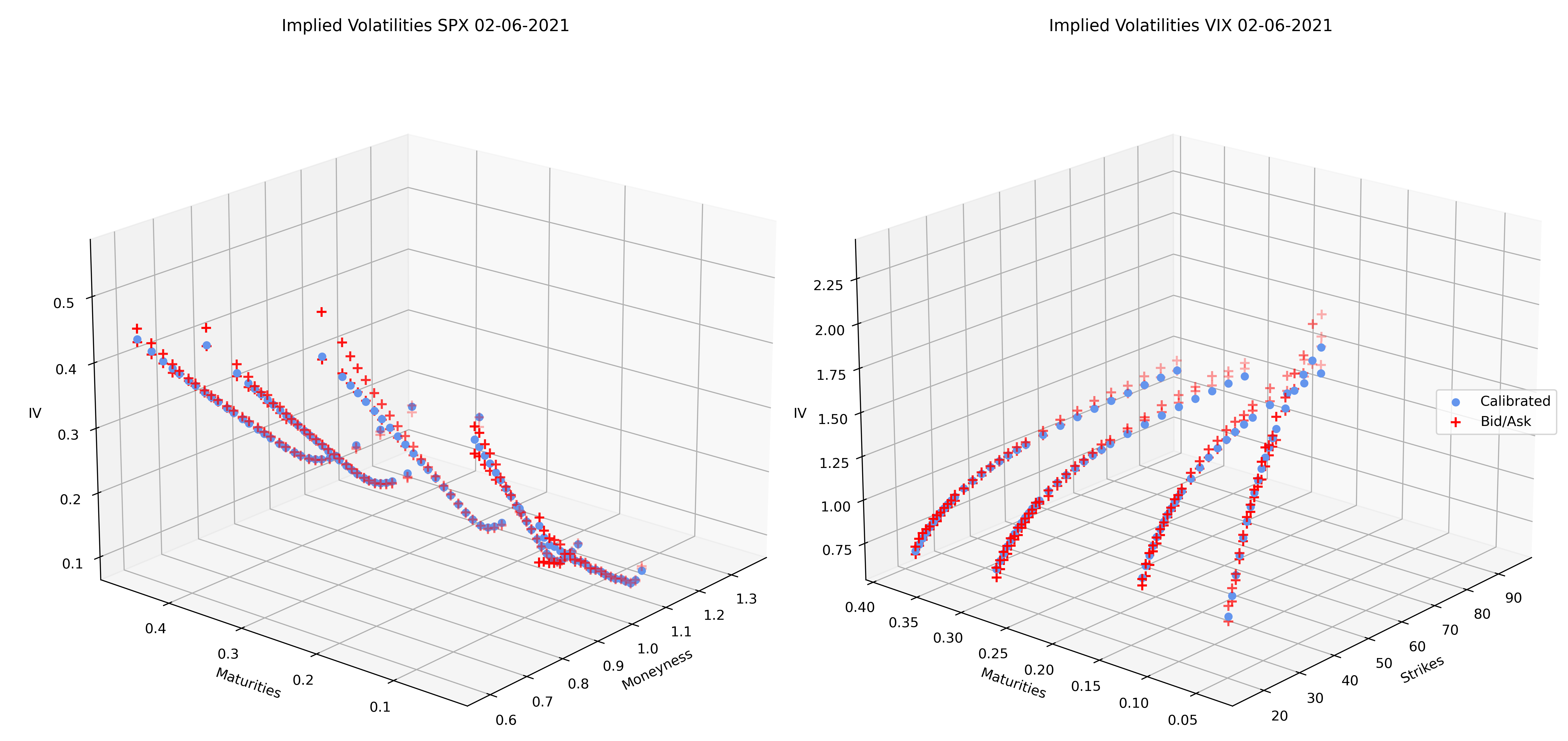}
    \caption{On the left-hand side: SPX smiles, in blue the calibrated implied volatilities and in red the bid-ask spreads. On the right-hand side: VIX smiles, in blue the calibrated implied volatilities and in red the bid-ask spreads.}
    \label{fig:rolling_surfaces}
\end{figure}
Finally we report the absolute relative error on the VIX futures' prices:

\vspace{0.2cm}
\begin{center}
\begin{tabular}{||c | c| c| c ||} 
 \hline
 $T_{1}^{\VIX}=0.0383$ & $T_{2}^{\VIX}=0.1205$ & $T_{3}^{\VIX}=0.1588$ & $T_{4}^{\VIX}=0.2108$  \\
 \hline\hline
 $\varepsilon_{T_{1}^\VIX}=1.2\cdot 10^{-11}$ &  $\varepsilon_{T_{2}^\VIX}=2.3\cdot 10^{-5}$ & $\varepsilon_{T_{3}^\VIX}=3.9\cdot 10^{-5}$& $\varepsilon_{T_{4}^\VIX}=2.9\cdot 10^{-6}$\\
 \hline
\end{tabular}
\end{center}
\vspace{0.2cm}

\subsubsection{Second approach}
Let us now consider the second approach described at the beginning of Section~\ref{numerical_result_joint}. Specifically, we consider a unique set of maturities for both $\SPX$ and $\VIX$ on the trading day of 02/06/2021. For this study, we do not consider time-varying parameters.
In the following table  we report the moneyness ranges for $\SPX$ options in the second row and on the last row the ones for $\VIX$ options:
\begin{center}
\begin{tabular}{||c |c| c| c||} 
 \hline
 $T_{1}=0.0192$ & $T_{2}=0.0383$ & $T_{3}=0.0575$ & $T_{4}=0.0767$\\
 \hline\hline
 (97$\%$,104$\%$) & (92$\%$,105$\%$) & (90$\%$,110$\%$) & (85$\%$,110$\%$)\\
  \hline\hline
 (90$\%$,200$\%$) & (90$\%$,220$\%$) & (90$\%$,290$\%$) & (90$\%$,290$\%$)\\
 \hline
\end{tabular}
\end{center}
We consider $\lambda=0.5$ and as loss function $\mathcal{L}$ we employ \eqref{greek_loss} with $\beta=1$ for VIX options and the same (without futures) for SPX options. 

\begin{figure}[H]
    \centering
    \includegraphics[scale=0.12]{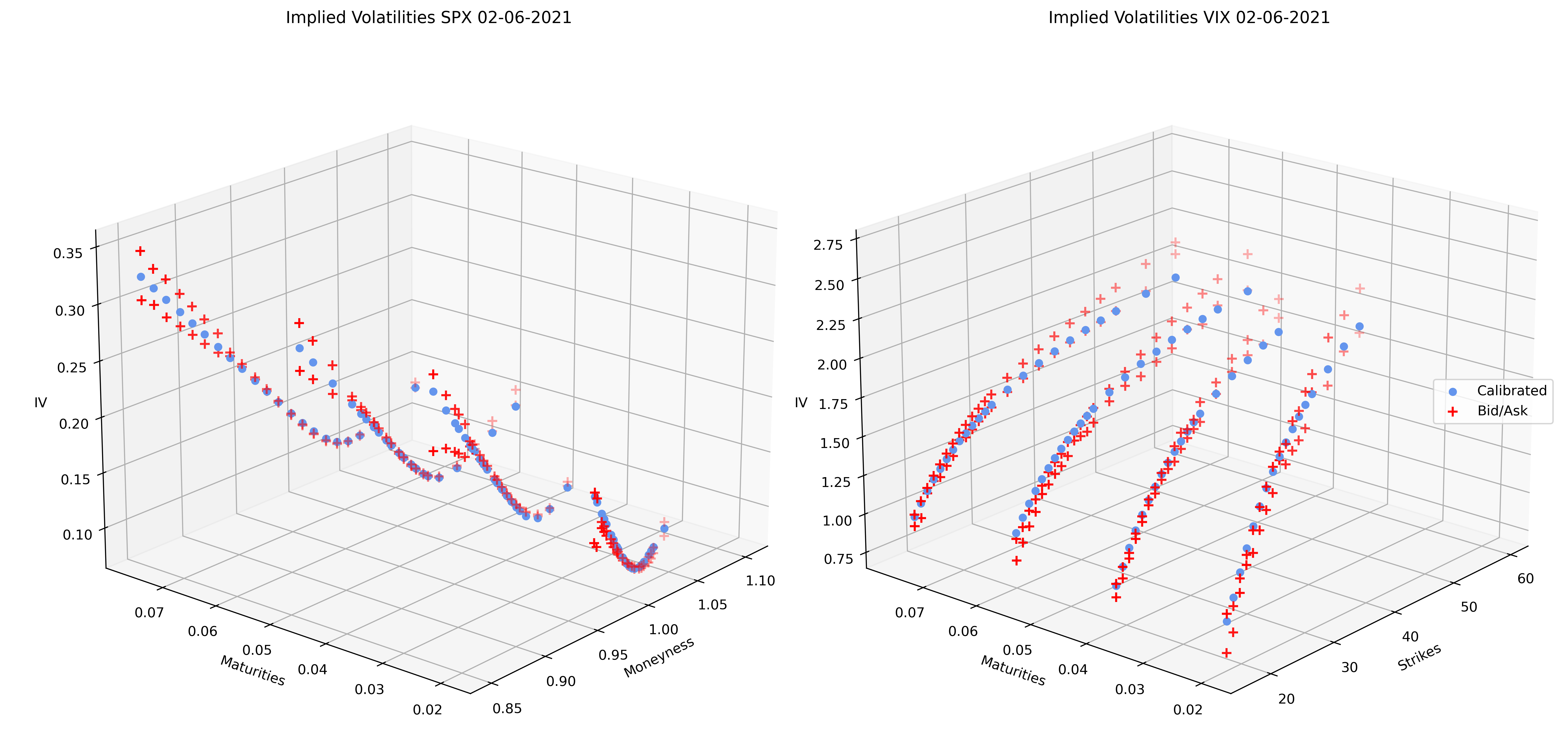}
    \caption{On the left-hand side: SPX smiles, in blue the calibrated implied volatilities and in red the bid-ask spreads. On the right-hand side: VIX smiles, in blue the calibrated implied volatilities and in red the bid-ask spreads.}
    \label{fig:gatheral_surfaces}
\end{figure}
We additionally report the relative error of the calibrated VIX futures:

\vspace{0.2cm}
\begin{center}
\begin{tabular}{||c | c| c| c ||} 
 \hline
 $T_{1}^{\VIX}=0.0192$ & $T_{2}^{\VIX}=0.0383$ & $T_{3}^{\VIX}=0.0575$ & $T_{4}^{\VIX}=0.0767$  \\
 \hline\hline
 $\varepsilon_{T_{1}^\VIX}=6.2\cdot 10^{-3}$ &  $\varepsilon_{T_{2}^\VIX}=1.2\cdot 10^{-5}$ & $\varepsilon_{T_{3}^\VIX}=1.3\cdot 10^{-2}$& $\varepsilon_{T_{4}^\VIX}=1.6\cdot 10^{-3}$\\
 \hline
\end{tabular}
\end{center}
\vspace{0.2cm}

\appendix
\section{Numerical results for the Brownian motion case}\label{appendix}

This appendix is dedicated to the calibration to VIX options only, similarly as in Section~\ref{numerical_result_vix}, however with the primary process $(X_t)_{t\geq 0}$ being simply correlated Brownian motions  (similarly as in \cite{CGS:23}) instead of OU-processes.
 
To be precise, we here model given by \eqref{model}-\eqref{sig-vol2}, where $(X_t)_{t\geq 0}$ is $2$-dimensional Brownian motion. The correlation matrix of $Z=(X,B)$ is specified, as in Section~\ref{numerical_result_vix}, namely by
\begin{equation*}
     \rho=	\begin{pmatrix}
	1 & -0.577 & 0.3 \\
	\cdot & 1 & -0.6\\
	\cdot & \cdot & 1 \\
	\end{pmatrix}.
\end{equation*}
 For the other parameters we consider a truncation's level $n=3$, we sample $N_{MC}=80000$ trajectories for Monte Carlo pricing, and we minimize the loss function \eqref{greek_loss} with $\beta=1$ to fit the same data-set as in Section~\ref{numerical_result_vix}.
\begin{center}
 \begin{figure}[H]
        \centering
		\includegraphics[scale=0.13]{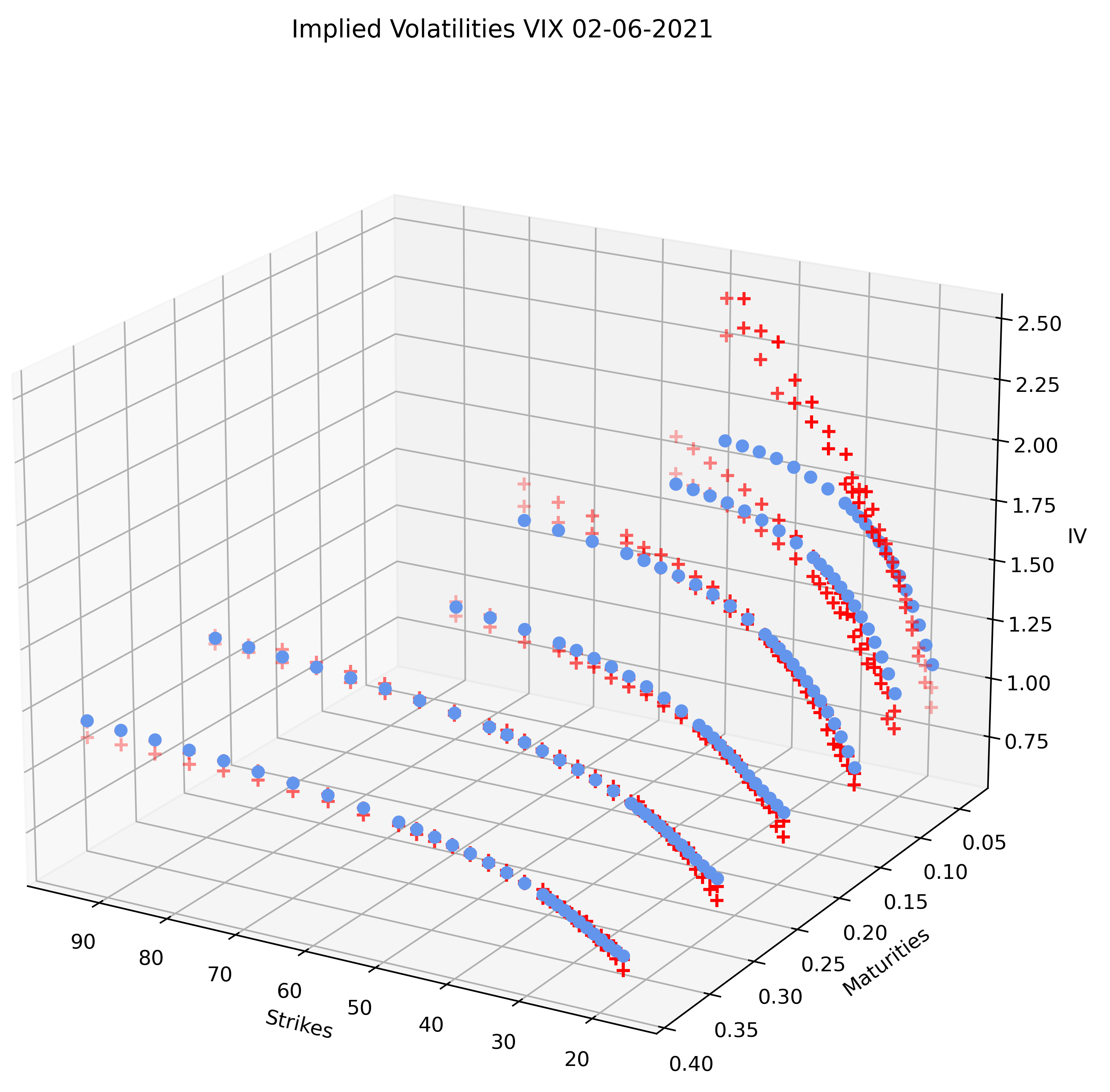}
		\caption{The red crosses denote the bid-ask spreads (of the implied volatilities) for each maturity, while the azure dots denote the calibrated implied volatilities of the model.}
		\label{fig:iv_calib_only_vix_0}
\end{figure}
\end{center}
 We observe that with this specification the model is neither able to calibrate to all future market prices (see Figure~\ref{fig:FTS_only_vix_0} below) nor to fit the market implied volatilites accurately.
 One can indeed see that the model implied volatilities often do not lie within the bid-ask spreads, in particular for high strikes and short maturities.
\begin{center}
\begin{tabular}{||c |c|c ||} 
 \hline
 $T_{1}=0.0383$ & $T_{2}=0.0767$ & $T_{3}=0.1342$\\ 
 \hline\hline
 $\varepsilon_{T_{1}}=2.1 \times 10^{-5}$ & $\varepsilon_{T_{2}}=2.7 \times 10^{-2}$ & $\varepsilon_{T_{3}}=2.1 \times 10^{-7}$\\ 
 \hline
\end{tabular}
\end{center}
\begin{center}
\begin{tabular}{||c |c|c||} 
 \hline
 $T_{4}=0.2108$ & $T_{5}=0.2875$ & $T_{6}=0.3833$\\ 
 \hline\hline
  $\varepsilon_{T_{4}}=1.6  \times 10^{-6}$ & $\varepsilon_{T_{5}}=6.8  \times 10^{-3}$ & $\varepsilon_{T_{6}}=2.1  \times 10^{-3}$ \\ 
 \hline
\end{tabular}
\end{center}

\begin{center}
 \begin{figure}[H]
        \centering
		\includegraphics[scale=0.45]{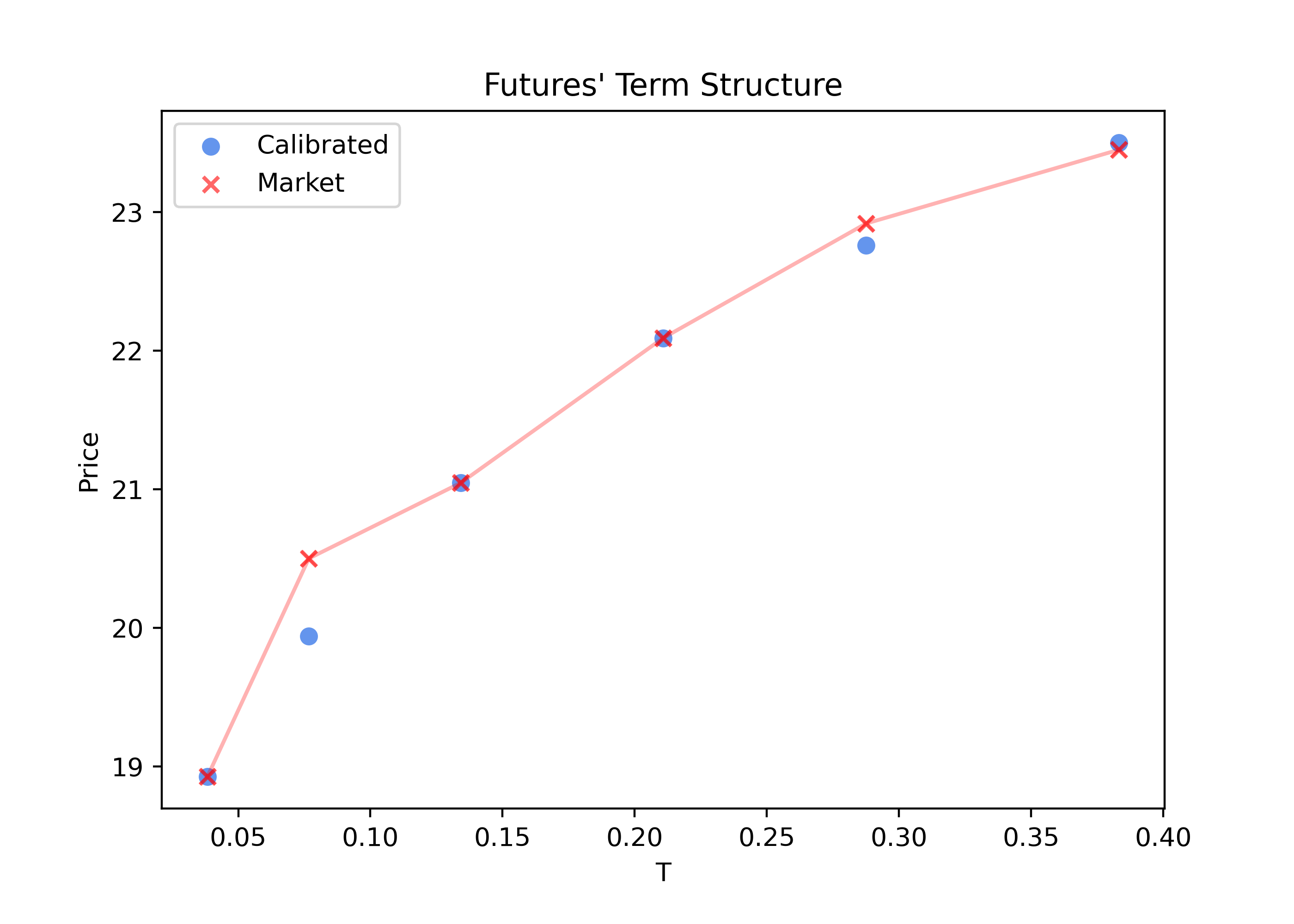}
		
		\caption{The blue circles denote the calibrated futures prices and the red crosses the market futures prices, in between a linear interpolation is applied.}
		\label{fig:FTS_only_vix_0}
\end{figure}
\end{center}

{ 
\section{On the stability of the calibrated parameters} \label{App:B}
We report here an analysis of the stability of the calibrated parameters, which is  for example essential for hedging since unstable parameters can lead to oscillating hedge ratios and high transaction cost. 

 For this purpose we have considered call options with the same time to maturity every trading week in the month of June 2021, i.e.~we take the following five dates: June 2, 2021; June 9, 2021;  June 16, 2021;  June 23, 2021 and  June 30, 2021. The times to maturity are on one hand $T_{1}^{\SPX}=$14 days and $T_{2}^{\SPX}=$44 days for SPX options and on the other hand $T_{1}^{\VIX}=$14 days for VIX (weekly) options. We employ the same primary process as of Section 7.1.1 for all 5 trading days.
 The goodness of fit of the respective implied volatilities is reported in Figures \ref{fig:June_09}--\ref{fig:June_30}, omitting June 2, 2021 as it is already presented in Section~\ref{numerical_result_joint}.  The obtained parameters are reported in Figures \ref{fig:calibrated_parameters1}--\ref{fig:calibrated_parameters3}. The labels of the $x$-axis refer to the signature's index $I$ and on the $y$-axis we have the corresponding coefficients $\ell_{I}$.
 
 Although no regularization on the model parameters has been enforced during the calibration, we observe from Figures \ref{fig:calibrated_parameters1}--\ref{fig:calibrated_parameters3} that most of the parameters which are close to zero in the first trading day, are kept in a neighbourhood of zero in the subsequent trading days. Likewise the more relevant parameters are mostly stable over the trading days or just slightly and continuously adjusted to fit the corresponding smiles, leading to an inherently stable calibration procedure.
 This experiment also indicates that -- even though we consider on purpose an overparametrized model -- it does not suffer from overfitting (which would be the case if the parameters were highly oscillatory).

\begin{figure}[H]
    \centering
    \includegraphics[width=0.8\textwidth]{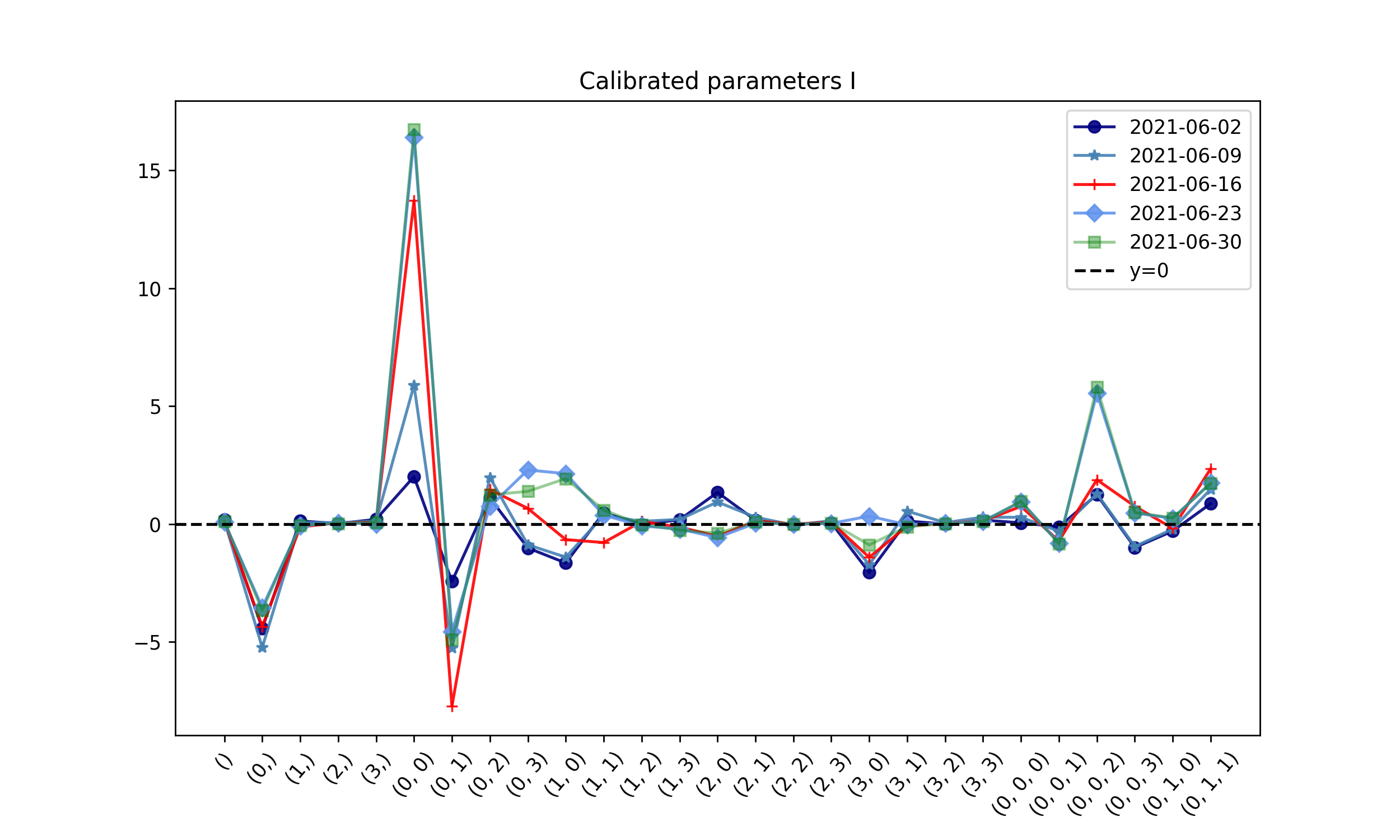}
    \caption{First subset of calibrated parameters $\ell_I$.}
    \label{fig:calibrated_parameters1}
\end{figure}
\begin{figure}[H]
    \centering
    \includegraphics[width=0.8\textwidth]{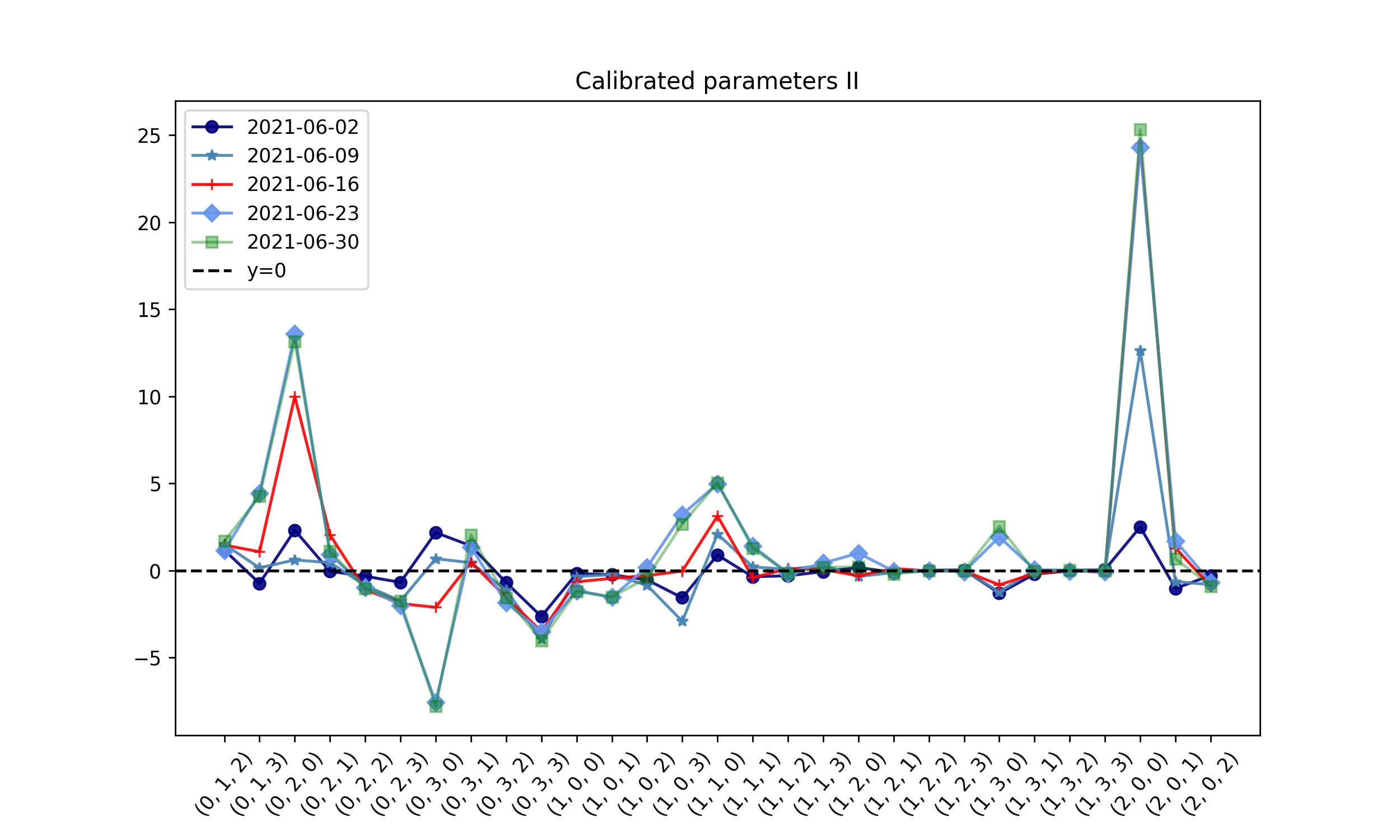}
    \caption{Second subset of calibrated parameters $\ell_I$.}
    \label{fig:calibrated_parameters2}
\end{figure}
\begin{figure}[H]
    \centering
    \includegraphics[width=0.8\textwidth]{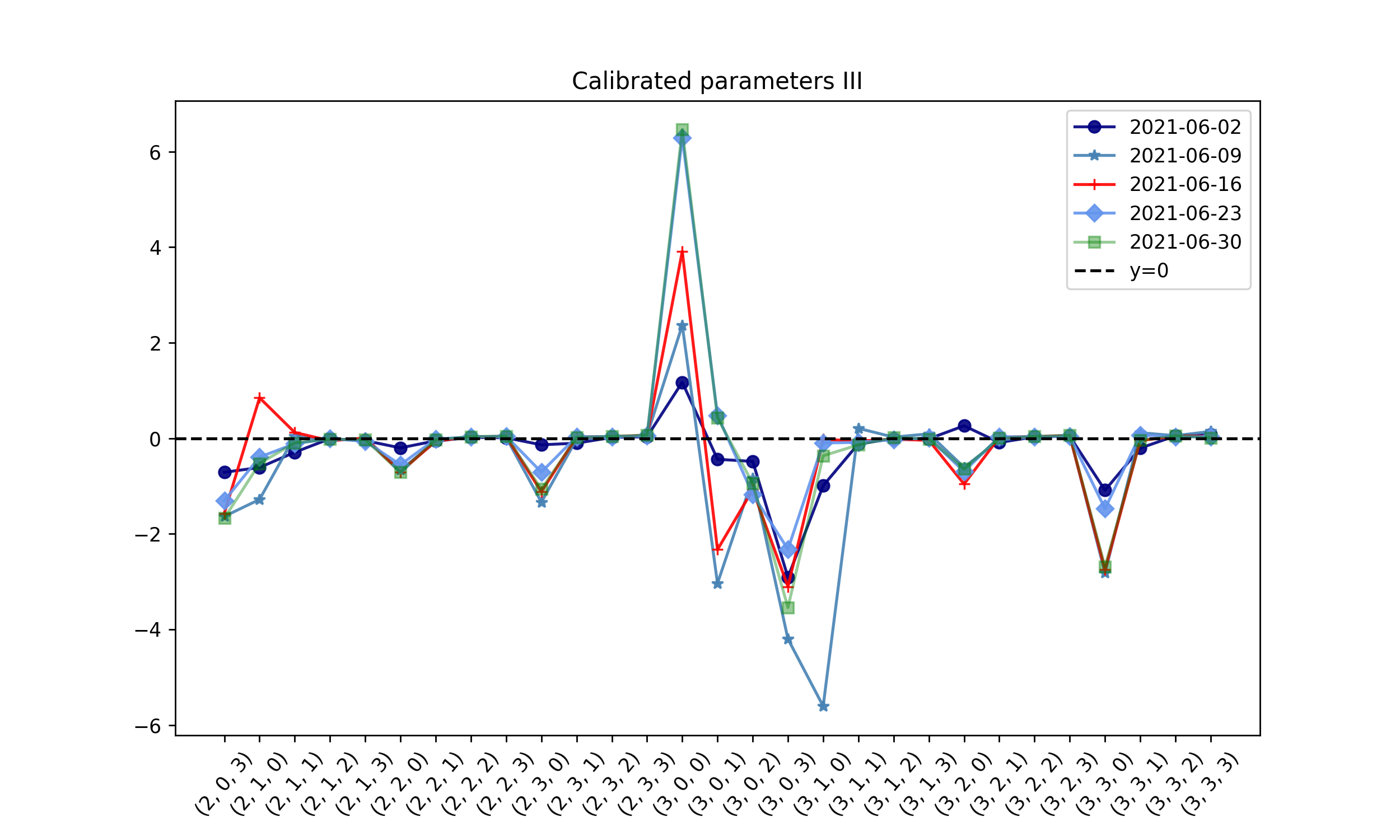}
    \caption{Third subset of calibrated parameters $\ell_I$.}
    \label{fig:calibrated_parameters3}
\end{figure}

\begin{figure}[H]
     \centering
     \begin{subfigure}[H]{0.35\textwidth}
         \centering
         \includegraphics[width=\textwidth]{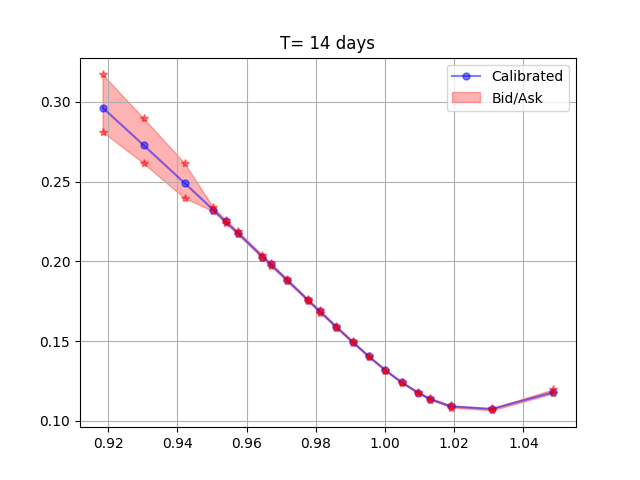}
     \end{subfigure}
     \begin{subfigure}[H]{0.35\textwidth}
         \centering
         \includegraphics[width=\textwidth]{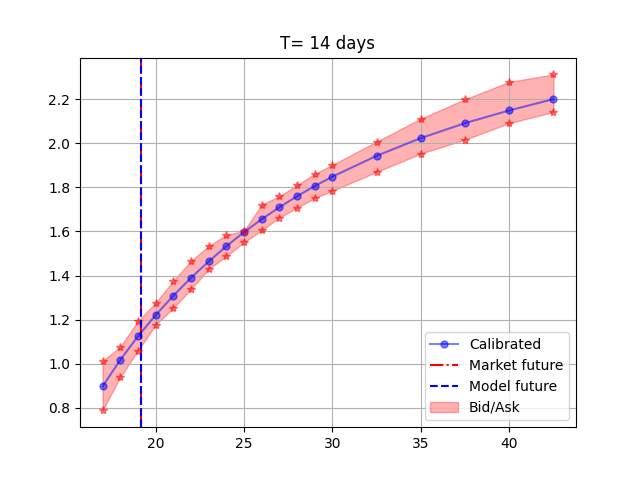}
     \end{subfigure}
     \begin{subfigure}[H]{0.35\textwidth}
         \centering
         \includegraphics[width=\textwidth]{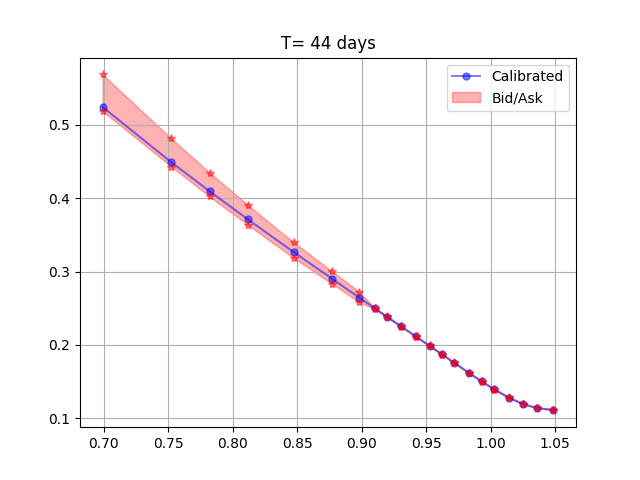}
     \end{subfigure}
        \caption{Implied volatilities as of June 9, 2021. In blue the calibrated implied volatility smiles from top-left at maturities $T_{1}^{\SPX}, T_{1}^{\VIX}, T_{2}^\SPX$. In red the corresponding bid-ask spreads. In the graphs of the VIX smile the red dashed line indicates the market future price at maturity and the blue dashed line the calibrated one.}
        \label{fig:June_09}
\end{figure}

\begin{figure}[H]
     \centering
     \begin{subfigure}[H]{0.35\textwidth}
         \centering
         \includegraphics[width=\textwidth]{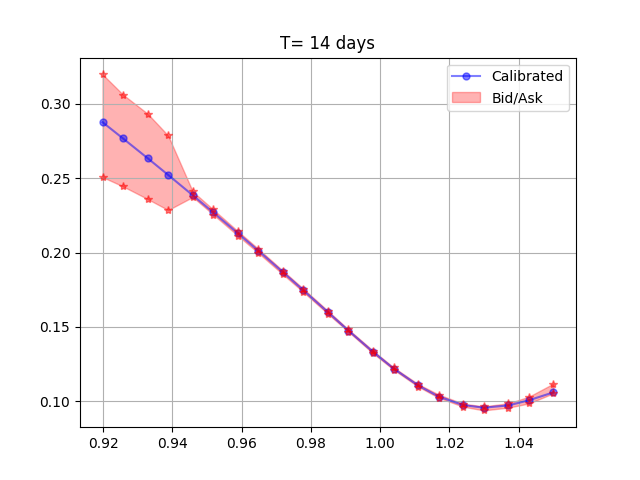}
     \end{subfigure}
     \begin{subfigure}[H]{0.35\textwidth}
         \centering
         \includegraphics[width=\textwidth]{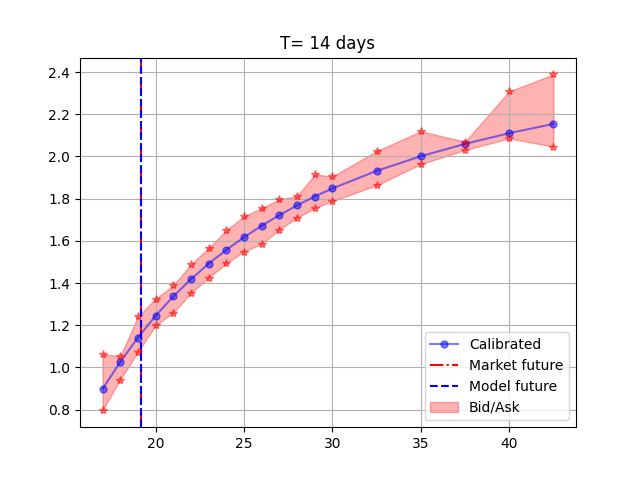}
     \end{subfigure}
     \begin{subfigure}[H]{0.35\textwidth}
         \centering
         \includegraphics[width=\textwidth]{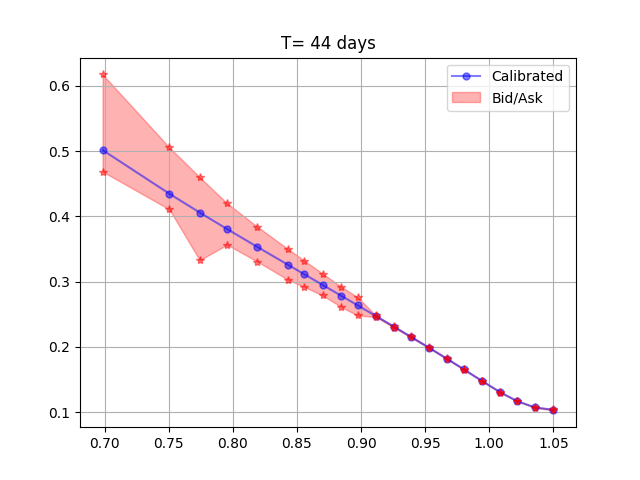}
     \end{subfigure}
        \caption{Implied volatilities as of June 16, 2021}
        \label{fig:June_16}
\end{figure}

\begin{figure}[H]
     \centering
     \begin{subfigure}[H]{0.35\textwidth}
         \centering
         \includegraphics[width=\textwidth]{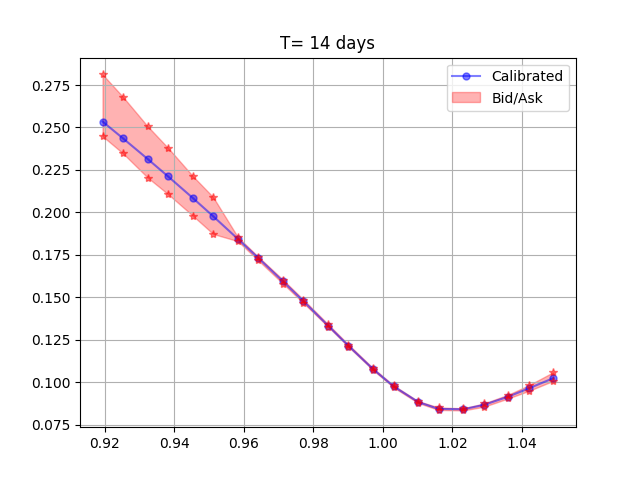}
     \end{subfigure}
     \begin{subfigure}[H]{0.35\textwidth}
         \centering
         \includegraphics[width=\textwidth]{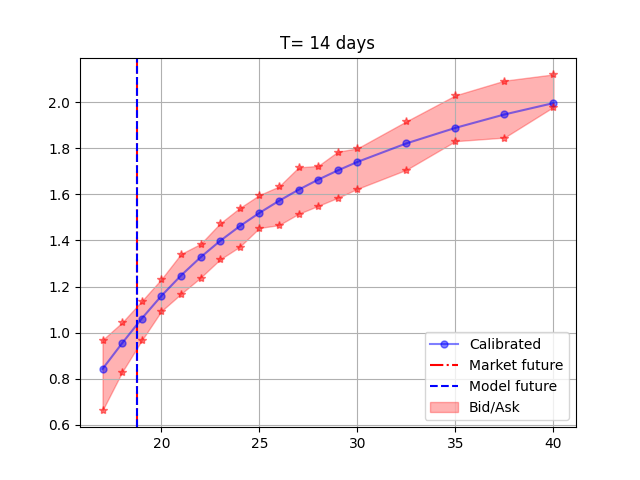}
     \end{subfigure}
     \begin{subfigure}[H]{0.35\textwidth}
         \centering
         \includegraphics[width=\textwidth]{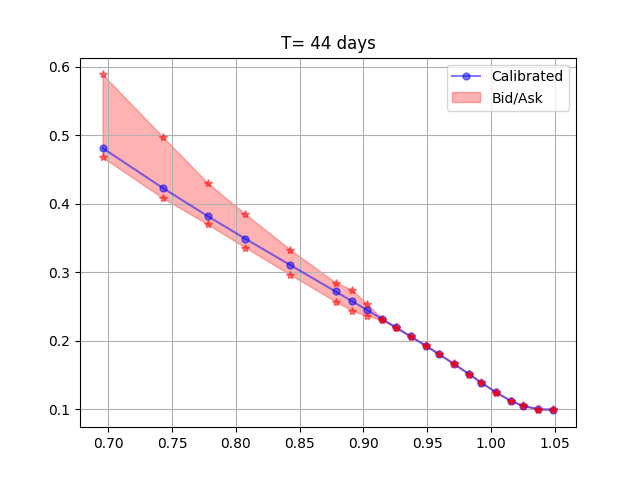}
     \end{subfigure}
        \caption{Implied volatilities as of June 23, 2021}
        \label{fig:June_23}
\end{figure}

\begin{figure}[H]
     \centering
     \begin{subfigure}[H]{0.35\textwidth}
         \centering
         \includegraphics[width=\textwidth]{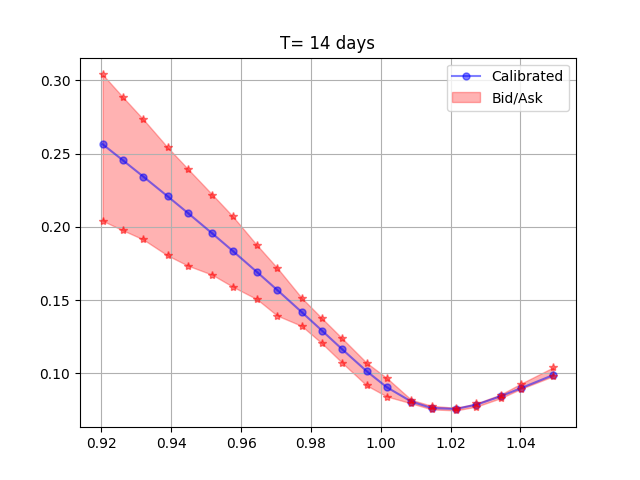}
     \end{subfigure}
     \begin{subfigure}[H]{0.35\textwidth}
         \centering
         \includegraphics[width=\textwidth]{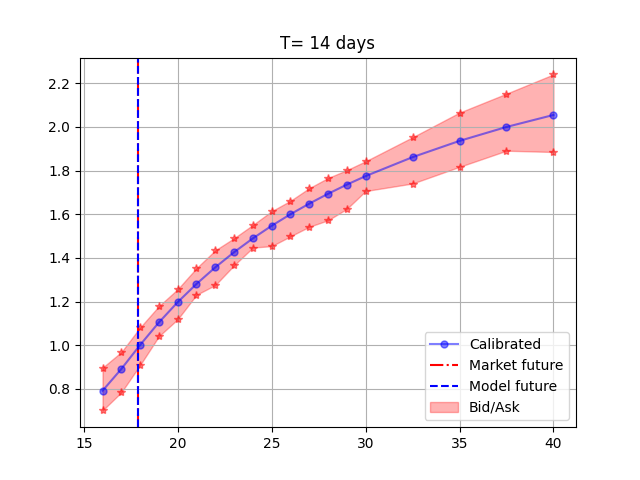}
     \end{subfigure}
     \begin{subfigure}[H]{0.35\textwidth}
         \centering
         \includegraphics[width=\textwidth]{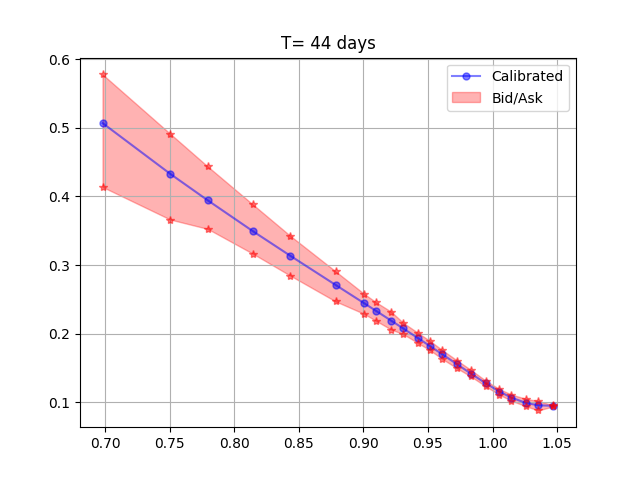}
     \end{subfigure}
        \caption{Implied volatilities as of June 30, 2021}
         \label{fig:June_30}
\end{figure}

}


\begin{thebibliography}{80}
	\providecommand{\natexlab}[1]{#1}
	\providecommand{\url}[1]{\texttt{#1}}
	\expandafter\ifx\csname urlstyle\endcsname\relax
	\providecommand{\doi}[1]{doi: #1}\else
	\providecommand{\doi}{doi: \begingroup \urlstyle{rm}\Url}\fi
	
	\bibitem[Abi~Jaber et~al.(2022{\natexlab{a}})Abi~Jaber, Illand, and
	Li]{AJIL:22}
	E.~Abi~Jaber, C.~Illand, and S.~Li.
	\newblock {Joint SPX-VIX calibration with Gaussian polynomial volatility
		models: deep pricing with quantization hints}.
	\newblock \emph{Preprint arXiv:2212.08297}, 2022{\natexlab{a}}.
	
	\bibitem[Abi~Jaber et~al.(2022{\natexlab{b}})Abi~Jaber, Illand, and
	Li]{AJIL:22b}
	E.~Abi~Jaber, C.~Illand, and S.~Li.
	\newblock {The quintic Ornstein-Uhlenbeck volatility model that jointly
		calibrates SPX \& VIX smiles}.
	\newblock \emph{Preprint arXiv:2212.10917}, 2022{\natexlab{b}}.
	
	\bibitem[Akyildirim et~al.(2023)Akyildirim, Gambara, Teichmann, and
	Zhou]{AGTZ:23}
	E.~Akyildirim, M.~Gambara, J.~Teichmann, and S.~Zhou.
	\newblock Randomized signature methods in optimal portfolio selection.
	\newblock \emph{Preprint arXiv:2312.16448}, 2023.
	
	\bibitem[Bader et~al.(2019)Bader, Blanes, and Casas]{BBC:19}
	P.~Bader, S.~Blanes, and F.~Casas.
	\newblock Computing the matrix exponential with an optimized taylor polynomial
	approximation.
	\newblock \emph{Mathematics}, 7\penalty0 (12):\penalty0 1174, 2019.
	
	\bibitem[Baldeaux and Badran(2014)]{BB:14}
	J.~Baldeaux and A.~Badran.
	\newblock Consistent modelling of {VIX} and equity derivatives using a 3/2 plus
	jumps model.
	\newblock \emph{Applied Mathematical Finance}, 21\penalty0 (4):\penalty0
	299--312, 2014.
	
	\bibitem[Bayer et~al.(2016)Bayer, Friz, and Gatheral]{BFG:16}
	C.~Bayer, P.~Friz, and J.~Gatheral.
	\newblock Pricing under rough volatility.
	\newblock \emph{Quantitative Finance}, 16\penalty0 (6):\penalty0 887--904,
	2016.
	
	\bibitem[Bayer et~al.(2019)Bayer, Horvath, Muguruza, Stemper, and
	Tomas]{BHMST:19}
	C.~Bayer, B.~Horvath, A.~Muguruza, B.~Stemper, and M.~Tomas.
	\newblock On deep calibration of (rough) stochastic volatility models.
	\newblock \emph{Preprint arXiv:1908.08806}, 2019.
	
	\bibitem[Bayer et~al.(2023)Bayer, Hager, Riedel, and Schoenmakers]{BHRS:21}
	C.~Bayer, P.~P. Hager, S.~Riedel, and J.~Schoenmakers.
	\newblock Optimal stopping with signatures.
	\newblock \emph{The Annals of Applied Probability}, 33\penalty0 (1):\penalty0
	238--273, 2023.
	
	\bibitem[Boedihardjo et~al.(2016)Boedihardjo, Geng, Lyons, and Yang]{BL:2016}
	H.~Boedihardjo, X.~Geng, T.~Lyons, and D.~Yang.
	\newblock The signature of a rough path: uniqueness.
	\newblock \emph{Advances in Mathematics}, 293:\penalty0 720--737, 2016.
	
	\bibitem[Boedihardjo et~al.(2021)Boedihardjo, Diehl, Mezzarobba, and
	Ni]{BDZN:21}
	H.~Boedihardjo, J.~Diehl, M.~Mezzarobba, and H.~Ni.
	\newblock {The expected signature of Brownian motion stopped on the boundary of
		a circle has finite radius of convergence}.
	\newblock \emph{{Bulletin of the London Mathematical Society}}, 53\penalty0
	(1):\penalty0 285--299, 2021.
	
	\bibitem[Bondi et~al.(2024{\natexlab{a}})Bondi, Livieri, and Pulido]{BLP:22}
	A.~Bondi, G.~Livieri, and S.~Pulido.
	\newblock Affine volterra processes with jumps.
	\newblock \emph{Stochastic Processes and their Applications}, 168:\penalty0
	104264, 2024{\natexlab{a}}.
	
	\bibitem[Bondi et~al.(2024{\natexlab{b}})Bondi, Pulido, and Scotti]{BPS:22}
	A.~Bondi, S.~Pulido, and S.~Scotti.
	\newblock {The rough Hawkes Heston stochastic volatility model}.
	\newblock \emph{Mathematical Finance}, 1--45, 2024{\natexlab{b}}.
	
	\bibitem[Bourgey and De~Marco(2022)]{BDM:21}
	F.~Bourgey and S.~De~Marco.
	\newblock {Multilevel Monte Carlo simulation for VIX options in the rough
		Bergomi model}.
	\newblock \emph{Journal of Computational Finance}, 26\penalty0
	(2):\penalty0 53--82, 2022.
	
	\bibitem[Buehler et~al.(2020)Buehler, Horvath, Lyons, Perez~Arribas, and
	Wood]{BHLPW:20}
	H.~Buehler, B.~Horvath, T.~Lyons, I.~Perez~Arribas, and B.~Wood.
	\newblock {A data-driven market simulator for small data environments}.
	\newblock \emph{{Preprint arXiv:2006.14498}}, 2020.
	
	\bibitem[Cass and Ferrucci(2024)]{RFC:22}
	T.~Cass and E.~Ferrucci.
	\newblock On the Wiener chaos expansion of the signature of a Gaussian process.
	\newblock \emph{Probability Theory and Related Fields}, 189\penalty0:\penalty0 909-947, 2024.
	
	\bibitem[Chen(1957)]{C:57}
	K.~T. Chen.
	\newblock Integration of paths, geometric invariants and a generalized
	{B}aker-{H}ausdorff formula.
	\newblock \emph{Annals of Mathematics},  65\penalty0 (1):\penalty0 163–178, 1957.
	
	\bibitem[Chen(1977)]{C:77}
	K.~T. Chen.
	\newblock Iterated path integrals.
	\newblock \emph{Bulletin of the American Mathematical Society}, 83\penalty0:\penalty0 831-879, 1977.
	
	\bibitem[Cohen et~al.(2023)Cohen, Lui, Malpass, Mantoan, Nesheim, de~Paula,
	Reeves, Scott, Small, and Yang]{C:23}
	S.~N. Cohen, S.~Lui, W.~Malpass, G.~Mantoan, L.~Nesheim, A.~de~Paula,
	A.~Reeves, C.~Scott, E.~Small, and L.~Yang.
	\newblock Nowcasting with signature methods.
	\newblock \emph{Preprint arXiv:2305.10256}, 2023.
	
	\bibitem[Compagnoni et~al.(2023)Compagnoni, Scampicchio, Biggio, Orvieto,
	Hofmann, and Teichmann]{CSBOHT:23}
	E.~M. Compagnoni, A.~Scampicchio, L.~Biggio, A.~Orvieto, T.~Hofmann, and
	J.~Teichmann.
	\newblock {On the effectiveness of Randomized Signatures as Reservoir for
		Learning Rough Dynamics}.
	\newblock In \emph{2023 International Joint Conference on Neural Networks
		(IJCNN)}, 1--8, 2023.
	\newblock \doi{10.1109/IJCNN54540.2023.10191624}.
	
	\bibitem[Cont and Das(2023)]{CD:23}
	R.~Cont and P.~Das.
	\newblock Quadratic variation and quadratic roughness.
	\newblock \emph{Bernoulli}, 29\penalty0 (1):\penalty0 496--522, 2023.
	
	\bibitem[Cont and Kokholm(2013)]{CK:13}
	R.~Cont and T.~Kokholm.
	\newblock A consistent pricing model for index options and volatility
	derivatives.
	\newblock \emph{Mathematical Finance}, 23\penalty0 (2):\penalty0 248--274,
	2013.
	
	\bibitem[Coutin and Qian(2002)]{CQ:02}
	L.~Coutin and Z.~Qian.
	\newblock Stochastic analysis, rough path analysis and fractional brownian
	motions.
	\newblock \emph{Probability theory and related fields}, 122\penalty0
	(1):\penalty0 108--140, 2002.
	
	\bibitem[Cuchiero and M{\"o}ller(2023)]{CM:22}
	C.~Cuchiero and J.~M{\"o}ller.
	\newblock {Signature Methods in Stochastic Portfolio Theory}.
	\newblock \emph{Preprint arXiv:2310.02322}, 2023.
	
	\bibitem[Cuchiero and Svaluto-Ferro(2021)]{CS:21}
	C.~Cuchiero and S.~Svaluto-Ferro.
	\newblock {Infinite-dimensional polynomial processes}.
	\newblock \emph{Finance and Stochastics}, 25\penalty0 (2):\penalty0 383--426,
	2021.
	
	\bibitem[Cuchiero et~al.(2012)Cuchiero, Keller-Ressel, and Teichmann]{CKT:12}
	C.~Cuchiero, M.~Keller-Ressel, and J.~Teichmann.
	\newblock Polynomial processes and their applications to mathematical finance.
	\newblock \emph{Finance and Stochastics}, 16:\penalty0 711--740, 2012.
	
	\bibitem[Cuchiero et~al.(2022)Cuchiero, Gonon, Grigoryeva, Ortega, and
	Teichmann]{CGGOT:21a}
	C.~Cuchiero, L.~Gonon, L.~Grigoryeva, J.-P. Ortega, and J.~Teichmann.
	\newblock Discrete-time signatures and randomness in reservoir computing.
	\newblock \emph{IEEE Transactions on Neural Networks and Learning Systems},
	33\penalty0 (11):\penalty0 6321--6330, 2022.
	
	\bibitem[Cuchiero et~al.(2023{\natexlab{a}})Cuchiero, Gazzani, and
	Svaluto-Ferro]{CGS:23}
	C.~Cuchiero, G.~Gazzani, and S.~Svaluto-Ferro.
	\newblock Signature-based models: Theory and calibration.
	\newblock \emph{SIAM Journal on Financial Mathematics}, 14\penalty0
	(3):\penalty0 910--957, 2023{\natexlab{a}}.
	
	\bibitem[Cuchiero et~al.(2023{\natexlab{b}})Cuchiero, Svaluto-Ferro, and
	Teichmann]{CST:22}
	C.~Cuchiero, S.~Svaluto-Ferro, and J.~Teichmann.
	\newblock {Signature SDEs from an affine and polynomial perspective}.
	\newblock \emph{Preprint arXiv:2302.01362}, 2023{\natexlab{b}}.
	
	\bibitem[Cuchiero et~al.(2024{\natexlab{a}})Cuchiero, Di~Persio, Guida, and
	Svaluto-Ferro]{CGDS:21}
	C.~Cuchiero, L.~Di~Persio, F.~Guida, and S.~Svaluto-Ferro.
	\newblock Measure-valued affine and polynomial diffusions.
	\newblock \emph{Stochastic Processes and their Applications}, page 104392,
	2024{\natexlab{a}}.
	
	\bibitem[Cuchiero et~al.(2024{\natexlab{b}})Cuchiero, Primavera, and
	Svaluto-Ferro]{CPS:22}
	C.~Cuchiero, F.~Primavera, and S.~Svaluto-Ferro.
	\newblock {Universal approximation theorems for continuous functions of
		c\`adl\`ag paths and L\'evy-type signature models}.
	\newblock \emph{{Forthcoming in Finance and Stochastics}}, 2024{\natexlab{b}}.
	
	\bibitem[Delbaen and Schachermayer(1994)]{DS:94}
	F.~Delbaen and W.~Schachermayer.
	\newblock A general version of the fundamental theorem of asset pricing.
	\newblock \emph{Mathematische annalen}, 300\penalty0 (1):\penalty0 463--520,
	1994.
	
	\bibitem[Di~Nunno et~al.(2023)Di~Nunno, Kubilius, Mishura, and
	Yurchenko-Tytarenko]{DKMY:23}
	G.~Di~Nunno, K.~Kubilius, Y.~Mishura, and A.~Yurchenko-Tytarenko.
	\newblock From constant to rough: A survey of continuous volatility modeling.
	\newblock \emph{Mathematics}, 11\penalty0 (19):\penalty0 4201, 2023.
	
	\bibitem[Fawcett(2003)]{FW:03}
	T.~Fawcett.
	\newblock {Problems in stochastic analysis. Connections between rough paths and
		non-commutative harmonic analysis}.
	\newblock \emph{PhD Thesis, Univ. Oxford}, 2003.
	
	\bibitem[Filipovi{\'c} and Larsson(2016)]{FL:16}
	D.~Filipovi{\'c} and M.~Larsson.
	\newblock Polynomial diffusions and applications in finance.
	\newblock \emph{Finance and Stochastics}, 20\penalty0 (4):\penalty0 931--972,
	2016.
	
	\bibitem[Fouque and Saporito(2018)]{FS:18}
	J.-P. Fouque and Y.~Saporito.
	\newblock {Heston stochastic vol-of-vol model for joint calibration of {VIX}
		and S\&P 500 options}.
	\newblock \emph{Quantitative Finance}, 18\penalty0 (6):\penalty0 1003--1016,
	2018.
	
	\bibitem[Gatheral(2008)]{G:08}
	J.~Gatheral.
	\newblock {Consistent modeling of {SPX} and {VIX} options}.
	\newblock \emph{Bachelier Congress}, 2008.
	
	\bibitem[Gatheral(2011)]{G:11}
	J.~Gatheral.
	\newblock \emph{{The volatility surface: a practitioner's guide}}.
	\newblock John Wiley \& Sons, 2011.
	
	\bibitem[Gatheral et~al.(2018)Gatheral, Jaisson, and Rosenbaum]{GJR:18}
	J.~Gatheral, T.~Jaisson, and M.~Rosenbaum.
	\newblock Volatility is rough.
	\newblock \emph{Quantitative Finance}, 18\penalty0 (6):\penalty0 933--949,
	2018.
	
	\bibitem[Gatheral et~al.(2020)Gatheral, Jusselin, and Rosenbaum]{GJR:20}
	J.~Gatheral, P.~Jusselin, and M.~Rosenbaum.
	\newblock {The quadratic rough Heston model and the joint S\&P 500/{VIX} smile
		calibration problem}.
	\newblock \emph{Risk, May} 2020.
	
	\bibitem[Gazzani and Guyon(2024)]{GG:23}
	G.~Gazzani and J.~Guyon.
	\newblock Pricing and calibration in the 4-factor path-dependent volatility
	model.
	\newblock \emph{Preprint arXiv:2406.02319}, 2024.
	
	\bibitem[Gierjatowicz et~al.(2022)Gierjatowicz, Sabate-Vidales, Siska, Szpruch,
	and Zuric]{GSSSZ:20}
	P.~Gierjatowicz, M.~Sabate-Vidales, D.~Siska, L.~Szpruch, and Z.~Zuric.
	\newblock {Robust pricing and hedging via neural SDEs}.
	\newblock \emph{{Journal of Computational Finance}}, 26\penalty0 (3):\penalty0 1--32,
	2023.
	
	\bibitem[Glasserman(2004)]{G:04}
	P.~Glasserman.
	\newblock \emph{{Monte Carlo methods in financial engineering}}, volume~53.
	\newblock Springer, 2004.
	
	\bibitem[Grzelak(2022)]{GL:22}
	L.~A. Grzelak.
	\newblock {On Randomization of Affine Diffusion Processes with Application to
		Pricing of Options on VIX and S\&P 500}.
	\newblock \emph{Preprint arXiv:2208.12518}, 2022.
	
	\bibitem[Guerreiro and Guerra(2023)]{GG:22}
	H.~Guerreiro and J.~Guerra.
	\newblock {VIX pricing in the rBergomi model under a regime switching change of
		measure}.
	\newblock \emph{Quantitative Finance}, 23\penalty0 (5):\penalty0 721--738,
	2023.
	
	\bibitem[Guo et~al.(2022{\natexlab{a}})Guo, Loeper, Ob{\l}{\'o}j, and
	Wang]{GLOW:20}
	I.~Guo, G.~Loeper, J.~Ob{\l}{\'o}j, and S.~Wang.
	\newblock {Joint modeling and calibration of SPX and VIX by optimal transport}.
	\newblock \emph{SIAM Journal on Financial Mathematics}, 13\penalty0
	(1):\penalty0 1--31, 2022{\natexlab{a}}.
	
	\bibitem[Guo et~al.(2022{\natexlab{b}})Guo, Loeper, and Wang]{GLOW:21}
	I.~Guo, G.~Loeper, and S.~Wang.
	\newblock Calibration of local-stochastic volatility models by optimal
	transport.
	\newblock \emph{Mathematical Finance}, 32\penalty0 (1):\penalty0 46--77,
	2022{\natexlab{b}}.
	
	\bibitem[Guyon(2020{\natexlab{a}})]{G:20}
	J.~Guyon.
	\newblock Inversion of convex ordering in the {VIX} market.
	\newblock \emph{Quantitative Finance}, 20\penalty0 (10):\penalty0 1597--1623,
	2020{\natexlab{a}}.
	
	\bibitem[Guyon(2020{\natexlab{b}})]{G:20a}
	J.~Guyon.
	\newblock {The joint S\&P 500/{VIX} smile calibration puzzle solved}.
	\newblock \emph{Risk, April}, 2020{\natexlab{b}}.
	
	\bibitem[Guyon(2023)]{G:21}
	J.~Guyon.
	\newblock {D}ispersion-constrained martingale {S}chr{\"o}dinger problems and
	the exact joint {S\&P} 500/{VIX} smile calibration puzzle.
	\newblock \emph{Finance and Stochastics},  28\penalty0 :\penalty0 27-79,
	2023.
	
	\bibitem[Bourgey and Guyon(2022)]{GB:22}
	F.~Bourgey and J.~Guyon.
	\newblock Fast exact joint {S\&P 500/VIX} smile calibration in discrete and
	continuous time.
	\newblock \emph{Risk, February}, 2024.
	
	\bibitem[Guyon and Lekeufack(2023)]{GLJ:22}
	J.~Guyon and J.~Lekeufack.
	\newblock Volatility is (mostly) path-dependent.
	\newblock \emph{Quantitative Finance}, pages 1--38, 2023.
	
	\bibitem[Guyon and Mustapha(2023)]{GM:22}
	J.~Guyon and S.~Mustapha.
	\newblock Neural joint {S\&P 500/VIX} smile calibration.
	\newblock \emph{Risk, December}, 2023.
	
	\bibitem[Hagan et~al.(2002)Hagan, Kumar, Lesniewski, and Woodward]{HKLW:02}
	P.~S. Hagan, D.~Kumar, A.~S. Lesniewski, and D.~Woodward.
	\newblock {Managing smile risk}.
	\newblock \emph{{The Best of Wilmott}}, 1:\penalty0 249--296, 2002.
	
	\bibitem[Kalsi et~al.(2020)Kalsi, Lyons, and I.]{KLP:20}
	J.~Kalsi, T.~Lyons, and P.-A. I.
	\newblock Optimal execution with rough path signatures.
	\newblock \emph{SIAM Journal on Financial Mathematics}, 11\penalty0
	(2):\penalty0 470--493, 2020.
	
	\bibitem[Kardaras et~al.(2015)Kardaras, Kreher, and Nikeghbali]{KKN:15}
	C.~Kardaras, D.~Kreher, and A.~Nikeghbali.
	\newblock Strict local martingales and bubbles.
	\newblock \emph{The Annals of Applied Probability}, 25\penalty0 (4):\penalty0
	1827--1867, 2015.
	
	\bibitem[Kidger and Lyons(2020)]{KL:20}
	P.~Kidger and T.~Lyons.
	\newblock {Signatory: differentiable computations of the signature and
		logsignature transforms, on both CPU and GPU}.
	\newblock In \emph{{International Conference on Learning Representations}},
	2020.
	
	\bibitem[Kokholm and Stisen(2015)]{KS:15}
	T.~Kokholm and M.~Stisen.
	\newblock {Joint pricing of {VIX} and {SPX} options with stochastic volatility
		and jump models}.
	\newblock \emph{The Journal of Risk Finance}, 16\penalty0 (1):\penalty0
	27-48, 2015.
	
	\bibitem[Lemahieu et~al.(2023)Lemahieu, Boudt, and Wyns]{LBW:23}
	E.~Lemahieu, K.~Boudt, and M.~Wyns.
	\newblock {Generating drawdown-realistic financial price paths using path
		signatures}.
	\newblock \emph{Preprint arXiv:2309.04507}, 2023.
	
	\bibitem[Lyons and Ni(2015)]{LN:15}
	T.~Lyons and H.~Ni.
	\newblock {Expected signature of Brownian motion up to the first exit time from
		a bounded domain}.
	\newblock \emph{{The Annals of Probability}}, 43\penalty0 (5):\penalty0
	2729--2762, 2015.
	
	\bibitem[Lyons and Victoir(2004)]{LV:04}
	T.~Lyons and N.~Victoir.
	\newblock {Cubature on Wiener space}.
	\newblock \emph{{Proceedings of the Royal Society of London. Series A:
			Mathematical, Physical and Engineering Sciences}}, 460\penalty0
	(2041):\penalty0 169--198, 2004.
	
	\bibitem[Lyons et~al.(2007)Lyons, Caruana, and L{\'e}vy]{LCL:07}
	T.~Lyons, M.~Caruana, and T.~L{\'e}vy.
	\newblock \emph{Differential equations driven by rough paths}.
	\newblock Springer, 2007.
	
	\bibitem[Lyons et~al.(2020)Lyons, Nejad, and Perez~Arribas]{LNP:20}
	T.~Lyons, S.~Nejad, and I.~Perez~Arribas.
	\newblock Non-parametric pricing and hedging of exotic derivatives.
	\newblock \emph{{Applied Mathematical Finance}}, 27\penalty0 (6):\penalty0
	457--494, 2020.
	
	\bibitem[Min and Hu(2021)]{MH:21}
	M.~Min and R.~Hu.
	\newblock Signatured deep fictitious play for mean field games with common
	noise.
	\newblock In \emph{International Conference on Machine Learning}, pages
	7736--7747. PMLR, 2021.
	
	\bibitem[Moler and Van~Loan(2003)]{MVL:03}
	C.~Moler and C.~Van~Loan.
	\newblock Nineteen dubious ways to compute the exponential of a matrix,
	twenty-five years later.
	\newblock \emph{SIAM review}, 45\penalty0 (1):\penalty0 3--49, 2003.
	
	\bibitem[Neuberger(1994)]{N:94}
	A.~Neuberger.
	\newblock The log contract.
	\newblock \emph{Journal of portfolio management}, 20:\penalty0 74--74, 1994.
	
	\bibitem[Liao et~al.(2023) Liao, Ni, Sabate-Vidales, Szpruch, Wiese, and Xiao]{NSSBWL:21}
	S.~Liao, H.~Ni, M.~Sabate-Vidales, L.~Szpruch, M.~Wiese, and B.~Xiao.
	\newblock Sig-Wasserstein GANs for conditional time series generation.
	\newblock \emph{Mathematical Finance, Special Issue on Machine Learning in Finance}, 34\penalty0 (2):\penalty0 622--670, 2023.
	
	\bibitem[Ning et~al.(2023)Ning, Chakraborty, and Lee]{NCL:23}
	B.~Ning, P.~Chakraborty, and K.~Lee.
	\newblock {Optimal Entry and Exit with Signature in Statistical Arbitrage}.
	\newblock \emph{Preprint arXiv:2309.16008}, 2023.
	
	\bibitem[Pacati et~al.(2018)Pacati, Pompa, and Ren{\`o}]{PPR:18}
	C.~Pacati, G.~Pompa, and R.~Ren{\`o}.
	\newblock {Smiling twice: the Heston++ model}.
	\newblock \emph{Journal of Banking \& Finance}, 96:\penalty0 185--206, 2018.
	
	\bibitem[Papanicolaou and Sircar(2014)]{PS:14}
	A.~Papanicolaou and R.~Sircar.
	\newblock {A regime-switching Heston model for {VIX} and S\&P 500 implied
		volatilities}.
	\newblock \emph{Quantitative Finance}, 14\penalty0 (10):\penalty0 1811--1827,
	2014.
	
	\bibitem[Perez~Arribas et~al.(2020)Perez~Arribas, Salvi, and Szpruch]{PSS:20}
	I.~Perez~Arribas, C.~Salvi, and L.~Szpruch.
	\newblock {Sig-SDEs model for quantitative finance}.
	\newblock In \emph{Proceedings of the First ACM International Conference on AI
		in Finance}, pages 1--8, 2020.
	
	\bibitem[Quarteroni et~al.(2010)Quarteroni, Sacco, and Saleri]{QSS:10}
	A.~Quarteroni, R.~Sacco, and F.~Saleri.
	\newblock \emph{Numerical mathematics}, volume~37.
	\newblock Springer Science \& Business Media, 2010.
	
	\bibitem[Ree(1958)]{R:58}
	R.~Ree.
	\newblock Lie elements and an algebra associated with shuffles.
	\newblock \emph{Annals of Mathematics}, 68\penalty0 (2):\penalty0 210-220,
	1958.
	
	\bibitem[Reizenstein and Graham(2018)]{RG:18}
	J.~Reizenstein and B.~Graham.
	\newblock The iisignature library: efficient calculation of iterated-integral
	signatures and log signatures.
	\newblock \emph{Preprint arXiv:2006.00218}, 2018.
	
	\bibitem[Rhoads(2011)]{R:11}
	R.~Rhoads.
	\newblock \emph{Trading {VIX} derivatives: trading and hedging strategies using
		VIX futures, options, and exchange-traded notes}, volume 503.
	\newblock John Wiley \& Sons, 2011.
	
	\bibitem[Rogers(2023)]{R:19}
	L.~Rogers.
	\newblock Things we think we know.
	\newblock In \emph{Options — 45 Years since the Publication of the
		Black–Scholes–Merton Model}, chapter~9, pages 173--184. 2023.
	
	\bibitem[R{\o}mer(2022)]{R:22}
	S.~R{\o}mer.
	\newblock {Empirical analysis of rough and classical stochastic volatility
		models to the {SPX} and {VIX} markets}.
	\newblock \emph{Quantitative Finance},  22\penalty0 (10):\penalty0 1805-1838, 2022.
	
	\bibitem[Rosenbaum and Zhang(2021)]{RZ:21}
	M.~Rosenbaum and J.~Zhang.
	\newblock Deep calibration of the quadratic rough {H}eston model.
	\newblock \emph{Risk, September}, 2022.
	
	\bibitem[Sepp(2008)]{S:08}
	A.~Sepp.
	\newblock {VIX option pricing in a jump-diffusion model}.
	\newblock \emph{Risk, April}, 2008.
	
	\bibitem[Stein and Stein(1991)]{SS:91}
	E.~M. Stein and J.~C. Stein.
	\newblock Stock price distributions with stochastic volatility: an analytic
	approach.
	\newblock \emph{The review of financial studies}, 4\penalty0 (4):\penalty0
	727--752, 1991.
	
	\bibitem[Wiese et~al.(2023)Wiese, Murray, and Korn]{WMK:23}
	M.~Wiese, P.~Murray, and R.~Korn.
	\newblock {Sig-Splines: universal approximation and convex calibration of time
		series generative models}.
	\newblock \emph{Preprint arXiv:2307.09767}, 2023.
	
\end{thebibliography}
\end{document}